\documentclass[11pt,letterpaper]{article}
\usepackage[left=1in, right=1in, top=1in, bottom=1in]{geometry}
\usepackage{local}
\usepackage{algorithm}
\usepackage{algpseudocode}

\providecommand{\Description}[1]{}

\title{Facility Location Mechanism Design:\\ Breaking The Deterministic Barrier}
\author{
Zohar Barak\thanks{Tel Aviv University, \url{zoharbarak@mail.tau.ac.il}}}
\date{}

\begin{document}

\maketitle
\pagenumbering{gobble}

\begin{abstract}
We study the canonical problem of facility location mechanism design where as input we have $n$ locations in Euclidean space reported by strategic agents, and the output is a single facility location in space. The cost function of each agent is the distance from the returned facility, and the objective is to minimize the social cost function which is the sum of agent costs in a strategyproof way.

Our contributions are the following:
\begin{enumerate}
    \item \textbf{Breaking the deterministic barrier}. For $\R^2$, we give a random anonymous strategyproof mechanism (RR-CWM) that achieves an expected approximation ratio of $\frac{4}{\pi} \approx 1.27$, which strictly improves upon the best deterministic strategyproof mechanism that has an approximation ratio of $\sqrt{2} \approx 1.41$. Thus, this result closes the open problem of separating between deterministic and random mechanisms for utilitarian facility location mechanism design in $\R^2$.
    For $\R^d$, we show that the expected approximation ratio of our mechanism is in $[1.41 - O(1/\sqrt{d}), 1.547]$.

    \item \textbf{Improved learning augmented mechanisms through randomization.} We show that our ideas may be utilized to also achieve better performance in the learning augmented setting in $\R^2$, where in addition to the input the mechanism also receives predictions. For the output prediction model of Agrawal et al. 2022 we show an improved expected consistency-robustness trade-off. Our results also imply improved performance for the input MAC predictions model of Barak et al. 2024.
    \item \textbf{The limitations of Random Dictators}. We show a lower bound on the performance of the very common class of GRD (Generalized Random Dictator) mechanisms, where only locations reported by the agents may be returned. Our result states that any GRD mechanism has a larger expected approximation ratio than our RR-CWM mechanism, as our lower bound for $\R^2$ is $\frac{4}{\pi}$ (matching the upper bound of RR-CWM, which is not a GRD mechanism). For $\R^d$, we show a lower bound of $\sqrt{2} - O(1/d)$.
\end{enumerate}

\end{abstract}
\newpage
\tableofcontents

\newpage
\pagenumbering{arabic}
\setcounter{page}{1}
\newpage
\section{Intro}

The design of strategyproof mechanisms for facility location is a fundamental problem at the interface of algorithms, game theory, and social choice. A set of agents report locations in Euclidean space, a mechanism outputs a facility location, each agent’s cost is the distance to that facility, and the utilitarian objective is to minimize total (social) cost. The main tension is between \emph{efficiency} and \emph{incentives}: we want a mechanism that is strategyproof while approximating the optimal social cost well.

As mentioned by \citet{gravin2025approximation}, this problem was studied in earlier work in the field of social choice~\cite{moulin1980strategy,border1983straightforward,kim1984nonmanipulability,peters1993range,barbera1993generalized,ching1997strategy,peremans1997strategy,barbera1998strategy,schummer2002strategy}, where it has been the primary domain with structured preferences that allowed one to escape strong impossibility results such as the Gibbard-Satterthwaite Theorem~\cite{Gibbard77,Satterthwaite75}, and then later in algorithmic mechanism design literature~\cite{procaccia2013approximate, FotakisT14, FotakisT16, SerafinoV16, walsh2020strategy, agrawal2022learning, gravin2025approximation, ijcai2025p32} where the setting often served as an important domain for testing new concepts in approximate mechanism design, such as truthful mechanisms without monetary transfers~\cite{procaccia2013approximate}.

The computer science community has been concerned with establishing approximate efficiency guarantees and extending original strategic facility location setting in multiple directions (see~\cite{chan2021mechanism} for a recent survey on the topic), including randomization ~\cite{AlonFPT10, LuSWZ10,FotakisT10,FeldmanW13, balkanski2024randomized, barak2024mac, ijcai2025p32}; and 
opening of $k\ge 2$ facilities~\cite{procaccia2013approximate,lu2009,LuSWZ10,FotakisT10,EscoffierGTPS11,FotakisT16,walsh2020strategy,barak2024mac}.

However, despite active research in the last sixteen years since the seminal work of~\citep{ProccacciaTennenholtz2009,procaccia2013approximate}, we do not have yet a separation between random and deterministic mechanisms for the simple setting of single facility in multi-dimensional Euclidean space. This is the question we address in our work. Before listing our results, let us first do a quick overview of what is known about the problem.

On the real line $\R$, the problem is simple --- returning the median of the agent reported locations is both strategyproof and optimal \citep{moulin1980strategy,ProccacciaTennenholtz2009,procaccia2013approximate}. This is no longer the case in higher dimensions - returning the optimal solution is no longer strategyproof \citep{el2023strategyproofness}.
In $\R^d$, the simple deterministic $\cwmed$ mechanism of returning the median in each coordinate (this is known as the \emph{coordinate-wise median}, see \cref{def:cwmed}) is attractive because it is strategyproof and computationally efficient. Yet, deterministic mechanisms face a tight performance barrier: In $\R^2$, $\cwmed$ achieves an approximation ratio of $\sqrt{2}$, which is optimal among deterministic strategyproof mechanisms (\cite{durocher2009projection, goel2023optimality}). This naturally raises the question:
\begin{question}\label{question:random-beat-deterministic}
    Can randomized mechanisms strictly beat the deterministic barrier while preserving strategyproofness?
\end{question}

 This paper resolves the question above in the affirmative for $\mathbb{R}^2$ via a simple, black-box randomization of $\cwmed$: rotate the point set by a random orthogonal transformation, apply $\cwmed$ in the rotated axes, and rotate back. The resulting mechanism, $\rrcwmed$, preserves strategyproofness, yet beats the $\sqrt{2}$ barrier in expectation.

Another approach to beyond–worst-case analysis that has gained significant traction in recent years is the \emph{learning-augmented} setting, in which the mechanism receives predictions alongside the input (see \cref{sec:related-work} for more background on the learning-augmented setting). The objective is to leverage accurate predictions for improved performance while maintaining robust guarantees even when predictions are highly inaccurate.

We consider both the \emph{output-prediction} model \cite{agrawal2022learning} — where there is a single prediction for the optimal facility location and the \emph{MAC input-predictions} model \cite{barak2024mac} where the predictions are for the locations of the agents. We answer \cref{question:random-beat-deterministic} in the sense of the learning-augmented setting as well. That is, we show a better expected approximation ratio for $\R^2$ may be obtained in the setting with machine learned predictions via introducing randomization.

\subsection{Our contribution}

\subsubsection{Our results.}
We have three main results: (1) Breaking the $\sqrt{2}$ deterministic barrier (2) Doing the same in the learning-augmented setting (3) Showing that our mechanism has a better expected approximation guarantee than any GRD mechanism.

\begin{enumerate}
\item \textbf{Breaking the deterministic barrier via random rotations.}
We introduce the $\rrcwmed$ mechanism, which randomly rotates all points, applies $\cwmed$ on the rotated dataset and rotates the resulting point back. The mechanism is universally strategyproof\footnote{Universally strategyproof means it is impossible to gain from manipulating one's reported
location, regardless of the mechanism's coin tosses.} (see \cref{rem:rrcwmed-sp}) and runs in linear time (\cref{thm:sp}). In $\R^2$, we prove the expected approximation ratio of $\rrcwmed$ is \emph{exactly} $\tfrac{4}{\pi}\approx1.273$ (\cref{thm:alg-R2}), strictly improving on the tight $\sqrt{2}\approx1.414$ achievable by any deterministic strategyproof mechanism.

In $\R^d$, our mechanism has a polynomial runtime ($O(d^3 + nd^2)$) and we show its expected approximation ratio lies in $\brk*{\sqrt{2}-O(d^{-1/2}), 1.547}$ (\cref{thm:rrcwmed-expected-approximation-ratio-Rd}). We conjecture that our lower bound of $\sqrt{2}$ is tight (up to $O(d^{-1/2}))$.

\item \textbf{Improved learning-augmented mechanisms.} We show that our random rotation method may also improve expected approximation ratios in learning-augmented settings.

For the \emph{output-prediction} model, \citet{agrawal2022learning} proposed $\cmp$ mechanism (adding $\lfloor c n\rfloor$ copies of the prediction $\hat{o}$ for $c \in (0,1)$, and applying $\cwmed$), attaining consistency of $\frac{\sqrt{2c^2+2}}{1+c}$ with robustness of $\frac{\sqrt{2c^2+2}}{1-c}$. We show that $\rrcmp$ mechanism, obtained by swapping $\cwmed$ with $\rrcwmed$ in $\cmp$, has an improved consistency-robustness trade-off (\cref{thm:rrcmp-approximation}): it is $\min\crl*{\frac{\sqrt{2c^2 +2}}{1+c}, \frac{4}{\pi}}$-consistent, $\min\crl*{\frac{\sqrt{2c^2+2}}{1-c}, \frac{4}{\pi} \frac{1+c}{1-c}}$-robust and has a smooth degradation between consistency and robustness as a function of the prediction error.

This trade-off is better for many values of $c$: a calculation shows there is strict improvement for $c \in (0, 0.118], \eta \ge 0$ or $c \in [0.118, 0.785), \eta > \frac{4(1+c)-\pi\sqrt{2c^2+2}}{(\pi-4c)(1+c)}$. Since \citet{agrawal2022learning} show that the $\cmp$ mechanism has optimal consistency-robustness trade-off out of all deterministic anonymous strategyproof mechanisms, our result implies breaking the deterministic bound in this setting as well.

For the \emph{MAC input-predictions} model, by replacing $\cwmed$ with $\rrcwmed$ in the ``use predictions-only or use $\cwmed$'' template of \cite{barak2024mac} improves the achieved bound: Our result improves the ``worst-case arm'' from $\sqrt{2}$ to $\tfrac{4}{\pi}$, implying an approximation ratio of $\min \crl*{1 + \frac{4 \delta}{1-2\delta},\tfrac{4}{\pi}}$ rather than $\min \crl*{1 + \frac{4 \delta}{1-2\delta},\sqrt{2}}$.

\item \textbf{Limits of GRD mechanisms.}
We show the following lower bounds on GRD (Generalized Random Dictators) mechanisms --- mechanisms which randomly choose an agent-reported location (see \cref{def:GRD}), a very wide-spread class of mechanisms (see \cref{sec:related-work} for examples). For $\R^2$: we show that any GRD mechanism has expected ratio at least $\tfrac{4}{\pi}$ (\cref{thm:lb-rand-dict-R2}). For $\R^d$, we show a lower bound of $\sqrt{2}-O(1/d)$ (\cref{thm:lb-rand-dict-Rd}). This means our $\rrcwmed$ (\cref{alg:rrcwmed}) is (weakly) better than any GRD mechanism in $\R^2$.

\end{enumerate}

\subsubsection{Our techniques.}

Our key lever for improving performance in expectation is via introducing random rotation matrices into the mechanism’s pipeline (\cref{alg:rrcwmed,alg:rrcmp}). Because the rotation matrix is sampled independently of the agents’ reports, strategyproofness is preserved in the strongest sense (see \cref{rem:rrcwmed-sp}). The challenge lies in the expected approximation ratio analysis under this randomization.

\paragraph{Upper bound techniques.}
For the \(\R^2\) upper bound for \(\rrcwmed\) (\cref{lem:R2-ub}), we exploit two elementary facts: (i) Euclidean (\(\ell_2\)) pairwise distances are invariant under rotations, so we may analyze the rotated dataset instead of the original one; and (ii) the coordinatewise median minimizes the \(\ell_1\) objective, allowing us to convert a bound on the mechanism’s expected \(\ell_2\)-cost into a bound in terms of the optimal \(\ell_1\)-cost.\footnote{That is, \(\scost_1\); see \cref{def:scost}.} The bridge between \(\ell_2\) and \(\ell_1\) is provided by \cref{lem:expected-abs-value-of-each-rotated-component}, which shows that for any fixed \(v\in\R^2\),
$\E_\theta\brk*{\norm{R_\theta v}_1} \;=\; \frac{4}{\pi}\,\norm{v}_2$,
where \(R_\theta\) is a uniformly chosen random rotation matrix. Combining these parts together yields the stated upper bound.

For the learning-augmented setting of \(\rrcmp\) with an output-prediction model \cite{agrawal2022learning}, the analysis parallels the above with additional technical ingredients. An interesting ingredient is  \cref{lem:geometric-median-approx-robustness} which establishes robustness of the social cost minimizer solution ($1$-median) to an \emph{insertion corruption}—adding a \(c\in(0,1)\) fraction of arbitrary points. This yields an improved bound compared to \citet{barak2024mac}, as we analyze ``insertion-only'' corruption rather than ``edit corruption''. Interestingly, although \citet{barak2024mac} introduce this type of analysis for the sake of MAC input predictions model, we use our improved robust statistics bounds for the output-prediction model of \citet{agrawal2022learning}.

\paragraph{Lower bound techniques.}
For \(\R^2\) (\cref{lem:rand-cwmed-lb}), we design a “two clusters and an outlier” instance: two nearby clusters \(A,B\) of size \(k\gg 1\) each, and a far outlier \(C\) (see \cref{fig:rotate-2d-illustration}). When the clusters are sufficiently large, the optimal solution is close to the median of \(A\cup B\). In contrast, \(\rrcwmed\) is “pulled” by the outlier so that after a random rotation it takes one coordinate from \(R_\theta A\) and the other from \(R_\theta B\), incurring a larger cost. By placing \(A,B,C\) appropriately, this behavior persists for all \(\theta\) outside a measure-zero set, giving the \(\tfrac{4}{\pi}\) lower bound.

In $\R^d$, the construction is analogous, but the analysis calls for heavier tools.
We invoke results from random matrix theory (e.g. as $d\to\infty$,
suitably rescaled coordinates of random rotations behave approximately as i.i.d.\ Gaussians
\cite{diaconis1987dozen,meckes2019random}) together with high-dimensional probability (e.g. concentration of measure
for Lipschitz functions of random sub-Gaussian variables on $SO(d)$).
The proof therefore entails sharper quantitative estimates (e.g., bounding expectations of Gaussian functions),
but the underlying idea remains the same as in the planar case: we show that, with high probability over the random
rotation, each rotated cluster point stays at distance about $1$ from the $\rrcwmed$ output, yielding the desired
expected-cost lower bound.

Finally, for GRD mechanisms we consider \emph{rotation-invariant instances} whose points lie on the unit sphere. Any GRD rule must select one of the sphere points and thus pay a comparatively large cost, whereas the optimum can place the facility at the origin and pay \(n\). In \(\R^2\) (\cref{thm:lb-rand-dict-R2}), evenly spaced points on the unit circle yield cost approaching \(\tfrac{4}{\pi}n\). In \(\R^d\) (\cref{thm:lb-rand-dict-Rd}), a random-instance construction combined with the probabilistic method \cite{alon2016probabilistic} shows that, with high probability, any GRD mechanism incurs expected cost \(\approx \sqrt{2}\,n\), establishing the desired lower bound.

\subsection{Further Related Work}\label{sec:related-work}

Beyond the works already discussed so far, we highlight several additional lines of related work.

\paragraph{Utilitarian facility location mechanism design.}

In the plane \(\R^2\), the tight approximation ratio is known to be \(\sqrt{2}\): upper bounds are given by \citet{bespamyatnikh2000mobile,meir2019strategyproof} and a matching lower bound by \citet{goel2023optimality}.
In higher dimensions \(\R^d\), the approximation ratio was known to be upper bounded by $\sqrt{d}$ \cite{meir2019strategyproof}, and later improved to $1.547$ \cite{gravin2025approximation}.

\paragraph{Learning-augmented facility location mechanism design.}
Learning augmented algorithm design is a popular field of study in recent years within the paradigm of beyond the worst case analysis of algorithms; see \citep{DBLP:journals/cacm/MitzenmacherV22} for a survey and \citep{ALPSwebsite} for a curated, frequently updated bibliography.
Learning-augmented \emph{mechanism design} was initiated by \citet{agrawal2022learning,ijcai2022p81} and studied further for facility location
\citep{agrawal2022learning,ijcai2022p81,istrate2022mechanism,chen2024strategic,balkanski2024randomized,barak2024mac,Shi2025PredictionAugmented,walsh2025mechanism,gravin2025approximation}
and in other domains such as distortion \citep{Berger2024Distortion,FilosRatsikas2025Utilitarian},
auctions \citep{Medina2017Revenue,ijcai2022p81,Balkanski2024Online,Caragiannis2024Randomized,Lu2024Competitive,Gkatzelis2025Clock,Caragiannis2025Mechanisms},
strategic scheduling \citep{ijcai2022p81,Balkanski2023Strategyproof,ChristodoulouSV24},
and equilibrium analysis \citep{GkatzelisKST22,Istrate2024Equilibria}.

In the context of this work, the most relevant work is of \cite{agrawal2022learning} studying the output-prediction model, and \cite{barak2024mac} studying the $\mac$ input predictions model.
In the output-prediction model, \citet{agrawal2022learning} study the setting of $\R^2$, where the prediction consists of a single value representing advice for where to locate the facility.
Given a trust parameter $c \in [0,1]$, their mechanism, $\cmp$, guarantees an approximation ratio of at most $\smash{\frac{\sqrt{2c^2 + 2}}{1+c}}$ if the prediction is perfect, and $\smash{\frac{\sqrt{2c^2 + 2}}{1-c}}$ in the worst case, and a smooth degradation between consistency and robustness as a function of the prediction error $\eta$. The $\cmp$ mechanism operates by adding copies of $\hat{g}$ to the dataset and applies $\cwmed$ on the resulting dataset. They also show the optimality of $\cmp$: for any $c \in (0,1)$ and a $\frac{\sqrt{2c^2 + 2}}{1+c}$-consistent deterministic anonymous\footnote{A facility location mechanism \( f : \mathbb{R}^n \to \mathbb{R} \) is \emph{anonymous} if for every location profile
\((x_1,\dots,x_n) \in \mathbb{R}^n\) and for every permutation \( \pi \) of \( \{1,\dots,n\} \), $f(x_1,\dots,x_n) = f(x_{\pi(1)},\dots,x_{\pi(n)})$.}  strategyproof mechanism, it must have robustness no better than $\frac{\sqrt{2c^2+2}}{1-c}$. We show that our $\rrcmp$ breaks the deterministic barrier --- it has an improved consistency-robustness trade-off. \citet{gravin2025approximation} extend the analysis of $\cwmed$ and $\cmp$ mechanisms to $\R^d$. By analyzing a factor revealing optimization problem, they show an upper bound of $1.547$ on the approximation ratio of $\cwmed$. In the $\mac$ input predictions model, \citet{barak2024mac} show a $\min\crl*{1 + \frac{4\delta}{1-2\delta}, \sqrt{2}}$ approximation ratio given predictions for the $n$ agent locations where a $1-\delta$ fraction of the predictions are correct.

\paragraph{Improved approximation via rotation.}
\citet{Gershkov2019} showed that on certain distributions (e.g., identical marginals satisfying additional regularity) a fixed \(\pi/4\) rotation followed by the coordinate-wise median can outperform no rotation. However, in the worst case the \(\sqrt{2}\) bound remains tight: the lower bound of \citet{goel2023optimality} applies after undoing the fixed rotation. Our result obtains a strictly better \emph{expected} approximation (\(\approx 1.27\)) for \emph{all} instances by sampling a \emph{uniform random} rotation, rather than committing to a deterministic one.
\citet{meir2019strategyproof} and \citet{goel2023optimality} conjectured that random rotation may yield better results. \citet{meir2019strategyproof} focus on $n=3$ and write:
``Note that by using randomization, we are likely to get a certain improvement.
For example, we can select the axes according to a random rotation, and then run
the multi-median mechanism. We leave the analysis of such mechanisms to future
work.''
\citet{goel2023optimality} write: ``The question of how well a randomized mechanism might approximate
the social cost of the geometric median remains open. A potentially good candidate is the
mechanism that chooses a coordinate-wise median after a uniform rotation of the orthogonal
axes. While its analysis seems hard in general, finding its AR on the worst-case profile might give a useful lower bound.''

This work closes this open problem, by pinning down the exact expected approximation of $\rrcwmed$ for the general setting of any $n$ agents instance.

\paragraph{GRD mechanisms.}
A prominent class of randomized rules in facility location is \emph{GRD} (Generalized Random Dictators, also known as peaks-only; see \cref{def:GRD}), which choose a reported location with some probability. GRD rules are appealing for strategic reasons, but their approximation power is limited; our lower bounds quantify these limits and imply that no GRD mechanism has a better expected approximation ratio than \(\rrcwmed\) on \(\R^2\). Examples of GRD mechanisms include Uniform Random Dictator (choosing each reported location with probability \(1/n\)) \citep{AlonFPT10,alon2010walking,Feldman16}, Proportional \citep{LuSWZ10,barak2024mac}, Inverse-Proportional \citep{escoffier2011strategy}, the median mechanism \citep{black1948rationale,moulin1980strategy,procaccia2013approximate}, one-dimensional Percentile mechanisms \citep{sui2013analysis}, Phantom mechanisms \citep{moulin1980strategy,masso2011strategy}, Random Rank \citep{AzizLSW22}, “allocate to a random agent of the optimal cluster’’ \citep{fotakis2021strategyproof}, PCD \citep{meir2019strategyproof}, RD+PCD \citep{ijcai2025p449}, and \(q\)-QCD \citep{meir2019strategyproof}.

\paragraph{Other work using multi-dimensional median generalizations.}
The $\projmed$ is a generalization of the median to multiple dimensions which is also known to have an approximation ratio of $\frac{4}{\pi}$ for the social cost objective \cite{durocher2009projection}. However, as we discuss in \cref{sec:proj-median-connection}, \(\projmed\) is not strategyproof (\cref{rem:projmed-not-sp}); we also explain how its approximation guarantee relates to the expected approximation of \(\rrcwmed\).
\citet{zampetakis2023bayesian} study Bayesian mechanism design. They use the Lugosi-Mendelson median, a different multi-dimensional median extension from heavy-tailed statistics, to achieve asymptotical approximately optimal cost under distributional assumptions.

\section{Conclusion and Future directions}\label{sec:conclusions-and-future-dir}

This work demonstrates that randomized rotation is a principled way to surpass the deterministic barrier in utilitarian facility location, both in the classical and learning-augmented settings, and that GRD (peaks-only) mechanisms are not the right tool for the problem.

We list a few of the many remaining avenues for future work. One such potential avenue is to try and pin down the exact approximation ratio of random mechanisms in $\R^2$ by either coming up with a new lower bound or alternatively beat the $\frac{4}{\pi}$ result of $\rrcwmed$ by coming up with a new mechanism. For the upper bound direction, our lower bound of GRD mechanisms implies that new ideas are required.

Another direction would be to prove \cref{conj:rrcwmed-ub-Rd}, and show that indeed \cref{alg:rrcwmed} has an expected approximation ratio of at most $\sqrt{2}$ for any $d \in \N$.

Other future directions may also include studying other objectives, such as the egalitarian cost, where the objective of the mechanism designer is to minimize the maximum agent cost, rather than the sum (where the deterministic barrier of approximation ratio is known to be $2$). It is an interesting question to determine whether it is possible to break the deterministic barrier for the maximum cost objective.

We think it is likely that our random rotation method may be utilized to get improvements for other facility location problems as well.
\subsection{Roadmap}
Section~\ref{sec:preliminaries} sets up notation and preliminaries. Section~\ref{sec:rand-mech} introduces $\rrcwmed$ and proves its expected approximation ratio guarantees: exactly $\tfrac{4}{\pi}$ in $\mathbb{R}^2$, and the bounds of $\brk*{1.41,1.547}$ in $\mathbb{R}^d$. Section~\ref{sec:predictions-learning-aug} gives the results for the learning-augmented settings - improved guarantees via randomness. Section~\ref{sec:lower-bounds} contains our GRD (Generalized Random Dictator) lower bounds.

\section{Preliminaries}\label{sec:preliminaries}
We start by giving the formal problem definition. The social cost is the objective function which we want to minimize.

\begin{definition}[Social Cost]\label{def:scost}
    The social cost of a dataset $P \subset \R^d$ and a center $m \in \R^d$ is the sum of $l_q$ norm distances from $P$ to $m$. That is:
    \[
        \scost_q(P,m) = \sum_{p_i \in P} \norm{p_i - m}_q.
    \]
    In this work, our main focus is on euclidean distances ($q = 2$), and thus we use the notation: \ $\scost(P,m) := \scost_2(P,m)$.
\end{definition}

\paragraph{The Utilitarian facility location mechanism design problem.}
As input the mechanism receives $P$, $n$ locations of strategic agents in Euclidean space $\R^d$, for $n,d \in \N$. The output of the mechanism is a location in $\R^d$ for the facility. The cost of each agent is the $l_2$ distance from the agent location to the returned facility.
The \emph{expected approximation ratio} $\alpha(M)$ of a random mechanism $M$ is defined as the worst case ratio between the expected cost of the mechanism and that of the optimal solution. Formally:
\[
    \alpha(M) := \sup_{P \subset \R^d} \frac{\E\brk*{\scost(P,M(P))}}{\scost\prn*{P,\prn*{m(P)}}},
\]
where $m(P) = \argmin_{m \in \R^d} \crl*{\scost(P,m)}$ is the optimal solution.
The goal is to design a strategyproof mechanism $M$ that returns a location that minimizes the expected approximation ratio, where strategyproofness is defined as follows:

\begin{definition}[Strategyproofness]\label{def:strategyproofness}
    Let $P = (p_1, \ldots ,p_n) \subset \R^d$ be the locations of $n$ agents in space, and for any $i \in [n]$ let $(P_{-i},y)$ be $P$ where $p_i$ is replaced by $y \in \R^d$.
    A mechanism $M$ is strategyproof iff for any agent $i \in [n]$ located in location $p_i$, the cost of the agent may not decrease by reporting a location $p'_i \neq p_i$. That is: $d\prn*{p_i, M(P)} \le d\prn*{p_i, M\prn*{P_{-i},p'_i}}$. We demand the strategyproofness condition for any realization of the coin tosses of the mechanism. This is also known as universal truthfulness.
\end{definition}

We introduce the following convention which we use throughout:
For any vector $v \in \R^2$ let $v_1$, $v_2$ denote the $x$, $y$ coordinates of $v$, respectively. For a multiset of vectors $V$ in $\R^2$ and $j \in [2]$, let $V_j = \crl*{\crl*{v_j \mid v \in V}}$ be the multiset of the $j$'th coordinates of the vectors in $V$.

\paragraph{Generalizations of the median to $\R^d$.}
There are multiple different generalizations of the median to higher dimensions. We list three of them; the first is the $\geomed$:
\begin{definition}[$\geomed$]\label{def:geo-med}
    For a multi-set of $n$ points $P \subseteq \R^d$, let 
    
    $\geomed(P) := \argmin_{m \in \R^d} \crl*{\scost_q(P,m)}$ be the social cost minimizer solution for the given dataset $P$.
\end{definition}
The geometric median is also known by many other names such as the Fermat point, the Euclidean median, the spatial median, and the $1$-median (as a special case of $k$-medians for $k=1$). 

A different median generalization is the $\cwmed$, which is the median of the given dataset projected to each coordinate separately:
\begin{definition}[$\cwmed$ - Coordinate-Wise Median]\label{def:cwmed}
    For a multi-set of $n$ points $P \subseteq \R^d$:
    
    $\cwmed(P) := (l_1, \ldots, l_d)$ where $l_j$ is the median of $P_j$ for all $j \in [d]$.
\end{definition}

In the following proposition, we list known facts about $\cwmed$:
\begin{prop}\label{prop:cwmed-properties}
    $\cwmed$ has the following properties: (1) it is strategyproof \cite{procaccia2013approximate}, (2) it is the $\scost_1$ minimizer solution: $\cwmed(P) \in \argmin_{x \in \R^d} \crl*{\scost_1(P,x)}$. \cite{meir2019strategyproof} (3) its approximation ratio in $\R^2$ is exactly $\sqrt{2}$ \cite{BeregBKS06, durocher2009projection, meir2019strategyproof} and it is the optimal mechanism in the family of deterministic strategyproof mechanisms \cite{goel2023optimality} and (4) its approximation ratio in $\R^d$ is upper bounded by $\sqrt{6 \sqrt{3} - 8} \approx 1.547$, and is tight for large $d$ up to $O\prn*{\frac{1}{d}}$ \cite{gravin2025approximation}.
\end{prop}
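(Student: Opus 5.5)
The proof of this proposition is largely a compilation, so my plan is to prove the two elementary parts (1) and (2) directly, derive the upper-bound half of (3) from (2), and cite the literature for the remaining tightness and optimality claims. \textbf{Part (2)} comes first because it is the simplest and is reused. The key observation is that $\scost_1$ separates over coordinates:
\[
\scost_1(P,x)=\sum_{p\in P}\sum_{j=1}^d |p_j-x_j| = \sum_{j=1}^d \Bigl(\sum_{p\in P}|p_j-x_j|\Bigr).
\]
Each inner sum depends only on $x_j$, so minimizing over $x\in\R^d$ means minimizing each inner sum separately. By the classical one-dimensional fact, $\sum_{p}|p_j-t|$ is minimized at any median of $P_j$. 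Since the median can be checked via the subgradient (the counts of points on each side), taking $l_j$ to be the median of $P_j$ for every $j$ gives a minimizer, i.e.\ $\cwmed(P)\in\argmin_x \scost_1(P,x)$.

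\textbf{Part (1)} also goes coordinate by coordinate. Fix a tie-breaking convention for even $n$ (for instance the left median), so that each coordinate rule is a fixed order statistic. Take agent $i$, coordinate $j$, and let $l_j$ be the median of $P_j$.
\begin{itemize}
\item If $p_{ij}\le l_j$, then moving the report left does not change the order statistic, and moving it right can only move the order statistic weakly right.
\item If $p_{ij}> l_j$, the situation is symmetric.
\end{itemize}
Hence for any misreport $p_i'$ the new median $l'_j$ satisfies $|p_{ij}-l'_j|\ge |p_{ij}-l_j|$ for every $j$ simultaneously. The Euclidean norm is monotone in the vector of absolute coordinate differences, so $\norm{p_i-\cwmed(P_{-i},p_i')}_2\ge\norm{p_i-\cwmed(P)}_2$, which is strategyproofness. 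This is the step requiring the most care, because the coordinate-wise argument must not depend on how ties are broken; fixing the order-statistic convention up front handles that.

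\textbf{Part (3).} For the upper bound I would chain the inequalities $\norm{v}_2\le\norm{v}_1\le\sqrt2\norm{v}_2$ in $\R^2$ with part (2):
\[
\scost(P,\cwmed(P))\le \scost_1(P,\cwmed(P))\le \scost_1(P,m(P))\le \sqrt2\,\scost(P,m(P)).
\]
The matching lower-bound instance (an instance showing the ratio $\sqrt2$ is attained) and the optimality among deterministic strategyproof mechanisms I would cite from \cite{BeregBKS06, durocher2009projection, meir2019strategyproof} and \cite{goel2023optimality}, respectively.

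\textbf{Part (4)} is the factor-revealing program of \cite{gravin2025approximation}, and I would invoke it as a black box rather than reprove it. The naive analogue of part (3) gives only $\sqrt d$, so there is no short self-contained route.
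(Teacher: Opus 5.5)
Your proposal is correct. It does more than the paper does: the paper gives no argument for this proposition and simply attributes each item to the literature. You instead prove (2) by separability of $\scost_1$. You prove (1) by a per-coordinate order-statistic argument together with monotonicity of the Euclidean norm in the absolute coordinate gaps. You prove the upper half of (3) by the chain $\scost(P,\cwmed(P))\le\scost_1(P,\cwmed(P))\le\scost_1(P,m(P))\le\sqrt{2}\,\scost(P,m(P))$, and you cite only the remaining claims.

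Your route makes two things explicit. First, it exposes the tie-breaking convention. For even $n$ the coordinate median must be a fixed order statistic, because averaging the two middle values is already manipulable for $n=2$ on the line; \cref{def:cwmed} leaves this implicit. Second, your chain is exactly the deterministic skeleton of the paper's proof of \cref{lem:R2-ub}. The only difference is that the worst-case bound $\norm{v}_1\le\sqrt{2}\norm{v}_2$ stands where the paper uses the averaged identity $\E_\theta\brk*{\norm{R_\theta v}_1}=\frac{4}{\pi}\norm{v}_2$ of \cref{lem:expected-abs-value-of-each-rotated-component}. This isolates precisely where randomization gains.

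You could also make the tightness half of (3) self-contained with the instance of \cref{fig:rr-cwm-rotation-intuition}. There $\cwmed$ returns the origin at cost $2k$, while $(\frac12,\frac12)$ costs $\sqrt{2}\,k+O(1)$.

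One small slip in (1): if $p_{ij}=l_j$, moving the report left can change the order statistic (others at $0$ and $2$, agent at $1$). This is harmless, since the truthful coordinate-$j$ distance is then $0$. A cleaner formulation goes as follows. Let $q_{(1)}\le\dots\le q_{(n-1)}$ be the other agents' $j$-th coordinates, with $q_{(0)}=-\infty$ and $q_{(n)}=+\infty$. Then the $k$-th order statistic is the projection of agent $i$'s report onto $[q_{(k-1)},q_{(k)}]$. Hence truthful reporting attains the nearest achievable value in every coordinate simultaneously.
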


\emph{Rotations in Euclidean space.}
Next, we give the preliminaries on random rotation matrices in Euclidean space.

\begin{definition}[Rotation Matrix]\label{def:rotation-matrix}
    A \emph{rotation matrix} in $\mathbb{R}^d$ is an orthogonal matrix $R \in \mathbb{R}^{d \times d}$ with determinant $\det(R)=1$. Equivalently,
    \[
    R^\top R = I_d \quad \text{and} \quad \det(R)=1,
    \]
    where $I_d$ is the $d \times d$ identity matrix. The set of all such matrices forms the \emph{special orthogonal group}
    \[
    SO(d) := \{ R \in \mathbb{R}^{d \times d} \;:\; R^\top R = I_d,\ \det(R)=1 \}.
    \]

    In $\R^2$, the two dimensional \emph{rotation matrix} by angle $\theta \in [0,2\pi)$ is defined as
    \[
        R_\theta \;=\;
        \begin{pmatrix}
            \cos\theta & -\sin\theta \\
            \sin\theta & \phantom{-}\cos\theta
        \end{pmatrix}.
    \]
\end{definition}

The $\rcwmed$ is obtained by rotating the dataset, computing the $\cwmed$ of the rotated dataset, and rotating it back. Formally:
\begin{definition}[$\rcwmed$]\label{def:rcwmed}
    Let $P$ be a multi-set of $n$ points $P \subseteq \R^d$, and let $R \in SO(d)$ be a rotation matrix in $\R^d$. Then $\rcwmed$ is defined to be the $\cwmed$ of the rotated multi-set $RP := \crl*{Rp \mid p \in P}$. \[\rcwmed(P,R) := R^{-1} (\cwmed(RP)).\]
    
    That is, $\rcwmed$ is obtained by rotating the dataset, computing its $\cwmed$ and rotating back.
\end{definition}

\paragraph{Learning augmented settings}

In this work we study two learning augmented prediction settings:
\begin{itemize}
    \item \emph{Output prediction -} In the setting of \citet{agrawal2022learning}, there is a single prediction $\hat{g} \in \R^d$ for the optimal facility location $g$ for agent locations $P$. The error parameter is $\eta = \frac{\norm{\hat{g} - g}}{\nf{\scost\prn*{P,g}}{n}}$; the distance between the predicted optimal location $\hat{g}$ and the true optimal location $g$, normalized by the average optimal social cost\footnote{\citet{agrawal2022learning} define the social cost to be the average cost: $\scost'(P,t) = \nf{\sum_{p \in P} \norm{p-t}_2}{n}$. Since there's only a multiplicative factor of $n$, the problems are equivalent.}: $\nf{\scost\prn*{P,g}}{n}$. \emph{Consistency and robustness}: In this setting, a mechanism is said to be $\gamma$-consistent (or to exhibit consistency level $\gamma$) if its expected approximation ratio does not exceed $\gamma$ when $\eta = 0$. Likewise, a mechanism is $\beta$-robust (or exhibits robustness level $\beta$) if its expected approximation ratio does not exceed $\beta$ for any value of $\eta$. In general, a consistency–robustness trade-off is observed, wherein achieving higher consistency (smaller $\gamma$) typically entails diminished robustness (larger $\beta$).

    \item \emph{$\mac$ input predictions -} In the setting of \citet{barak2024mac}, the predictions are for the agent locations, and are assumed to be $\mac$ predictions --- most predictions (at least a $1-\delta$ fraction for $\delta \in (0,0.5)$) are assumed to be correct up to a small negligible $\eps$ additive error.
\end{itemize}

\section{Breaking The Deterministic Barrier Via Random Rotations}\label{sec:rand-mech}

In this section we give a random strategyproof mechanism for the facility location mechanism design problem. For $\R^2$ we pin down the exact approximation ratio: it is exactly $\frac{4}{\pi}$. In $\R^d$ we show the approximation ratio lies in $[1.41,1.55]$. Our $\frac{4}{\pi} \approx 1.27$ approximation result we get for $\R^2$ strictly beats the lower bound on the best deterministic mechanism of $\sqrt{2} \approx 1.41$, hence it demonstrates the separation between random and deterministic strategyproof mechanisms for this problem.
We now give our mechanism, $\rrcwmed$.

\begin{algorithm}[H]
\caption{$\rrcwmed(P)$: Randomly Rotated Coordinate-wise Median}
\label{alg:rrcwmed}
\begin{algorithmic}[1]
\Require $P = \{p_1,\dots,p_n\} \subseteq \mathbb{R}^d$
\State Sample a uniformly random rotation matrix $R \in SO(d)$. \Comment{Haar uniform\footnotemark}
\State Compute the rotated dataset $P' \gets \{p'_i = R p_i : p_i \in P\}$.
\State Compute the coordinate-wise median $h' \gets \cwmed(P')$.
\State \Return $h \gets R^{-1} h'$. \Comment{$R^{-1}=R^\top$ for $R\in SO(d)$}
\end{algorithmic}
\end{algorithm}
\footnotetext{A random matrix $R \in SO(d)$ is said to be Haar-uniform if its distribution is the unique probability measure on $SO(d)$ that is invariant under left and right multiplication by any fixed rotation \cite{Vershynin2025HDP,mezzadri2006generate}.}

\paragraph{Intuition.} Before showing the guarantees of the mechanism, we give some intuition.

\begin{figure}[t]
    \centering
    \includegraphics[width=0.75\linewidth]{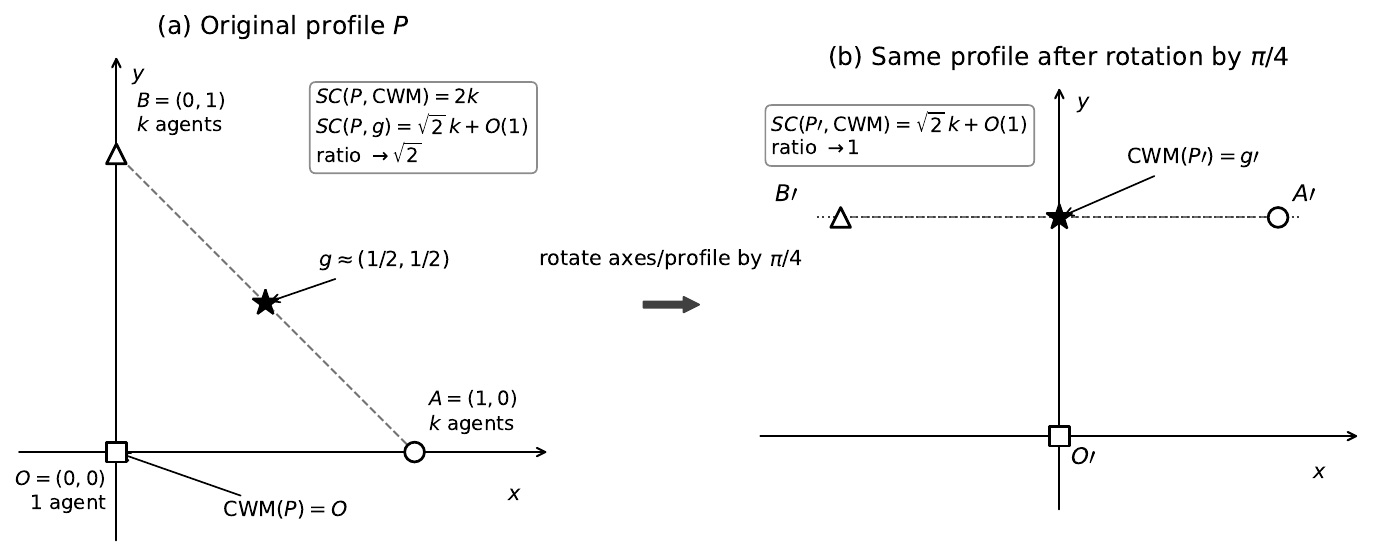}
    \caption{
    Illustration of the axis-dependence of $\cwmed$ under the Euclidean objective. 
    On the left, there's an instance with $k \gg 1$ points at $(1,0)$, $k$ points at $(0,1)$, 
    and one point at the origin $(0,0)$. On the right, we get the instance obtained by rotating the left instance counterclockwise by $\pi/4$.
    }
    \label{fig:rr-cwm-rotation-intuition}
\end{figure}

$\cwmed$ is a natural deterministic primitive, but its weakness is that it is tied to a fixed (arbitrary) coordinate system. To illustrate this, consider the instance shown in Figure~\ref{fig:rr-cwm-rotation-intuition}: there are $k \gg 1$ points at $(1,0)$, $k$ points at $(0,1)$, and one point at the origin $(0,0)$. One can place the facility at $(1/2,1/2)$ and incur cost $\sqrt{2}k + O(1)$, but CWM returns the origin and incurs cost $2k$, which implies an approximation ratio of $\sqrt{2} - O(1/k)$. On the other hand, if the points are rotated counterclockwise by $\pi/4$, the Euclidean instance is unchanged up to isometry: all pairwise Euclidean distances remain the same. However, $\cwmed$’s cost drastically improves, becoming $\sqrt{2}k + O(1)$ as well.

CWM is strategyproof and minimizes the $\ell_1$ social cost in the chosen axes. The Euclidean objective, however, is rotationally invariant. Thus, the deterministic barrier comes from an axis-alignment mismatch: for a fixed coordinate system, an adversary can orient the instance so that the $\ell_1$-based $\cwmed$ solution is poorly aligned with the Euclidean optimum.

$\rrcwmed$ removes this axis dependence by choosing the coordinate system uniformly at random. For every realized rotation, we still run an ordinary $\cwmed$ in the rotated coordinates, so strategyproofness is preserved in the strong universal sense. The inverse rotation is only a change of coordinates back to the original Euclidean space, and Euclidean distances are unchanged by this operation.

So the transformation is natural for three reasons: it preserves the truthfulness of $\cwmed$, it respects the rotational symmetry of the Euclidean objective, and it replaces the worst-case $\ell_1$ versus Euclidean norm distortion of a fixed coordinate system by its average distortion. A fixed rotation would not achieve this, because the adversarial instance could simply be rotated accordingly; the uniform random rotation eliminates any preferred direction.

\paragraph{The guarantees of $\rrcwmed$.}
We begin by noting the strategyproofness of the mechanism.
\begin{thm}\label{thm:sp}
     \cref{alg:rrcwmed} is strategyproof and runs in time $O(d^3 + n d^2)$.\footnote{Consequently, for any fixed dimension (e.g., $d=2$), \cref{alg:rrcwmed} runs in linear time $O(n)$.}
\end{thm}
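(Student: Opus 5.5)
The proof has two parts: strategyproofness, which reduces to the strategyproofness of $\cwmed$ from \cref{prop:cwmed-properties}(1), and the running time, which comes from counting the cost of each step of \cref{alg:rrcwmed}.

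\textbf{Strategyproofness.} Fix any realization $R \in SO(d)$ of the mechanism's randomness. Since $R$ is sampled independently of the reports, it suffices to show that the deterministic map $P \mapsto \rcwmed(P,R) = R^{\top}\cwmed(RP)$ is strategyproof in the sense of \cref{def:strategyproofness}.

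Fix an agent $i$ with true location $p_i$, and let $p'_i$ be any misreport. Because $R$ is orthogonal, $\norm{Rx}_2 = \norm{x}_2$ for every $x \in \R^d$. Hence
\[
\norm{p_i - R^{\top}\cwmed(RP)}_2 = \norm{Rp_i - \cwmed(RP)}_2 ,
\]
and the same identity holds for the manipulated profile $(P_{-i},p'_i)$, whose rotation is $(RP_{-i}, Rp'_i)$. So agent $i$'s cost under $\rcwmed$ equals the cost of an agent truly located at $Rp_i$ under $\cwmed$, facing the rotated profile $RP$ and misreporting $Rp'_i$.

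By \cref{prop:cwmed-properties}(1), $\cwmed$ is strategyproof, so
\[
\norm{Rp_i - \cwmed(RP)}_2 \le \norm{Rp_i - \cwmed(RP_{-i},Rp'_i)}_2 .
\]
This inequality is exactly the required strategyproofness condition for $\rcwmed(\cdot,R)$. Since it holds for every $R$, the mechanism is universally strategyproof.

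One point needs care: the cited strategyproofness of $\cwmed$ is with respect to Euclidean cost. If the reference only proves it for $\ell_1$ or coordinatewise costs, I would reprove it directly. The argument is that agent $i$ can only move the median of coordinate $j$ away from $(Rp_i)_j$, never toward it. This holds separately in each coordinate, so each $\abs{(Rp_i)_j - h'_j}$ weakly increases under a misreport. Because the Euclidean norm is monotone in the absolute values of the coordinates, the Euclidean distance also weakly increases.

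\textbf{Running time.} I count the cost of each step of \cref{alg:rrcwmed}:
\begin{itemize}
\item Sampling a Haar-uniform rotation takes $O(d^3)$. The standard method of \cite{mezzadri2006generate} draws a $d\times d$ Gaussian matrix, computes its QR decomposition in $O(d^3)$, fixes the signs of the diagonal of $R$, and if the determinant is $-1$ flips one column to land in $SO(d)$.
\item Rotating the $n$ points costs $n$ matrix-vector products, i.e.\ $O(nd^2)$.
\item Computing the median of each of the $d$ coordinates with linear-time selection costs $O(n)$ per coordinate, so $O(nd)$ in total.
\item The final multiplication by $R^{\top}$ costs $O(d^2)$.
\end{itemize}
Summing gives $O(d^3 + nd^2)$.

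The only step needing any care is the correctness of the $SO(d)$ Haar-sampling recipe, in particular that the determinant fix preserves Haar measure on $SO(d)$. I would simply cite it rather than reprove it.
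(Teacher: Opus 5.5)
Your proposal is correct and follows the same route as the paper. Strategyproofness is inherited from the coordinate-wise median for every fixed realization of the rotation; you spell out the isometry step $\|p_i - R^{\top}h'\|_2 = \|Rp_i - h'\|_2$, which the paper leaves implicit. The running time is the same count of $O(d^3)$ for sampling, $O(nd^2)$ for rotating and $O(nd)$ for median selection. Your fallback coordinatewise argument is also sound, provided that for even $n$ the median is a fixed order statistic rather than the average of the two middle values, since the averaged version is manipulable.
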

\begin{proof}
    The strategyproofness of \cref{alg:rrcwmed} follows from the strategyproofness of $\cwmed$ (\cref{prop:cwmed-properties}), since the agents have no control over the random rotation matrix the algorithm chooses.

    Computing \cref{alg:rrcwmed} can be done in $O(d^3 + nd^2)$ as the rotation matrix may be sampled in $O(d^3)$ (see \cite{genz2000methods, mezzadri2006generate}), the matrix-vector multiplication takes $O(n d^2)$ and median selection takes $O(nd)$ time ($O(n)$ time for median computation in each of the $d$ coordinates).
\end{proof}

\begin{remark}\label{rem:rrcwmed-sp}
    Usually in random mechanism design, we only require the mechanism to be strategyproof in expectation (telling the truth yields the best expected utility for the agents). In our case the strategyproofness holds in the stronger universal deterministic sense: telling the truth is a deterministic dominant strategy for any realization of the mechanism random coin tosses. Note that this does not hold for the maximum cost objective facility location mechanism design problem even on the line ($\R$) \cite{branzei2015verifiably}.
\end{remark}

We now move on to analyzing the approximation ratio of our mechanisms.
We start with $\R^2$, and then move to $\R^d$.
\subsection{$\R^2$ approximation analysis}

In $\R^2$, the rotation matrix has a simple form, as explained by the following remark:
\begin{remark}\label{rem:R2-mechanism-form}
    For $\R^2$, and $k \in \N$, \cref{alg:rrcwmed} is equivalent to sampling $\theta \sim \mathrm{Unif}\crl*{[0, k\frac{\pi}{2})}$ and returning $\rcwmed(P,R_\theta)$. In our analysis we use $k=1$: $\theta \sim \mathrm{Unif}\crl*{[0, \frac{\pi}{2})}$.
\end{remark}
\begin{proof}
    Sampling a random matrix $R \in SO(2)$ is equivalent to sampling a random $2 \times 2$ matrix $R_\theta$ where $\theta \sim U\crl*{[0,2\pi)}$ \cite{meckes2019random}. Also, $\rcwmed$ is $\frac{\pi}{2}$-periodic, as any $\frac{\pi}{2}$ rotation does not change the location of the $\cwmed$: while the names of the axes may change, the axes lines remain the same. Thus, the median in each axis remains the same as well.
\end{proof}

We now pin down the exact expected approximation ratio of our mechanism:

\begin{thm}\label{thm:alg-R2}
    The approximation ratio of \cref{alg:rrcwmed} for $\R^2$ is exactly $\alpha(\rrcwmed) = \frac{4}{\pi} \approx 1.273$.
\end{thm}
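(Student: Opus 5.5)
The proof has two halves: an upper bound $\alpha(\rrcwmed)\le \frac{4}{\pi}$ valid for every instance, and a matching lower bound given by a family of instances whose ratio tends to $\frac{4}{\pi}$.

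\textbf{Upper bound.} Fix $P$, and let $g=\geomed(P)$ be the Euclidean optimum. For a fixed $\theta$, the output is $h_\theta = R_\theta^{-1}\cwmed(R_\theta P)$. Rotations preserve $\ell_2$ distances, and $\norm{u}_2\le\norm{u}_1$. Combining these,
\[
\scost(P,h_\theta)=\sum_{p}\norm{R_\theta p-\cwmed(R_\theta P)}_2\le \scost_1\bigl(R_\theta P,\cwmed(R_\theta P)\bigr).
\]
By \cref{prop:cwmed-properties}(2), $\cwmed$ minimizes $\scost_1$. So the right-hand side is at most $\scost_1(R_\theta P, R_\theta g)=\sum_p \norm{R_\theta(p-g)}_1$. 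Taking expectation over $\theta$ and using linearity, it remains to show that every fixed $v$ satisfies $\E_\theta\norm{R_\theta v}_1=\frac{4}{\pi}\norm{v}_2$; this is the identity the paper calls \cref{lem:expected-abs-value-of-each-rotated-component}. To prove it, write $v=r(\cos\phi,\sin\phi)$, so that $\norm{R_\theta v}_1=r(|\cos(\theta+\phi)|+|\sin(\theta+\phi)|)$. Averaging over a full period gives $r\cdot 2\cdot\frac{2}{\pi}=\frac{4}{\pi}r$. (Using $\theta$ uniform on $[0,2\pi)$ is legitimate by \cref{rem:R2-mechanism-form}.) Summing over $p$ yields $\E\,\scost(P,h_\theta)\le\frac4\pi\scost(P,g)$.

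\textbf{Lower bound.} I need instances on which both inequalities above are nearly tight for almost every $\theta$. First, the optimum $g$ must be nearly an $\ell_1$ minimizer in every rotated frame. Second, each $R_\theta(p-h_\theta)$ must have both coordinates of comparable size, so that the $\ell_1$ norm is not lost. I would use the ``two clusters and an outlier'' idea. Place $k$ points at $A$ and $k$ points at $B$ with $\norm{A-B}=2$, and one point $C$ far away on the perpendicular bisector, so the total is $n=2k+1$. The Euclidean optimum lies on segment $AB$ at cost about $2k$. In rotated coordinates, the median in each axis is the $(k+1)$-th order statistic. The single point $C$ breaks the tie between the $A$-cluster and the $B$-cluster in each coordinate. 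Because $C$ sits on the bisector side, for generic $\theta$ the first coordinate median comes from $R_\theta A$ and the second from $R_\theta B$, or vice versa. The output is then the ``corner'' point $h'$, which satisfies $\norm{R_\theta A-h'}_2=|(R_\theta(A-B))_2|$ and similarly for $B$. Summing, the cost is about $k\,\norm{R_\theta(A-B)}_1$, whose expectation is $\frac{4}{\pi}\cdot 2k$. The ratio therefore tends to $\frac4\pi$ as $k\to\infty$.

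\textbf{Main obstacle.} The hardest step is placing $C$ so that this ``mixed corner'' behavior holds for all $\theta$ outside a measure-zero set, not just for some angles. I would first try $C$ far along the bisector direction and check case by case how the tie-breaking order of the three coordinate values behaves in each quadrant of $\theta$. If a single outlier cannot force the mixed corner for all $\theta$, I would perturb the construction: let the clusters be small spread-out clouds and the outlier adapt accordingly. This costs only $o(k)$ in each term while keeping the error in the ratio $O(1/k)$. Combining the two halves gives $\alpha(\rrcwmed)=\frac4\pi$ exactly.
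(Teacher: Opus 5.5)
Your upper bound is the paper's argument (\cref{lem:R2-ub} with $q=2$ plus \cref{lem:expected-abs-value-of-each-rotated-component}). Your lower-bound instance is also the paper's: $k$ copies of $(1,0)$, $k$ copies of $(0,1)$, and one point $(-M,-M)$ on their perpendicular bisector. Writing the mixed-corner cost as $k\norm{R_\theta(A-B)}_1$ and reusing the same lemma is a tidy substitute for the paper's explicit $2k\cos\theta$, $2k\sin\theta$ integrals.

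The gap is the step you defer as the ``main obstacle'', and the target you set there cannot be met. Put $D'=R_\theta(A-B)$ and let $C$ sit at distance $L$ from the midpoint of $AB$ on the bisector. Then $(R_\theta C)_1$ differs from the midpoint of $(R_\theta A)_1,(R_\theta B)_1$ by exactly $\frac{L}{\norm{A-B}_2}\abs{D'_2}$. So whenever $D'$ is within angle roughly $\norm{A-B}_2/(2L)$ of the first axis, the outlier's first coordinate lies strictly between the two cluster values and is itself the $(k+1)$-st order statistic. The output is then not a corner at all; at exact alignment it is the midpoint of $AB$, which is essentially optimal. So for every fixed instance the mixed corner fails on arcs of angles of width $\Theta(1/L)$. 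In the paper's instance, where $L=\Theta(M)$, this is $W_M=(\arctan\frac{M}{M+1},\arctan\frac{M+1}{M})$, a set of positive measure. Your claim that the mixed corner holds ``for generic $\theta$'' is therefore false for any fixed placement, and the same obstruction arises for any single outlier at finite distance. Spreading the clusters into clouds is unnecessary.

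What is actually needed is a quantitative limiting argument, which the paper supplies:
\begin{itemize}
\item Show $\abs{W_M}\le 2/M$, by comparing $\tan\theta$ with $\frac{M}{M+1}$ and $\frac{M+1}{M}$ and using that $\arctan$ is $1$-Lipschitz.
\item Lower-bound the cost on $W_M$ by $0$, so the expected loss is only $O(k/M)$.
\item Account for the outlier's $O(M)$ contribution to $OPT\le \norm{A-B}_2\,k+O(M)$.
\end{itemize}
Both error terms are $o(k)$ only when $M\to\infty$ and $M=o(k)$; the paper takes $k=M^2$. ``Far away'' by itself is not enough. With $M$ bounded, the exceptional set does not shrink. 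With $M=\Theta(k)$, the outlier adds $\Theta(k)$ to both $ALG$ and $OPT$, keeping the ratio bounded away from $\frac{4}{\pi}$. Once you add this case analysis and the choice $1\ll M\ll k$, your argument is complete and coincides with the paper's.
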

To show \cref{thm:alg-R2}, we give the tight upper bound (\cref{lem:R2-ub}) and the matching lower bound (\cref{lem:rand-cwmed-lb}).

We begin with \cref{lem:R2-ub}, where we show a more general upper bound for any $l_q$ norm. Plugging in $q = 2$ gives us the upper bound for \cref{thm:alg-R2}.

\begin{lem}\label{lem:R2-ub}
    For any $q \ge 2$, \cref{alg:rrcwmed} has an approximation ratio of at most $\prn*{2^{\nf{1}{2} - \nf{1}{q}}} \frac{4}{\pi}$ for the $\scost_q$ cost function\footnote{For $q = 1$, $\cwmed$ mechanism is already optimal \cite{meir2019strategyproof} and achieves an approximation ratio of $1$.}. Equivalently:
    \[
        \E\brk*{\scost_q(P,\rrcwmed(P))} \le \frac{4}{\pi} \cdot \prn*{2^{\nf{1}{2} - \nf{1}{q}}} OPT,
    \]
    where $OPT = \scost_q(P,\geomed(P))$.
\end{lem}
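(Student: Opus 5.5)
The plan is to fix the rotation, bound the mechanism's $\ell_q$ cost pointwise by an $\ell_1$ cost in the rotated frame, use the $\ell_1$-optimality of $\cwmed$ to compare against the optimum, and only then average over the rotation. By \cref{rem:R2-mechanism-form} we may take $\theta\sim\mathrm{Unif}[0,\pi/2)$. Write $h'=\cwmed(R_\theta P)$ for the rotated output and $h=R_\theta^{-1}h'$ for the returned facility. Let $g=\geomed(P)$ denote an $\scost_q$ minimizer, so $OPT=\sum_{p\in P}\norm{p-g}_q$.

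\textbf{Step 1 (pointwise comparison for fixed $\theta$).} For each $p\in P$, use $\norm{\cdot}_q\le\norm{\cdot}_2$ (valid since $q\ge 2$), then rotation invariance of $\ell_2$, then $\norm{\cdot}_2\le\norm{\cdot}_1$:
\[
\norm{p-h}_q\le \norm{p-h}_2=\norm{R_\theta p-h'}_2\le \norm{R_\theta p-h'}_1 .
\]
Summing over $p$ and applying \cref{prop:cwmed-properties}(2), $h'$ minimizes $\scost_1(R_\theta P,\cdot)$. Comparing with the feasible point $R_\theta g$ gives
\[
\scost_q(P,h)\le \sum_{p}\norm{R_\theta p-h'}_1\le \sum_{p}\norm{R_\theta(p-g)}_1 .
\]
This holds for every $\theta$, and the comparison point $g$ does not depend on $\theta$.

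\textbf{Step 2 (averaging over $\theta$).} Take expectation over $\theta$ and use linearity. What remains is the identity $\E_\theta\norm{R_\theta v}_1=\frac{4}{\pi}\norm{v}_2$ for every fixed $v\in\R^2$. To prove it, write $v=r(\cos\varphi,\sin\varphi)$, so that $\norm{R_\theta v}_1=r\,(|\cos(\theta+\varphi)|+|\sin(\theta+\varphi)|)$. The function $|\cos|+|\sin|$ is $\pi/2$-periodic, so averaging over $\theta\in[0,\pi/2)$ removes the dependence on $\varphi$. The average is then $\frac{2}{\pi}\int_0^{\pi/2}(\cos t+\sin t)\,dt=\frac{4}{\pi}$. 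Hence
\[
\E\brk*{\scost_q(P,h)}\le \frac{4}{\pi}\sum_{p}\norm{p-g}_2 .
\]

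\textbf{Step 3 (back to $\ell_q$).} In $\R^2$, H\"older's inequality gives $\norm{u}_2\le 2^{1/2-1/q}\norm{u}_q$. Applying this term by term yields $\E\brk*{\scost_q(P,\rrcwmed(P))}\le \frac{4}{\pi}2^{1/2-1/q}\,OPT$, which is the claim. For $q=2$ the factor $2^{0}=1$ disappears and the first inequality in Step 1 is an equality.

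The main conceptual point, and the step where care is needed, is Step 1. One must compare against the rotated optimum $R_\theta g$ rather than anything depending on the mechanism's output, so that the $\ell_1$-optimality of $\cwmed$ applies and the expectation in Step 2 acts on a fixed vector. Everything else is the elementary trigonometric average together with the norm comparisons.
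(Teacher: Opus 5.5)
Your proposal is correct and follows essentially the same route as the paper: the pointwise chain $\norm{p-h}_q \le \norm{p-h}_2 = \norm{R_\theta p - h'}_2 \le \norm{R_\theta p - h'}_1$, then the $\ell_1$-optimality of $\cwmed$ against the fixed comparison point $R_\theta g$, then averaging via the identity $\E_\theta\norm{R_\theta v}_1 = \frac{4}{\pi}\norm{v}_2$, which you prove inline exactly as in \cref{lem:expected-abs-value-of-each-rotated-component}. Your final H\"older step $\norm{u}_2 \le 2^{1/2-1/q}\norm{u}_q$ in $\R^2$ is the same conversion back to $\ell_q$ that the paper uses.
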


Before showing the upper bound, we first show a useful lemma, which we also use in the learning augmented setting (\cref{sec:predictions-learning-aug}):
\begin{lem}\label{lem:expected-abs-value-of-each-rotated-component}
    Let $v \in \R^2$ be a fixed vector and let $\theta \sim \mathrm{Unif}\crl*{[0,\frac{\pi}{2})}$. Then 
    \[
        \E\brk*{\norm{R_\theta v}_1} = \frac{4}{\pi} \norm{v}_2.
    \]
    
\end{lem}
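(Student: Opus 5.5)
Write $v$ in polar form, $v = r(\cos\varphi, \sin\varphi)$ with $r = \norm{v}_2$. By the definition of $R_\theta$ in \cref{def:rotation-matrix}, $R_\theta v = r(\cos(\varphi+\theta), \sin(\varphi+\theta))$. Hence
\[
\norm{R_\theta v}_1 = r\prn*{\abs{\cos(\varphi+\theta)} + \abs{\sin(\varphi+\theta)}}.
\]
If $v = 0$, both sides of the claimed identity are $0$, so we may assume $r > 0$. The task then reduces to computing
\[
\frac{2}{\pi}\int_0^{\pi/2} \prn*{\abs{\cos(\varphi+\theta)} + \abs{\sin(\varphi+\theta)}}\, d\theta
\]
and showing that it equals $\frac{4}{\pi}$ for every $\varphi$.

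The key observation is that $g(t) := \abs{\cos t} + \abs{\sin t}$ is $\frac{\pi}{2}$-periodic. Indeed, shifting $t$ by $\frac{\pi}{2}$ sends $(\cos t, \sin t)$ to $(-\sin t, \cos t)$, and this only swaps the two absolute values. Since $\theta$ ranges over a full period $[0, \frac{\pi}{2})$, the integral of $g(\varphi + \theta)$ over this range does not depend on $\varphi$. We may therefore set $\varphi = 0$. On $[0, \frac{\pi}{2}]$ both $\cos$ and $\sin$ are nonnegative, so
\[
\int_0^{\pi/2} (\cos\theta + \sin\theta)\, d\theta = 1 + 1 = 2.
\]
Multiplying by the density $\frac{2}{\pi}$ gives $\E\brk*{\norm{R_\theta v}_1} = \frac{4}{\pi} r = \frac{4}{\pi}\norm{v}_2$.

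No step here is a real obstacle; the only point needing care is the periodicity argument. That argument is what lets us replace the shifted interval $[\varphi, \varphi + \frac{\pi}{2})$ by $[0, \frac{\pi}{2})$ without splitting the integral according to the signs of $\cos$ and $\sin$. The same argument, combined with \cref{rem:R2-mechanism-form}, shows the identity also holds for $\theta \sim \mathrm{Unif}[0, 2\pi)$.
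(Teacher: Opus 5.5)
Your proof is correct and follows essentially the same route as the paper: polar form, $R_\theta$ adds $\theta$ to the angle, the $\frac{\pi}{2}$-periodicity of $\abs{\cos t}+\abs{\sin t}$ reduces the shifted integral to $\int_0^{\pi/2}(\cos\theta+\sin\theta)\,d\theta = 2$, and multiplying by the density $\frac{2}{\pi}$ gives $\frac{4}{\pi}\norm{v}_2$. Your separate handling of $v=0$ and the remark about $\theta\sim\mathrm{Unif}[0,2\pi)$ are harmless additions.
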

\begin{proof}

It follows from the $\ell_1$ norm definition that $\norm{R_\theta v}_1 = \abs{(R_\theta v)_1} + \abs{(R_\theta v)_2}$, and thus we get the desired by showing:
\[
    \E\brk*{\abs{(R_\theta v)_1} + \abs{(R_\theta v)_2}} = \frac{4}{\pi} \norm{v}_2.
\]

Let $(y,\alpha)$ be the polar representation of $v$.
Then: \[
(R_\theta v)_1 = y (cos(\theta) cos(\alpha) - sin(\theta) sin(\alpha)) = y \ cos(\theta+\alpha),
\]
\[
(R_\theta v)_2 = y (sin(\theta) cos(\alpha) + cos(\theta) sin(\alpha)) = y \ sin(\theta+\alpha).
\]
Thus:
\begin{align*}
    \E\brk*{\abs{(R_\theta v)_1} + \abs{(R_\theta v)_2}} &= \int_{0}^{\frac{\pi}{2}} \frac{2}{\pi} y \prn*{ \abs{cos(\theta+\alpha)} + \abs{sin(\theta+\alpha)}} d\theta \\
    & = \frac{2 y}{\pi} \int_{\alpha}^{\alpha + \frac{\pi}{2}} \prn*{\abs{cos(u)} + \abs{sin(u)}} du \\
    & \stackrel{(\star)}{=} \frac{2 y}{\pi} \int_{0}^{\frac{\pi}{2}} \prn*{\abs{cos(u)} + \abs{sin(u)}} du = \frac{2 y}{\pi} \prn*{1 + 1} = \frac{4}{\pi} \ \norm{v}_2,
\end{align*}
where equality $(\star)$ holds as $f(u) = |cos(u)| + |sin(u)|$ is $\frac{\pi}{2}$-periodic: \[
f\prn*{u + \frac{\pi}{2}} = \abs*{cos\prn*{u + \frac{\pi}{2}}} + \abs*{sin\prn*{u + \frac{\pi}{2}}} = \abs{- sin(u)} + \abs{cos(u)} = f(u).
\]

\end{proof}

We now prove \cref{lem:R2-ub}.
\begin{proof}

Let $g = \geomed(P)$ and let $h$ be the location returned by the algorithm. For any $i \in [n]$: let $v_i = p_i - g$. Let $OPT = \sum_{i \in [n]} \norm{p_i - g}_q = \sum_{i \in [n]} \norm{v_i}_q$.

For any vector $v \in \R^2$ let $v_1$, $v_2$ denote the $x$, $y$ coordinates of $v$, respectively.

For any $j \in [2]$: let $(h')_j = median(\crl*{(R p_i)_j})$ be the median of the multiset of $j$ coordinates of $P'$ (for $j=1$ it is the $x$ axis and for $j=2$ it is the $y$ axis obtained after a rotating by $\theta$).

$ALG_\theta$, the cost of the algorithm given $\theta$ is:
\begin{align*}
    ALG_{\theta} & = \sum_{i \in [n]} \norm{p_i - h}_q \le \sum_{i \in [n]} \norm{p_i - h}_2 = \sum_{i \in [n]} \norm{R_\theta p_i - R_\theta h}_2 \\
    & \le \sum_{i \in [n]} \norm{R_\theta p_i - h'}_1 \le \sum_{i \in [n]} \norm{R_\theta p_i - R_\theta g}_1 = \sum_{i \in [n]} \norm{R_\theta v_i}_1, \numberthis \label{eq:alg-bound-each-comp-by-theta}
\end{align*}
where the first inequality holds as $q \ge 2$ , the second equality is due to the fact that rotation is an isometry for $\norm{\cdot}_2$, the second inequality is due to the fact that $\norm{x}_2 \le \norm{x}_1$ for any $x \in \R^2$, and the last inequality is due to the fact that $h'_j$ is the minimizer of function $f_j(x) = \sum_{i \in [n]} \abs{a^{(j)}_i - x}$ (as $h'_j$ is the median of the projected points).

By taking the expectation (on \cref{eq:alg-bound-each-comp-by-theta}) and applying \cref{lem:expected-abs-value-of-each-rotated-component} we get 
that the expected cost of the algorithm is:
\[
        \E_\theta[ALG_\theta] \le \sum_{i \in [n]} \frac{4}{\pi} \norm{v_i}_2 \stackrel{\text{since } q \ge 2}{\le} \ \sum_{i \in [n]} \frac{4}{\pi} \prn*{2^{\nf{1}{2} - \nf{1}{q}}} \norm{v_i}_q = \frac{4}{\pi} \prn*{2^{\nf{1}{2} - \nf{1}{q}}} OPT.
\]

\end{proof}

We now show a tight lower bound, closing the gap in the approximation ratio analysis:
\begin{lem}\label{lem:rand-cwmed-lb}
    The expected approximation ratio of \cref{alg:rrcwmed} is at least $\frac{4}{\pi}$.
\end{lem}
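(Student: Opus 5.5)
We lower-bound the ratio with an explicit family of instances on which, for almost every rotation, $\rrcwmed$ takes one coordinate from one cluster and the other coordinate from a second cluster. Its expected cost then equals $\E\brk*{\norm{R_\theta d}_1}$ times the cluster size, and \cref{lem:expected-abs-value-of-each-rotated-component} evaluates this to $\frac{4}{\pi}$ times the optimum.

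\emph{Construction.} Fix $a=(0,0)$ and $b=(1,0)$, so $d:=b-a=e_1$ and $d^\perp=e_2$. Let $m=(1/2,0)$ be the midpoint and $c=m+s\,d^\perp=(1/2,s)$ for a parameter $s\ge 1$. The instance $P$ has $k$ copies of $a$ (cluster $A$), $k$ copies of $b$ (cluster $B$), and one outlier at $c$, so $n=2k+1$. Placing the facility at $m$ gives the upper bound $OPT\le \scost(P,m)=k+s$.

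\emph{Behaviour of the mechanism.} Fix $\theta$ and write $x'=R_\theta x$. In coordinate $j$ the multiset has $k$ copies of $a'_j$, $k$ copies of $b'_j$ and one copy of $c'_j$. Its median is therefore $c'_j$ clamped to the interval between $a'_j$ and $b'_j$. Set $\delta=R_\theta d=(\cos\theta,\sin\theta)$ and $\delta^\perp=R_\theta d^\perp=(-\sin\theta,\cos\theta)$. Then
\[
c'_j-a'_j=\tfrac12\delta_j+s\delta^\perp_j, \qquad c'_j-b'_j=-\tfrac12\delta_j+s\delta^\perp_j .
\]
Since $\delta^\perp_1\delta_1=-\sin\theta\cos\theta$ and $\delta^\perp_2\delta_2=+\sin\theta\cos\theta$, the outlier lies on opposite sides of the segment in the two coordinates.

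Suppose $s|\delta^\perp_j|>\tfrac12|\delta_j|$ for both $j$. Then for $\theta\in(0,\pi/2)$ the first coordinate clamps to $a'_1$ and the second to $b'_2$. Hence $h'=(a'_1,b'_2)$, and consequently
\[
\norm{h'-a'}_2+\norm{h'-b'}_2=|\delta_2|+|\delta_1|=\norm{R_\theta d}_1 .
\]
(For other quadrants of $\theta$ the roles swap, with the same conclusion.) The condition $s|\delta^\perp_j|>\tfrac12|\delta_j|$ reads $s|\sin\theta|>\tfrac12|\cos\theta|$ and $s|\cos\theta|>\tfrac12|\sin\theta|$. It fails only on a set $E_s$ of angles of measure $O(1/s)$ near the axes. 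So, ignoring the outlier's (nonnegative) cost,
\[
\E_\theta\brk*{\scost(P,\rrcwmed(P))}\;\ge\; k\,\E_\theta\brk*{\norm{R_\theta d}_1\mathbf 1[\theta\notin E_s]}\;\ge\; k\Bigl(\tfrac{4}{\pi}-O(1/s)\Bigr),
\]
using \cref{lem:expected-abs-value-of-each-rotated-component} and $\norm{R_\theta d}_1\le\sqrt2$ on $E_s$.

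\emph{Conclusion and main obstacle.} The ratio is at least $\frac{k(4/\pi-O(1/s))}{k+s}$. Letting $k\to\infty$ first and then $s\to\infty$ (e.g.\ $s=\sqrt k$) drives this to $\frac4\pi$, which proves \cref{lem:rand-cwmed-lb}. The step needing the most care is the clamping case analysis: one must verify the sign bookkeeping in all four quadrants of $\theta$ (equivalently, reduce to $\theta\in[0,\pi/2)$ via \cref{rem:R2-mechanism-form}). One must also confirm that the exceptional set $E_s$ genuinely has vanishing measure, rather than arranging for it to be measure-zero exactly; for a lower bound on a supremum the vanishing-measure version suffices.
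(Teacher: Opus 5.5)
Your proof is correct and follows essentially the same route as the paper. The paper's instance ($k$ copies of $(1,0)$, $k$ copies of $(0,1)$, and one outlier at $(-M,-M)$ with $k=M^2$) is, up to isometry and scaling, exactly your two clusters with a far outlier on their perpendicular bisector, and the paper likewise shows that outside an $O(1/M)$-measure set of angles the coordinate-wise median takes one coordinate from each rotated cluster. The only difference is cosmetic: you evaluate the resulting cost $k\norm{R_\theta d}_1$ via \cref{lem:expected-abs-value-of-each-rotated-component}, whereas the paper integrates $2k\max\{\cos\theta,\sin\theta\}$ directly over $[0,\pi/2)$.
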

\begin{proof}
Consider the instance $P \subset \R^2$: $k$ points in $(1,0)$, $k$ points in $(0,1)$ and $1$ point in $(-M,-M)$, for $M\gg1 \in \N$ and $k = M^2$.

Let $A = (1,0)$, $B = (0,1)$ and $C = (-M, -M)$. 
Consider the new point locations $P'$ obtained by a rotation by $\theta \in [0,\frac{\pi}{2}]$.
Let $s = sin(\theta)$ and $c = cos(\theta)$. Then the new locations of the points are $A', B', C'$ where:
$A' = R_\theta \cdot A = (c,s)$, $B' = R_\theta \cdot B = (-s,c)$ and $C' = R_\theta \cdot C = (-M(c-s), -M(c+s))$.
\begin{figure}
    \centering
\Description{Illustration of the rotation process of the instance. $A' = R_\theta A$, $B' = R_\theta B$ and $C' = R_\theta C$. In this example the post-rotation coordinate-wise median is $(B'_1, A'_2)$.}    \includegraphics[width=1\linewidth]{"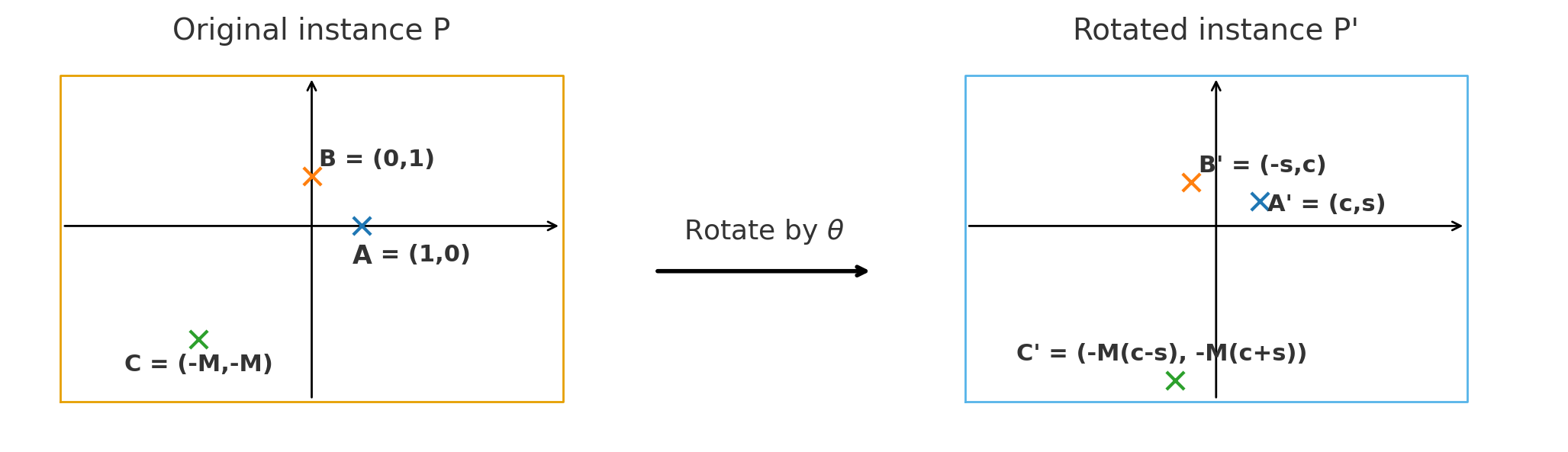"}
    \caption{Illustration of the rotation process of the instance. $A' = R_\theta A$, $B' = R_\theta B$ and $C' = R_\theta C$. In this example the post-rotation coordinate-wise median is $(B'_1, A'_2)$.}
    \label{fig:rotate-2d-illustration}
\end{figure}
Let $m_\theta = (x_\theta, y_\theta)$ denote the coordinate-wise median of the multi-set of rotated points $P'$. Since rotation is an isometry, the social cost of $m_\theta$ w.r.t. $P'$ is the same as the social cost of $m_\theta$ rotated back by $-\theta$ w.r.t. the original multi-set of points $P$.

Since there are $2k+1$ points, the median in each coordinate is the $k+1$-st  order statistic. 


\begin{itemize}
    \item For $\theta \in [0,\frac{\pi}{4})$: $c \ge s$. If $C'_1 \le -s$ then $x_\theta = B'_1 = -s$, and otherwise $x_\theta = C'_1 = -M(c-s)$.

    For the $y$ axis: $y_\theta = s$ always, as the median is $A'_2$.
    
    \item For $\theta \in [\frac{\pi}{4}, \frac{\pi}{2})$: $c \le s$. if $C'_1 \ge c$ then $x_\theta = A'_1 = c$, and otherwise $x_\theta = C'_1 = -M(c-s)$.

    For the $y$ axis: $y_\theta = c$ always, as the median is $B'_2$.
\end{itemize}

Let $W_M := \crl*{\theta \in [0,\frac{\pi}{2}) \mid C'_1 \in (-s, c)}$ be the set of angles for which $x_\theta = C'_1$.
We show that for most angles (all but a $O(\frac{1}{M})$ fraction) it must be that $x_\theta \neq C'_1$:
\begin{clm}\label{clm:W_M-size-ub}
    $|W_M| \le \frac{2}{M}$.
\end{clm}
\begin{proof}
Consider the two possible segments for $\theta$: $(0,\frac{\pi}{4}]$ and $(\frac{\pi}{4},\frac{\pi}{2})$:
\begin{itemize}
    \item For any $\theta \in (0, \frac{\pi}{4}]$:
$C'_1 = -M(c-s) > -s$ if and only if $tan(\theta) > \frac{M}{M+1}$ which implies that:
$(0,\frac{\pi}{4}] \cap W_M = (arctan \frac{M}{M+1}, \frac{\pi}{4})$.

    For any $t$: $|arctan(t) - arctan(1)| \le |t - 1|$ (as $arctan(t)$ is $1$-Lipschitz) and therefore: $\frac{\pi}{4} \le arctan\prn*{\frac{M}{M+1}} + \frac{1}{M}$. It follows that \[\abs*{(0,\frac{\pi}{4}] \cap W_M} \le \frac{1}{M}.\]

\item 
    For any $\theta \in (\frac{\pi}{4}, \frac{\pi}{2})$: $C'_1 = -M(c-s) < c$ if and only if $tan(\theta) < \frac{M+1}{M}$ which implies that: $(\frac{\pi}{4},\frac{\pi}{2}) \cap W_M = [\frac{\pi}{4},arctan(1+\frac{1}{M}))$. Similar to before, $arctan(1+\frac{1}{M}) \le \frac{\pi}{4} + \frac{1}{M}$ and therefore: \[\abs*{(\frac{\pi}{4},\frac{\pi}{2}) \cap W_M} \le \frac{1}{M}.\]
\end{itemize}

Together, we get that $|W_M| \le \frac{1}{M} + \frac{1}{M}\le \frac{2}{M}$.
\end{proof}
We are now ready to lower bound the expected cost of the algorithm.
\begin{clm}\label{clm:expected-cost-of-alg-at-most-4-pi-sqrt-2-times-k}
The expected cost of the algorithm satisfies
\[
\E[ALG] \ge \frac{4}{\pi} k\sqrt{2} - O(\sqrt{k}).
\]
\end{clm}
\begin{proof}
For any $\theta$ the cost of the algorithm is:
\begin{align}\label{eq:ALG_theta}
    ALG_\theta &= k (\norm{m_\theta - A'} + \norm{m_\theta - B'}) + \norm{m_\theta - C'}.
\end{align}
For any $\theta \notin W_M$:
If $\theta \in (0,\frac{\pi}{4}]$, we've shown that $m_\theta = (-s, s)$ and if $\theta \in [\frac{\pi}{4},\frac{\pi}{2})$ then: $m_\theta = (c,c)$.

In the first case: $\norm{A' - m_\theta} = c - s$ and $\norm{B' - m_\theta} = c + s$, and thus $k(\norm{A' - m_\theta} + \norm{B' - m_\theta}) = 2k \ c$.

Similarly in the second case: $\norm{A' - m_\theta} = s+c$ and $\norm{B' - m_\theta} = s - c$ and thus $k(\norm{A' - m_\theta} + \norm{B' - m_\theta}) = 2k \ s$.

We thus get the following bound on $ALG_\theta$:

So for any $\theta \in [0,\frac{\pi}{4}] \ \setminus W_M$: 
\[
    ALG_\theta \ge 2k c, \numberthis \label{eq:ALG_theta-bound-1}
\] and for $\theta \in [\frac{\pi}{4},\frac{\pi}{2}) \ \setminus W_M$: 
\[
    ALG_\theta \ge 2k s.  \numberthis \label{eq:ALG_theta-bound-2}
\]






The expected cost of the algorithm is:
\begin{align*}
     & \E_{\theta \sim U[0,\frac{\pi}{2}]} \brk*{ALG_\theta} = \int_{\theta = 0}^{\frac{\pi}{2}} \frac{2}{\pi} \cdot ALG_{\theta} \ d\theta \\
    & =  \frac{2}{\pi} \prn*{\int_{\theta \in [0, \frac{\pi}{4}] \ \setminus \ W_M} ALG_\theta \ d\theta + \int_{\theta \in [ \frac{\pi}{4},  \frac{\pi}{2}] \ \setminus W_M} ALG_\theta  \ d\theta
    + \int_{\theta \in W_M} ALG_\theta \ d\theta} \\
    & \stackrel{(\star)}{\ge} \frac{2}{\pi} \prn*{\int_{\theta \in [0, \frac{\pi}{4}] \ \setminus \ W_M} 2k c \ d\theta + \int_{\theta \in [ \frac{\pi}{4},  \frac{\pi}{2}] \ \setminus W_M} 2k s \ d\theta
    + \int_{\theta \in W_M} ALG_\theta \ d\theta} \\
    & \ge \frac{2}{\pi} \prn*{ k \sqrt{2} + k \sqrt{2} - O(\sqrt{k})} = \frac{4}{\pi} \cdot k \sqrt{2} - O(\sqrt{k}),
\end{align*}
where:
\begin{enumerate}[label=(\roman*)]
    \item Inequality $(\star)$ follows from \cref{eq:ALG_theta-bound-1,eq:ALG_theta-bound-2}.
    \item The last inequality holds since
    \begin{align*}
        \int_{\theta \in [0, \frac{\pi}{4}] \setminus W_M} 2kc \, d\theta
        & = \int_{0}^{\pi/4} 2kc \, d\theta - \int_{\theta \in W_M} 2kc \, d\theta
        \\
        & \ge k\sqrt{2} - \frac{2k}{M}
        = k\sqrt{2} - 2\sqrt{k},
    \end{align*}
    where the bound uses \cref{clm:W_M-size-ub} and the fact that
    \[
    \int_{0}^{\pi/4} \cos\theta \, d\theta = \sqrt{2}.
    \]
    Similarly,
    \[
    \int_{\theta \in [\pi/4, \pi/2] \setminus W_M} 2ks \, d\theta
    \ge k\sqrt{2} - 2\sqrt{k}.
    \]
\end{enumerate}
\end{proof}

The optimal solution cost is at most the cost of the solution of placing the facility at $A = (1,0)$: $OPT \le \sqrt{2}k + O(M) = \sqrt{2}k + O(\sqrt{k})$ .

Thus, from \cref{clm:expected-cost-of-alg-at-most-4-pi-sqrt-2-times-k}:
\[
\frac{\E[ALG]}{OPT} \ge \frac{\frac{4}{\pi} \ k \sqrt{2} - O(\sqrt{k})}{\sqrt{2} k + O(\sqrt{k})} \stackrel{k \to \infty}{\to} \frac{4}{\pi}.
\]

\end{proof}

\subsection{$\R^d$ approximation ratio analysis}

Next, we show a bound on the approximation ratio of $\rrcwmed$ for arbitrarily large dimension $d$:

\begin{thm}\label{thm:rrcwmed-expected-approximation-ratio-Rd}
Let $d \gg 1$. The following bound holds for the approximation ratio of \cref{alg:rrcwmed} for $\R^d$: \[
\alpha(\rrcwmed) \in \brk*{\sqrt{2} - O\prn*{\frac{1}{\sqrt{d}}}, \sqrt{6 \sqrt{3} - 8}} \approx \brk*{1.41 - O\prn*{\frac{1}{\sqrt{d}}},1.547}.\]
\end{thm}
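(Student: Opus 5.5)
The statement has two parts, and I would prove them separately.

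\emph{Upper bound.} This is immediate from \cref{prop:cwmed-properties}(4). For every fixed rotation $R\in SO(d)$, Euclidean distances are rotation invariant, so
\[
\scost(P,\rcwmed(P,R)) = \scost(RP,\cwmed(RP)).
\]
The right-hand side is at most $\sqrt{6\sqrt3-8}\cdot \scost(RP,\geomed(RP))$, and this equals $\sqrt{6\sqrt3-8}\cdot OPT(P)$. Taking expectation over the Haar-random $R$ preserves the bound. So the upper bound is a pointwise statement that holds for every realization of the rotation.

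\emph{Lower bound.} I would lift the planar construction of \cref{lem:rand-cwmed-lb}. Take $k$ points at $e_1$, $k$ points at $e_2$, and one (or a few) outlier points at $C=-M(e_1+e_2)$, with $k\gg M$. Placing the facility at $e_1$ gives $OPT\le \sqrt2 k+O(M)$, so it suffices to show that $\E[ALG]\ge 2k-O(k/\sqrt d)$.

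After rotation, write $a=Re_1$ and $b=Re_2$. These are orthonormal, and each coordinate behaves approximately like $N(0,1/d)$ (\cite{diaconis1987dozen,meckes2019random}). Also $RC=-M(a+b)$. There are $2k+1$ points, so in each coordinate $j$ the median is the middle one of $\{a_j,b_j,-M(a_j+b_j)\}$.

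\emph{Coordinate analysis.} If $a_j,b_j$ have the same sign, the outlier lies on the opposite side, so the median is whichever of $a_j,b_j$ is closer to $0$. If they have opposite signs, $-M(a_j+b_j)$ falls outside the interval between them unless $|a_j+b_j|\lesssim \max(|a_j|,|b_j|)/M$. As in \cref{clm:W_M-size-ub}, this exceptional event has probability $O(1/M)$ per coordinate, so its total contribution is small. Outside it, the median is the endpoint on the side opposite to $a_j+b_j$, which is again the smaller-magnitude one between the two in the relevant sense.

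\emph{Cost computation.} In every case $m_j\in\{a_j,b_j\}$. Hence
\[
\|m-a\|^2=\sum_j (m_j-a_j)^2 = \sum_{j:\,m_j=b_j}(b_j-a_j)^2,
\]
and $\|m-b\|^2$ is the analogous sum over $j$ with $m_j=a_j$. Which index set a coordinate falls into depends on the pair $(a_j,b_j)$, symmetrically. Treating the $(a_j,b_j)$ as approximately i.i.d.\ Gaussian pairs $(g_j,h_j)/\sqrt d$, each sum concentrates around $\tfrac12\E[(g-h)^2]=1$. Therefore $\|m-a\|,\|m-b\|\to 1$, and the cost is about $2k$, giving the ratio $2/\sqrt2=\sqrt2$.

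\emph{Main obstacle.} The hardest step is making this quantitative with an $O(1/\sqrt d)$ error. The plan is to view $\|m-a\|^2$ as a Lipschitz-type function of $(a,b)$ on the Stiefel manifold, equivalently of $R\in SO(d)$. I would try either of two routes:
\begin{itemize}
\item Gaussian coupling: realize $(a,b)$ by Gram--Schmidt applied to Gaussian vectors, with a correction of order $O(1/\sqrt d)$.
\item Concentration of measure on $SO(d)$: obtain $\E\|m-a\|\ge 1-O(1/\sqrt d)$ directly.
\end{itemize}
The summand is only piecewise smooth because the median selection switches branches. To handle this, I would bound the expectation directly rather than invoke Lipschitz concentration. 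Expanding the sum gives $\E\|m-a\|^2=1\pm O(1/d)$, and a variance bound of $O(1/d)$ together with $\E X\ge \E X^2/\sqrt{\E X^4}$-type inequalities would then yield $\E\|m-a\|\ge 1-O(1/\sqrt d)$. Finally, choose $M$ large, say $M\ge d$, so that the exceptional-coordinate contribution is negligible, and let $k\to\infty$.
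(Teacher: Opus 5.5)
Your upper bound is the paper's argument. Your lower bound follows the paper's architecture: the same two-clusters-plus-outlier instance, the same middle-of-three description of the median in each coordinate, and the same two-stage plan of first proving $\mathbb{E}\|a-m\|^2\approx 1$ and then upgrading to $\mathbb{E}\|a-m\|\approx 1$.

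One correction: your sentence ``in every case $m_j\in\{a_j,b_j\}$'' should read ``off $D_j$''. On $D_j$ the median is $c_j$, which lies between $a_j$ and $b_j$, so $(m_j-a_j)^2\le(a_j-b_j)^2$ still holds.

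The paper makes your ``expanding the sum'' step exact rather than heuristic. Off $D_j$ one has $(a_j-m_j)^2+(b_j-m_j)^2=(a_j-b_j)^2$. Together with $\|a-b\|^2=2$ and the $a\leftrightarrow b$ symmetry, this gives $\mathbb{E}\|a-m\|^2\ge 1-\tfrac14\sum_j\mathbb{E}[(a_j-b_j)^2\mathbf{1}\{D_j\}]$.

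Your decoupling of $k$ is a genuine improvement. $\mathrm{CWM}(RP)$ does not depend on $k$, so letting $k\to\infty$ removes the $O(M)$ slack in $OPT\le\sqrt2 k+O(M)$. The paper's choice $k=\lfloor\sqrt d\rfloor$, $M=\sqrt k$ leaves a relative slack of order $k^{-1/2}=d^{-1/4}$.

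\textbf{The gap.} It is the step you flag as the main obstacle, and it carries all the quantitative content. The variance bound that turns the second-moment estimate into a first-moment one is asserted, not proved.

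The obstruction is not piecewise smoothness. The map $R\mapsto\|Re_1-\mathrm{CWM}(RP)\|$ is globally Lipschitz, since coordinatewise medians of Lipschitz functions are Lipschitz. The problem is the size of the constant, which is $\Theta(M)$ because on $D_j$ the median is $c_j=-M(a_j+b_j)$. The paper uses exactly this to get $\mathrm{Var}=O(M^2/d)$ from concentration on $SO(d)$. With your $M\ge d$ that bound is vacuous, so you cannot borrow it.

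Your estimate that the exceptional coordinates carry total mass of order $1/M$ is correct. It holds exactly: $(a_j,b_j)$ is rotation invariant in the plane, $D_j$ is an angular condition, and the weight $(a_j-b_j)^2$ is not small on $D_j$. In fact $\mathbb{E}\|a-m\|^2=1-\Theta(1/M)$. So pairing the second-moment bound with an $O(M^2/d)$ variance bound cannot give the $O(d^{-1/2})$ rate at any $M$: it would need $1/M$ and $M^2/d$ both to be $O(d^{-1/2})$. A different variance argument is genuinely needed.

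Two routes look workable:
\begin{itemize}
\item Take $M=\sqrt d$ and apply the Poincar\'e inequality on $SO(d)$, $\mathrm{Var}(f)\le\frac{C}{d}\,\mathbb{E}|\nabla f|^2$, to $W=\|a-m\|^2$. Its gradient in $(a_j,b_j)$ is $O(|a_j-b_j|)$ off $D_j$ and $O(M|a_j-b_j|)$ on $D_j$. Hence $\mathbb{E}|\nabla W|^2=O(1+M)$ and $\mathrm{Var}(W)=O(M/d)=O(d^{-1/2})$.
\item Keep $M\ge d$ and write $W=W_0+E$, where $W_0=\sum_j(a_j-b_j)^2\mathbf{1}\{|b_j|<|a_j|\}$ does not depend on $M$ and $|E|\le\sum_j(a_j-b_j)^2\mathbf{1}\{D_j\}\le 2$, so $\mathbb{E}E^2=O(1/M)$. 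Then prove $\mathrm{Var}(W_0)=O(1/d)$ via the Gram--Schmidt Gaussian representation. This requires controlling the normalization and the $\langle h,a\rangle a$ correction, which shifts the discontinuous indicators through a common random factor.
\end{itemize}

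Either way, the pointwise inequality $Z\ge\tfrac12(3Z^2-Z^4)$ for $Z\ge0$ gives $\mathbb{E}\|a-m\|\ge 1-\tfrac12\bigl(\epsilon+\epsilon^2+\mathrm{Var}(W)\bigr)$ with $\epsilon=1-\mathbb{E}W$. Letting $k\to\infty$ then finishes the argument.
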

We show here a proof sketch, deferring the full proof to \cref{sec:proof-of-thm:rrcwmed-expected-approximation-ratio-Rd}.

\begin{proof}[Proof sketch of \cref{thm:rrcwmed-expected-approximation-ratio-Rd}]
For the upper bound, an approach similar to the one we've taken in \cref{lem:R2-ub} for $\R^2$ may lead to a slightly better $1.5$ bound for $\R^3$, but it does not scale well; it leads to a ratio of $\Omega(\sqrt{d})$ for large $d$ values (e.g. see the $\Omega(\sqrt{d})$ analysis for the projection median approximation of \cite{basu2012projection}). The $\sqrt{6 \sqrt{3} - 8}$ upper bound follows directly from the approximation ratio upper bound of $\cwmed$ (see \cref{prop:cwmed-properties}): \citet{gravin2025approximation} show that the approximation ratio of $\cwmed$ is bounded by $\sqrt{6 \sqrt{3} - 8}$ by analyzing a factor revealing optimization problem. Since $\rrcwmed$ is $\cwmed$ with random rotations, its expected cost is upper bounded by the worst-case cost of $\cwmed$.

We now show that a construction analysis which closes most of the remaining gap. While the idea in the construction is very similar to our two dimensional one (two clusters and an outlier construction), the analysis requires ``heavier machinery'' --- we use random matrix theory and high-dimensional probability to show that the expected distance of each cluster point from the random returned point is large.

Let $d \gg 1$ and let $k = \floor{\sqrt{d}}$, $M = \sqrt{k}$. The instance $P$ we observe contains:
$k$ copies of $e_1$, $k$ copies of $e_2$ and a single copy of $-M(e_1 + e_2)$. Our goal is to show that $\E[ALG] \approx 2k$, which implies that the approximation ratio is roughly $\sqrt{2}$ (as $OPT \le k\sqrt{2} + O(\sqrt{k})$).

Let $R \in SO(d)$ be the uniformly sampled rotation matrix, let $a = R e_1$, $b = R e_2$ and $c = R (-M(e_1+e_2)) = -M(a+b)$. Let $m$ be the $\cwmed$ of the rotated multi-set of points $R P$. 
For any $j \in [d]$, let $m_j$ be the median for coordinate $j$. In coordinate $j$ the distinct locations of the rotated points are $a_j, b_j, c_j$. Since \[
    \E[ALG] \ge k \prn*{\E\brk*{\norm{a - m}_2} + \E\brk*{\norm{b - m}_2}}, \numberthis \label{eq:alg-expected-val-sqrt2-lb-of-randcwmed-sketch}
\] we will show that $\E\brk*{\norm{a - m}_2}$ and $\E\brk*{\norm{b-m}_2}$ are approximately $1$, which implies the desired.

For each coordinate $j\in[d]$, define the event $D_j$ that the outlier coordinate $c_j$ lies between $a_j$ and $b_j$
(i.e., $\min\{a_j,b_j\}\le c_j\le \max\{a_j,b_j\}$). On coordinates where $D_j$ fails, the outlier does not affect the median:
$m_j\in\{a_j,b_j\}$. Using an explicit case analysis (formalized as a claim in the full proof), we obtain the inequality
\begin{equation}\label{eq:abm-quad-sketch}
\|a-m\|_2^2+\|b-m\|_2^2
\;\ge\;
\|a-b\|_2^2
-\sum_{j=1}^d \frac{(a_j-b_j)^2}{2}\,\mathbf{1}\{D_j\}.
\end{equation}
Since $a$ and $b$ are orthonormal columns of $R$, we have $\|a-b\|_2^2=\|a\|_2^2+\|b\|_2^2-2\langle a,b\rangle=2$.
Taking expectations in~\eqref{eq:abm-quad-sketch} and using symmetry (that is, $\E\brk*{\norm{a-m}_2^2} = \E\brk*{\norm{b-m}_2^2}$) yields
\begin{equation}\label{eq:Ez2-core-sketch}
\mathbb{E}\|a-m\|_2^2
\;\ge\;
1-\frac{1}{4}\sum_{j=1}^d \mathbb{E}\bigl[(a_j-b_j)^2\,\mathbf{1}\{D_j\}\bigr].
\end{equation}
Thus, the key task is to show that the total ``bad-event mass'' on the RHS is $O(1/k)$.
Define normalized coordinates $
U_j:=\sqrt{\frac d2}(a_j+b_j),\  V_j:=\sqrt{\frac d2}(a_j-b_j)$.
Then $(a_j-b_j)^2=\frac{2}{d}V_j^2$, and it follows from $c_j=-M(a_j+b_j)$ that $D_j$ occurs iff $|U_j|\le t\,|V_j|$ where $t:=\frac{1}{2M+1}$.

We now use the Diaconis--Freedman normal approximation theorem for random orthogonal matrices
(equivalently, the fact that finitely many coordinates of a Haar-uniform $R\in SO(d)$ are close in total variation to i.i.d.\ Gaussians after the $\sqrt d$ scaling) \cite{diaconis1987dozen}.
Concretely, for each fixed $j$, the pair $(U_j,V_j)$ is $O(1/d)$-close in total variation to $(G_1,G_2)$ where $G_1,G_2\stackrel{\text{i.i.d.}}{\sim}N(0,1)$.
We show this implies
\begin{equation}\label{eq:tv-reduction}
\mathbb{E}\!\left[V_j^2\,\mathbf{1}\{|U_j|\le t|V_j|\}\right]
\;=\;
\mathbb{E}\!\left[G_2^2\,\mathbf{1}\{|G_1|\le t|G_2|\}\right] + O(1/\sqrt{d}).
\end{equation}

For $(G_1,G_2)\sim N(0,I_2)$,
we show (by integral calculations) that $\E\brk*{G_1^2 \indic{\abs{G_1} \le t \abs{G_2}}} \le O\prn*{\frac{1}{k}}$.
Combining with~\eqref{eq:tv-reduction} and the identity $(a_j-b_j)^2=\frac{2}{d}V_j^2$, we obtain
\begin{align}
\sum_{j=1}^d \mathbb{E}\bigl[(a_j-b_j)^2\,\mathbf{1}\{D_j\}\bigr]
&=
\sum_{j=1}^d \frac{2}{d}\,\mathbb{E}\!\left[V_j^2\,\mathbf{1}\{|U_j|\le t|V_j|\}\right]\nonumber\\
&\le
2\cdot O(1/M^3)\;+\;O(1/\sqrt{d})\nonumber\\
&=
O(1/k),
\label{eq:bad-mass-sketch}
\end{align}
using $M=\sqrt{k}$ and $d = \Theta( k^2)$.

Plugging~\eqref{eq:bad-mass-sketch} into~\eqref{eq:Ez2-core-sketch} yields
\begin{equation}\label{eq:Ez2}
\mathbb{E}\|a-m\|_2^2\;\ge\;1-O(1/k).
\end{equation}

We wish to deduce\footnote{Note that a random variable $X$ such that $\E\brk*{X^2} \ge 1 - O(\frac{1}{k})$ does not imply that $\E\brk*{X} \ge 1 - O(\frac{1}{k})$ in the general case.} that $\E\brk*{\norm{a-m}_2} \ge 1 - O\prn*{\frac{1}{k}}$.

Let $Z := \norm{a - m}_2$.
We already have the second moment lower bound $\E\brk*{Z^2} \ge 1 - O\prn*{\frac{1}{k}}$, and the goal is to show that $\E\brk*{Z} \ge 1 - O\prn*{\frac{1}{k}}$.
By the definition of the variance of $Z$: 
\begin{equation}\label{eq:sketch-var-Z}
    \prn*{\E[Z]}^2 = \E[Z^2] - Var(Z).
\end{equation}
The idea in the rest of the lemma is to show that $Var(Z) = O\prn*{\frac{1}{k}}$ and thus by we'll get the desired.

Consider the function $f(R):=\|Re_1-\mathrm{CWM}(RP)\|_2$ on $SO(d)$ (so $f(R) = Z$).
We show (in the full proof) that $f$ is $L$-Lipschitz w.r.t.\ the Frobenius metric, with $L=O(M)$.
Then, we apply a concentration inequality for Lipschitz functions of random variables on $SO(d)$ and get that \begin{equation}
    Var(Z) \le  O \prn*{\norm{Z - \E \brk*{Z}}_{\psi_2}^2} \le O\prn*{\norm{f(R) - \E \brk*{f(R)}}_{\psi_2}} \le O\prn*{\frac{1}{k}}.
\end{equation}

Using $\mathbb{E}[Z]^2=\mathbb{E}[Z^2]-\Var(Z)$ together with~\eqref{eq:Ez2} and~\eqref{eq:sketch-var-Z}, we conclude
\begin{equation}\label{eq:EZ}
\mathbb{E}\|a-m\|_2\;\ge\;1-O(1/k).
\end{equation}

Combining~\eqref{eq:alg-expected-val-sqrt2-lb-of-randcwmed-sketch} with~\eqref{eq:EZ} and symmetry,
\[
 \E[ALG] \ge 2k\cdot (1-O(1/k)) = 2k-O(1).
\]
Since $OPT \le \sqrt{2}k + O\prn*{\sqrt{k}}$ we get the desired:
\[
    \frac{\E\brk{ALG}}{OPT} \ge \sqrt{2} - O\prn*{\frac{1}{k}} = \sqrt{2} - O\prn*{\frac{1}{\sqrt{d}}}.
\]
\end{proof}

\cref{thm:rrcwmed-expected-approximation-ratio-Rd} still leaves a gap to close. We finish this section with the following conjecture, which states that our lower bound for $\rrcwmed$ is tight:
\begin{conj}\label{conj:rrcwmed-ub-Rd}
Let $d \in \N$.
The approximation ratio of \cref{alg:rrcwmed} on $\R^d$ is at most $\sqrt{2}$.
\end{conj}

\section{Improved Learning Augmented Facility Location Mechanism Design Via Randomness}\label{sec:predictions-learning-aug}

In this section we show how our $\rrcwmed$ mechanism yields better results in the learning augmented setting for the facility location mechanism design problem in $\R^2$. We study two prediction models: the output prediction, and the $\mac$ input predictions models.

\paragraph{Output prediction setting}
For this setting, \cite{agrawal2022learning} give the $\cmp$ mechanism which adds $cn$ copies of the predicted point $\hat{g}$ to the dataset (for $c \in (0,1)$) and returns the $\cwmed$ of the new multiset of points. By replacing the $\cwmed$ with $\rrcwmed$, we maintain strategyproofness and obtain a better consistency-robustness trade-off in expectation for many values of $c$.
We name the resulting mechanism $\rrcmp$ (\cref{alg:rrcmp}).

\begin{algorithm}[H]
\caption{$\rrcmp$ mechanism}
\label{alg:rrcmp}
\begin{algorithmic}[1]
\Require Multiset $P=\{p_1,\ldots,p_n\}\subset\mathbb{R}^d$, prediction $\hat g\in\mathbb{R}^d$, confidence parameter $c\in[0,1)$.
\State Create a multiset $\hat{P}$ that contains $\floor{cn}$ copies of $\hat{g}$.
\State Return $\rrcwmed(P \cup \hat{P})$.
\end{algorithmic}
\end{algorithm}

In the following theorem, we show that the $\rrcmp$ has better consistency-robustness trade-off by inheriting the consistency of $\cmp$ mechanism with improved robustness, achieved via the random rotations.

\begin{thm}\label{thm:rrcmp-approximation}
    The $\rrcmp$ mechanism with parameter $c \in (0,1)$ is strategyproof and achieves an approximation ratio of at most:
    \[
        \alpha(\rrcmp) \le \min\crl*{\frac{\sqrt{2c^2 + 2}}{1+c} + \eta, \frac{4}{\pi} \prn*{1 + c \eta}, \frac{\sqrt{2c^2 + 2}}{1 - c}, \frac{4}{\pi} \cdot \frac{1 + c}{1-c}}.
    \]
    Thus, $\rrcmp$ is $\min\crl*{\frac{\sqrt{2c^2 + 2}}{1+c}, \frac{4}{\pi}}$ consistent, $\frac{1}{1-c} \min\crl*{\sqrt{2c^2 + 2}, \frac{4}{\pi} \prn*{1 + c}}$ robust and has a smooth degradation between consistency and robustness as a function of the prediction error $\eta$.
\end{thm}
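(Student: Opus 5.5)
Strategyproofness is immediate: for any fixed realization of the rotation, $\rrcmp$ is just $\cmp$ in rotated coordinates. The $\floor{cn}$ copies of $\hat g$ do not depend on agent reports, so adding them is a phantom-type modification, and $\cwmed$ of the augmented multiset remains strategyproof coordinatewise. As in \cref{thm:sp}, the rotation is sampled independently of the reports.

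For the approximation bound I would establish the four terms of the minimum separately. I work in $\R^2$ and write $g=\geomed(P)$, $OPT=\scost(P,g)$, so that $\norm{\hat g-g}=\eta\, OPT/n$.

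\emph{Terms 1 and 3 (deterministic terms).} For each fixed $\theta$, the rotated instance is an instance of $\cmp$ with prediction $R_\theta\hat g$ and the same error $\eta$, because rotation is an isometry. Hence the per-$\theta$ guarantees of \citet{agrawal2022learning}, namely $\frac{\sqrt{2c^2+2}}{1+c}+\eta$ and $\frac{\sqrt{2c^2+2}}{1-c}$, hold for every $\theta$, and therefore also in expectation. I would take these as the published bounds for $\cmp$, possibly rederiving the $+\eta$ degradation form if their statement differs.

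\emph{Term 2, $\frac{4}{\pi}(1+c\eta)$.} I would repeat the argument of \cref{lem:R2-ub} on the augmented multiset $P\cup\hat P$, with $g$ as the comparison point. Let $h$ be the output. For every $\theta$,
\[
\sum_{p\in P}\norm{p-h}_2\le \sum_{p\in P\cup\hat P}\norm{R_\theta p-R_\theta h}_1\le \sum_{p\in P\cup\hat P}\norm{R_\theta(p-g)}_1 ,
\]
where the second inequality holds because $\cwmed$ minimizes $\scost_1$ on the augmented set (\cref{prop:cwmed-properties}). Taking expectations and applying \cref{lem:expected-abs-value-of-each-rotated-component} gives
\[
\E[ALG]\le \frac{4}{\pi}\Bigl(OPT+\floor{cn}\norm{\hat g-g}\Bigr)\le\frac{4}{\pi}(OPT+c\eta\, OPT).
\]
Setting $\eta=0$ yields the $\frac{4}{\pi}$ consistency.

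\emph{Term 4, $\frac{4}{\pi}\frac{1+c}{1-c}$ (main obstacle).} Here the prediction can be arbitrarily bad, so I cannot compare against $g$ with the $\hat P$ terms included. My plan is to compare against $g'=\geomed(P\cup\hat P)$ instead. Using the same chain of inequalities with $g'$ in place of $g$,
\[
\E\Bigl[\sum_{p\in P}\norm{p-h}\Bigr]\le \E\Bigl[\sum_{p\in P\cup\hat P}\norm{p-h}\Bigr]\le \frac{4}{\pi}\scost(P\cup\hat P,g').
\]
It then remains to bound $\scost(P\cup\hat P,g')$ by $(1+c)OPT/(1-c)$ times a suitable factor, or more directly to bound $\scost(P,g')$. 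This is the insertion-corruption robustness step referred to as \cref{lem:geometric-median-approx-robustness}. Its key fact is that inserting a $c$-fraction of arbitrary points moves the geometric median only slightly: $n\norm{g'-g}\le \frac{2c}{1-c}OPT$, or a similar bound. I would prove this by applying the first-order optimality (subgradient) condition at $g'$. The $\floor{cn}$ inserted copies contribute at most $cn$ unit vectors, so the unit vectors from $P$ toward $g'$ must nearly balance. I would then combine this with the triangle inequality $\sum_{p\in P}\norm{p-g'}\ge n\norm{g-g'}-OPT$.

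The goal is the final bound $\E[\scost(P,h)]\le \frac{4}{\pi}\frac{1+c}{1-c}OPT$. If the direct accounting above leaves a residual $\hat P$ term, I would subtract it exactly, using $\E\norm{R_\theta(\hat g-h)}_1\ge\norm{\hat g-h}_2$ together with the triangle inequality. Getting the constant to be exactly $\frac{1+c}{1-c}$ is the delicate part. Finally, taking the minimum over the four terms gives the theorem.
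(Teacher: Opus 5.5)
Your strategyproofness argument and your treatment of the first three terms match the paper. Terms 1 and 3 are inherited from $\cmp$ for each fixed $\theta$. Term 2 is the same $\ell_1$-optimality argument on $R_\theta(P\cup\hat P)$ with $g$ as comparison point, followed by \cref{lem:expected-abs-value-of-each-rotated-component}.

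The gap is in the robustness term $\frac{4}{\pi}\frac{1+c}{1-c}$.
\begin{itemize}
\item Comparing against $g'=\geomed(P\cup\hat P)$ cannot work. The quantity $\scost(P\cup\hat P,g')$ contains $\floor{cn}\norm{\hat g-g'}$. Since $g'$ stays within $\frac{2}{n(1-c)}OPT$ of $g$, this term grows linearly in $\eta$, so it has no bound of the form $\frac{1+c}{1-c}OPT$.
\item Your repair does not close this. After averaging, the copies of $\hat g$ contribute $\frac{4}{\pi}\floor{cn}\norm{\hat g-g'}$. The subtracted term, once you lower it from $\ell_1$ to $\ell_2$, contributes only $\floor{cn}\E\norm{\hat g-h}$. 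Since $\floor{cn}<n$, each rotated coordinate median remains an order statistic of $R_\theta P$, so $h$ stays bounded as $\hat g\to\infty$. Hence a residual of order $(\frac{4}{\pi}-1)\floor{cn}\norm{\hat g-g'}$ survives.
\item Any cancellation has to be done pointwise in $\theta$ and in $\ell_1$. What must then be controlled is the rotated $\cwmed$ of the augmented set, which is the mechanism's actual output. The Euclidean median $g'$ is not that output, and knowing that $g'$ is close to $g$ says nothing about $\scost(P,h)$.
\item Your alternative of bounding $\scost(P,g')$ directly fails for the same reason: $h$ is not the $\scost_1$-minimizer of $R_\theta P$.
\end{itemize}

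The missing idea is to apply the insertion-corruption lemma (\cref{lem:geometric-median-approx-robustness}) in the metric space $(\R^2,\norm{\cdot}_1)$ for each fixed $\theta$. In that space the $\OneMed$ is exactly the coordinate-wise median. Set $\hat h_\theta=\cwmed(R_\theta(P\cup\hat P))$ and $h'_\theta=\cwmed(R_\theta P)$. Then for every $\theta$,
\[
\scost(P,h)\le\scost_1(R_\theta P,\hat h_\theta)\le\frac{1+c}{1-c}\,\scost_1(R_\theta P,h'_\theta)\le\frac{1+c}{1-c}\,\scost_1(R_\theta P,R_\theta g).
\]
Averaging with \cref{lem:expected-abs-value-of-each-rotated-component} yields $\frac{4}{\pi}\frac{1+c}{1-c}OPT$, and the prediction never appears. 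This is why the paper proves that lemma for arbitrary metric spaces via the triangle inequality, rather than through a Euclidean subgradient condition.

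Separately, the displacement bound you quote, $n\norm{g'-g}\le\frac{2c}{1-c}OPT$, is false in general. Take collinear points, $\frac{n+1}{2}$ at $0$ and $\frac{n-1}{2}$ at $1$, with $c\le\frac12$, and insert $cn\ge 2$ copies far to the right. The median moves by $1$, so $n\norm{g'-g}=n$, which exceeds $\frac{2c}{1-c}OPT=\frac{c(n-1)}{1-c}$. The correct statement is $n\norm{g'-g}\le\frac{2}{1-c}OPT$; the factor $c$ enters only after multiplying by $\abs{\hat P}=cn$.
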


A calculation shows that \cref{thm:rrcmp-approximation} implies a strict improvement in the expected consistency-robustness trade-off over $\cmp$: for $c \in (0, 0.118], \eta \ge 0$ or $c \in [0.118, 0.785), \eta > \frac{4(1+c)-\pi\sqrt{2c^2+2}}{(\pi-4c)(1+c)}$.
\citet{agrawal2022learning} shows a bound on any consistency-robustness trade-off of deterministic anonymous strategyproof mechanisms, showing that the trade-off of $\cmp$ is optimal for \emph{any} value $c \in (0,1)$. Thus, \cref{thm:rrcmp-approximation} implies breaking the deterministic barrier in the learning-augmented setting as well.

Before showing the proof of \cref{thm:rrcmp-approximation}, we show a robust statistics result about the robustness of the $\cwmed$  to a $c$-``insertion corruption'' --- adding a $c$ fraction of arbitrary points to the dataset. In fact, we give robustness results for a generalization of the median (and of $\cwmed$ and $\geomed$) to a general metric space: the $\OneMed$.
\begin{definition}[$\OneMed$]
    Let $(V,d)$ be a metric space.
    The $\OneMed$ of a multiset of points $P \subset V$ is the minimizer of \ $C(P,t) := \sum_{p \in P} d(p,t)$. That is: \ $\OneMed(P) := \argmin_{t \in V} C(P,t)$.
\end{definition}

We give a bound on the approximation ratio of the solution computed on the corrupted dataset:
\begin{lem}\label{lem:geometric-median-approx-robustness}
    Let $c \in (0,1)$ and $(V,d)$ be a metric space. Let $P, \hat{P} \subset V$ be two multisets of points such that $\abs{\hat{P}} = c \abs{P}$. Then: 
    \[
    C\prn*{P,med\prn*{P \cup \hat{P}}} \le \prn*{1 + \frac{2c}{1-c}} C\prn*{P,med\prn*{P}},
    \]
    where for any $X \subset V$: $C(X,t) = \sum_{x_i \in X} d(x_i,t)$, and $med(X) := \argmin_{t \in V} C(X,t)$.
\end{lem}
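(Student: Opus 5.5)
Write $m = med(P)$, $m' = med(P\cup\hat P)$, $n=\abs{P}$, $\hat n = \abs{\hat P} = cn$, and $OPT = C(P,m)$. The plan is to compare the two minimizers using only their optimality and the triangle inequality, so the argument works in any metric space. No facts specific to $\cwmed$ or $\geomed$ are needed.

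\emph{Step 1: optimality of $m'$.} Since $m'$ minimizes the cost on the combined set, $C(P,m') + C(\hat P,m') \le C(P,m) + C(\hat P,m)$. Rearranging,
\[
C(P,m') - OPT \le C(\hat P,m) - C(\hat P,m').
\]

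\emph{Step 2: bound the right-hand side.} For each $\hat p \in \hat P$, the triangle inequality gives $d(\hat p,m) - d(\hat p,m') \le d(m,m')$. Summing over $\hat P$ yields
\[
C(P,m') - OPT \le cn\, d(m,m').
\]

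\emph{Step 3: bound $d(m,m')$ by costs on $P$.} For every $p\in P$ we have $d(m,m') \le d(p,m) + d(p,m')$. Summing over $P$ gives
\[
n\, d(m,m') \le OPT + C(P,m').
\]

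\emph{Step 4: combine.} Substituting Step 3 into Step 2 gives $C(P,m') - OPT \le c\,(OPT + C(P,m'))$. Hence $(1-c)\,C(P,m') \le (1+c)\,OPT$, that is,
\[
C(P,m') \le \frac{1+c}{1-c}\,OPT = \left(1+\frac{2c}{1-c}\right)OPT.
\]
This is exactly the claimed bound.

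The main point needing care is Step 3. Bounding $d(m,m')$ naively is not enough, because it must be controlled by the quantity being bounded. Absorbing $C(P,m')$ back into the left-hand side is what produces the factor $\frac{1}{1-c}$, and it needs $c<1$. Two minor remarks: if $med$ is not unique, the argument applies to any choice of minimizers; and minimizers are assumed to exist, as is implicit in the statement.
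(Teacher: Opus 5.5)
Your proof is correct and uses the same ingredients as the paper: the optimality of $m'$ on $P\cup\hat P$, the triangle inequality $d(\hat p,m)-d(\hat p,m')\le d(m,m')$ on $\hat P$, and the triangle inequality $d(m,m')\le d(p,m)+d(p,m')$ on $P$. The only difference is bookkeeping: the paper first eliminates $C(P,m')$ to prove a separate distance lemma $d(m,m')\le \frac{2}{n(1-c)}C(P,m)$ and then substitutes it into your Steps 1--2, whereas you eliminate $d(m,m')$ directly by absorbing $C(P,m')$ into the left-hand side; both give the same constant $\frac{1+c}{1-c}$.
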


We are now ready to prove \cref{thm:rrcmp-approximation}.
\begin{proof}
Strategyproof is immediate from the fact that the prediction is independent from the agent reports, and from the strategyproofness of $\rrcwmed$.
Let $ALG_\theta$ denote the cost of $\rrcmp$ given any sampled $\theta$, and let $OPT$ be the cost of the optimal solution.

The approximation ratio of $\rrcmp$ is upper bounded by that of $\cmp$. Specifically, for any instance $I$ and any $\theta \in [0, \frac{\pi}{4})$ sampled by $\rrcmp$, the cost incurred by $\rrcmp$ on instance $I$ equals the cost incurred by $\cmp$ on the rotated instance $I_\theta$, obtained from $I$ by rotating all points by $\theta$. Consequently, $\rrcmp$ directly inherits the approximation guarantees of $\cmp$ established by \citet{agrawal2022learning}:
\[\E[ALG_\theta] \le \min\crl*{\frac{\sqrt{2c^2 + 2}}{1+c} + \eta, \frac{\sqrt{2c^2+2}}{1-c}} OPT. \]

W.l.o.g., assume that $\hat{g} = 0$. Otherwise, we may translate the instance by $t = -\hat{g}$. The rotated translated instance is equivalent to the original rotated instance translated by $R_{\theta} t$, where $R_{\theta}$ denotes the rotation matrix. Consequently, the resulting $\rcwmed$ is equal to the original one, translated by $R_{\theta} t$. After rotating back by $R_{-\theta}$, we obtain the same solution translated by $t$. Therefore, all pairwise distances remain unchanged, and the transformed instance preserves the same approximation ratio.

Let $g$ be the geometric median. Then $OPT = \sum_{i=1}^n \norm{p_i - g}_2$.
Let $m = \floor{cn} \in \N$.

Let $\hat{h}_\theta$ be the coordinate-wise median of the rotated dataset $R_\theta(P \cup \hat{P})$, and let $h'_\theta$ be the coordinate-wise median of the rotated dataset without the copies of the prediction: $R_\theta P$.
Let $h$ be the location returned by $\rrcmp$.
We start in a similar way to the proof of \cref{lem:R2-ub}:
\begin{align*}
    ALG_\theta & = \sum_{i\in[n]} \norm{p_i - h}_2 = \sum_{i\in[n]} \norm{R_\theta p_i - R_\theta h}_2
     \le \sum_{i\in[n]} \norm{R_\theta p_i - \hat{h}_\theta}_1, \numberthis \label{eq:ALG-theta-tmp-bound-learning-aug}
\end{align*}
where the second equality holds as $\norm{\cdot}_2$ is invariant to rotation and the inequality is due to the fact that $\norm{x}_2 \le \norm{x}_1$ for any $x \in \R^2$.

First, we show the consistency improvement. By \cref{eq:ALG-theta-tmp-bound-learning-aug}:

\begin{align*}
    ALG_\theta &\le \sum_{i\in[n]} \norm{R_\theta p_i - \hat{h}_\theta}_1 + m \norm{R_\theta \hat{g} - \hat{h}_\theta}_1 - m \norm{R_\theta \hat{g} -\hat{h}_\theta}_1 \\
    & \le \sum_{i\in[n]} \norm{R_\theta p_i - R_\theta g}_1 + m \norm{R_\theta \hat{g} - R_\theta g}_1 - m \norm{R_\theta \hat{g} - \hat{h}_\theta}_1 \\
    & = \sum_{i\in[n]} \norm{R_\theta p_i - R_\theta g}_1 + m \prn*{\norm{R_\theta g}_1 - \norm{\hat{h}_\theta}_1}, \numberthis \label{eq:alg-theta-rr-cmp-tmp}
\end{align*}

where the inequality is due to the fact that $\hat{h}_\theta$ is the $\scost_1$ minimizer solution of $R_\theta \prn*{P \cup \hat{P}}$, as its $\cwmed$ (due to \cref{prop:cwmed-properties}), and thus $R_\theta \ g$ has a larger $\scost_1$ cost.
The last equality holds as $\hat{g} = 0$.

From \cref{eq:alg-theta-rr-cmp-tmp} and the fact that $m = \floor{cn}$:
\begin{equation}\label{eq:rrcmp-expected-ub-tmp}
    \E[ALG_\theta] \le \E\brk*{\sum_{i\in[n]} \norm{R_\theta p_i - R_\theta g}_1} + cn \E\brk*{\norm{R_\theta g}_1 - \norm{\hat{h}_\theta}_1}.
\end{equation}

$\norm{R_\theta g}_1 - \norm{\hat{h}_\theta}_1 \le \norm{R_\theta g}_1 = \norm{R_\theta g - R_\theta \hat{g}}_1$, and thus, by plugging \cref{lem:expected-abs-value-of-each-rotated-component} in \cref{eq:rrcmp-expected-ub-tmp}:
\begin{align*}
    \E[ALG_\theta] &\le \frac{4}{\pi} OPT + c n \frac{4}{\pi} \E \brk*{\norm{R_\theta g - R_\theta \hat{g}}_2} = \frac{4}{\pi} OPT + c n \frac{4}{\pi} \norm{g - \hat{g}}_2 \\
    & = \frac{4}{\pi} \prn*{1 + c \eta} OPT,
\end{align*}
where the one before last equality holds by the fact rotation is an isometry, and the last equality is due to the definition of $\eta$.

For the robustness upper bound, by \cref{eq:ALG-theta-tmp-bound-learning-aug} and \cref{lem:geometric-median-approx-robustness}:
\begin{align*}
    ALG_\theta & \le \sum_{i\in[n]} \norm{R_\theta p_i - \hat{h}_\theta}_1 = \scost_1(R_\theta P, \hat{h}_\theta) \le \prn*{1 + \frac{2c}{1-c}}\scost_1(R_\theta P, h'_\theta) \\
    & \le \prn*{1 + \frac{2c}{1-c}}\scost_1(R_\theta P, R_\theta g) = \prn*{1 + \frac{2c}{1-c}} \sum_{p \in P} \norm{R_\theta (p - g)}_1,
\end{align*}
where the last inequality is due to the fact that $h'_\theta$ is the $\scost_1$ minimizer solution of dataset $R_\theta P$.
By taking expectation and using \cref{lem:expected-abs-value-of-each-rotated-component} we get:
\[
    \E\brk*{ALG_\theta} \le \prn*{1 + \frac{2c}{1-c}} \E\brk*{\sum_{p \in P} \norm{R_\theta (p - g)}_1} = \prn*{1 + \frac{2c}{1-c}} \frac{4}{\pi} \sum_{p \in P} \norm{p - g}_2 = \frac{4}{\pi} \frac{1+c}{1-c} OPT.
\]

\end{proof}

\paragraph{$\mac$ input predictions settings} In this setting \citet{barak2024mac} give a strategyproof mechanism that chooses the best option between using only the predictions (achieving $1 + \frac{4\delta}{1-2\delta}$) and simply using the $\cwmed$ mechanism (achieving $\sqrt{2}$), which implies an approximation ratio of at most $\min\crl*{1 + \frac{4\delta}{1-2\delta}, \sqrt{2}}$. By replacing the $\cwmed$ with our $\rrcwmed$ we immediately obtain a better result as a direct corollary of \cref{thm:alg-R2}: $\min\crl*{1 + \frac{4\delta}{1-2\delta}, \frac{4}{\pi}}$.

\section{The Limitation Of Generalized Random Dictator Mechanisms}\label{sec:lower-bounds}
In this section we show the limitations of generalized random dictator mechanisms.

\begin{definition}[Generalized Random Dictator (GRD) Mechanism]\label{def:GRD}
    We say a random mechanism $M$ is GRD (Generalized Random Dictator) if it may only output one of the points reported by the agents. That is, for any $P \subseteq \R^d$: $M(P) \in P$ w.p. $1$.
\end{definition}

Note that the above definition holds both for truthful and non-truthful GRD mechanisms.

\begin{thm}\label{thm:lb-rand-dict-R2}
    Any GRD mechanism for the utilitarian facility location problem in $\R^2$ has an approximation ratio of at least $\frac{4}{\pi} \approx 1.27$.
\end{thm}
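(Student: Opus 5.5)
Following the outline in the techniques section, I will use a rotation-invariant instance: $n$ agents placed at the vertices of a regular $n$-gon on the unit circle, $p_i = (\cos(2\pi i/n), \sin(2\pi i/n))$ for $i\in[n]$. The plan has three steps: upper bound $OPT$, lower bound the cost of \emph{every} reported point, and conclude for an arbitrary GRD mechanism.

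\textbf{Step 1: upper bound on $OPT$.} Placing the facility at the origin costs exactly $n$, since every agent is at distance $1$. Hence $OPT \le n$. (By symmetry the origin is in fact the geometric median, but only the upper bound is needed.)

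\textbf{Step 2: cost of any reported point.} By Definition~\ref{def:GRD}, a GRD mechanism outputs some $p_j \in P$ with probability $1$. By rotational symmetry of the instance, the social cost $\scost(P,p_j)$ is the same for every $j$. Call this common value $S_n$. Since $\|p_i - p_j\|_2 = 2\abs{\sin(\pi(i-j)/n)}$, we get
\[
S_n = \sum_{k=0}^{n-1} 2\abs*{\sin\prn*{\frac{\pi k}{n}}}.
\]
This is a Riemann sum: $\frac{1}{n}S_n \to \int_0^1 2\sin(\pi x)\,dx = \frac{4}{\pi}$. The sum can also be evaluated exactly, $S_n = 2\cot\prn*{\frac{\pi}{2n}}$, which makes the limit immediate because $\cot(x) \sim 1/x$.

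\textbf{Step 3: conclusion.} The expected cost of any GRD mechanism $M$ is therefore exactly $S_n$, whatever distribution over the reported points it uses. Thus
\[
\frac{\E[\scost(P,M(P))]}{OPT} \ge \frac{S_n}{n} \xrightarrow{n\to\infty} \frac{4}{\pi},
\]
and taking the supremum over instances gives $\alpha(M) \ge \frac{4}{\pi}$.

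The step that needs care is Step 2. We need the Riemann sum estimate (or the exact cotangent identity), together with the observation that the lower bound holds for each fixed $p_j$ and hence for any mixture over them. Note that strategyproofness is never used, so the bound holds for all GRD mechanisms.

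If one prefers to avoid the exact identity, use the Riemann sum bound for the concave function $\sin(\pi x)$ on $[0,1]$: it gives $S_n/n \ge 4/\pi - O(1/n)$, which suffices for the limit.
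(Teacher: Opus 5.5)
Your proof is correct and is essentially the paper's proof: the same regular $n$-gon instance on the unit circle, the bound $OPT \le n$ from the origin, the chord-length sum evaluated exactly as $2\cot(\pi/(2n))$, and the observation that every GRD mechanism, whatever mixture it uses, pays this same cost. One small aside on your optional Riemann-sum remark: concavity of $\sin(\pi x)$ gives that the trapezoidal sum is at most the integral, which is the wrong direction, so the bound $S_n/n \ge 4/\pi - O(1/n)$ should instead be justified by the Lipschitz (or second-derivative) error estimate; this does not affect your main argument.
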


\begin{proof}

Let $n \gg 1$.
Consider the instance $P = \crl*{p_0, \ldots ,p_{n-1}}$ where $p_k = \prn*{cos(\frac{2\pi k}{n}), sin(\frac{2\pi k}{n})}$ for $k \in \crl*{0,\ldots,n-1}$.
The point $y=0$ has cost $\scost(P,y) = \sum_{i=0}^{n-1} 1 = n$ and thus the optimal solution has cost at most $n$: $OPT \le n$.
Let $A$ be a GRD mechanism, and let $ALG$ be its cost.
We show that for any dictator choice $p_k \in P$ of $A$, the social cost of this solution is approximately $\frac{4}{\pi} n$.

Assume w.l.o.g. that $p_k = p_0$ (due to symmetry the cost is the same). The distance between $p_0$ and $p_j$ separated by the angle $\frac{2\pi j}{n}$ is the chord length:
\[
    \norm{p_0 - p_j}_2 = 2 sin\prn*{\frac{\pi j}{n}}.
\]

Therefore $ALG = 2 \sum_{j=1}^{n-1} sin\prn*{\frac{\pi j}{n}} =2 \ cot(\frac{\pi}{2n})$ where the last equality follows from the identity:
\begin{clm}\label{clm:trigo-identity}
    \[
        \sum_{k=1}^{n-1}\sin \prn*{\frac{\pi k}{n}} = \cot \prn*{\frac{\pi}{2n}}.
    \]
\end{clm}
The proof of \cref{clm:trigo-identity} follows from standard trigonometric identities and the sum of a telescoping series, and we provide it in \cref{sec:missing-proofs-rand-dict-lb}.

By the Taylor expansion of $cot$, the cost of $A$ is:
\[
ALG = 2 \ cot(\frac{\pi}{2 n}) = 2 \prn*{\frac{2n }{\pi} - O\prn*{\frac{1}{n}}} = \frac{4n}{\pi} - O\prn*{\frac{1}{n}}.\]

Since the above holds for any $p_k$, the above equation yields the desired by taking $n \to \infty$.

\end{proof}

We now show a stronger lower bound $d \gg 1$.

\begin{thm}\label{thm:lb-rand-dict-Rd}
Let $d > 2$.  Any GRD mechanism for the utilitarian facility location problem in $\R^d$ has an approximation ratio of at least $\sqrt{2}\prn*{1 - O\prn*{\frac{1}{d}}}$.
\end{thm}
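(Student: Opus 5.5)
We follow the rotation-invariant template of \cref{thm:lb-rand-dict-R2}, now via the probabilistic method. We sample $n$ points $p_1,\dots,p_n$ i.i.d.\ uniformly on the unit sphere $S^{d-1}$ and take $P=\crl*{p_1,\ldots,p_n}$, with $n$ chosen large (for instance $n=\exp(\Theta(d))$ or any polynomial in $d$ that beats the union bound below). Placing the facility at the origin costs exactly $n$, so $OPT\le n$. Every GRD mechanism outputs some $p_i\in P$ with probability one, so its expected cost is a convex combination of the dictator costs $\scost(P,p_i)=\sum_{j\ne i}\norm{p_i-p_j}_2$. It therefore suffices to exhibit one instance on which $\min_i \scost(P,p_i)\ge \sqrt2(1-O(1/d))\,n$.

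We first compute the expectation. For fixed $p_i$ and independent uniform $p_j$, we have $\norm{p_i-p_j}_2=\sqrt{2-2\langle p_i,p_j\rangle}$, and $X:=\langle p_i,p_j\rangle$ has mean $0$ and variance $1/d$. The function $\sqrt{2-2x}$ is concave on $[-1,1]$, so a second-order Taylor bound gives $\sqrt{2-2x}\ge \sqrt2(1-x/2-cx^2)$ for a constant $c$ (this holds on all of $[-1,1]$ if $c$ is chosen appropriately, for example $c=1/2$, checked at the endpoints). Hence $\E\norm{p_i-p_j}_2\ge \sqrt2(1-O(1/d))$.

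Next we concentrate and take a union bound. Conditioned on $p_i$, the other $n-1$ distances are i.i.d.\ and bounded in $[0,2]$. Hoeffding's inequality then gives
\[
\Pr\brk*{\sum_{j\ne i}\norm{p_i-p_j}_2<(n-1)\sqrt2(1-O(1/d))-\eps n}\le e^{-\Omega(\eps^2 n)}.
\]
We take $\eps=1/d$ and $n\ge d^3$, so that a union bound over the $n$ choices of $i$ still leaves positive probability (indeed probability tending to $1$). Hence some instance $P$ satisfies $\scost(P,p_i)\ge \sqrt2(1-O(1/d))\,n - O(1)$ for every $i$. Dividing by $OPT\le n$ and letting $n\to\infty$ for fixed $d$ removes the $O(1/n)$ terms and yields the ratio $\sqrt2(1-O(1/d))$.

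The step that needs the most care is the expectation bound with the correct $O(1/d)$ rate. The first-order term vanishes by symmetry, so the loss comes entirely from the $x^2$ term, and we need $\E[X^2]=1/d$ exactly; this follows because $\sum_k \E[p_{j,k}^2]=1$ and the coordinates are exchangeable. We also need a curvature constant valid near $x=1$, where $\sqrt{2-2x}$ is not smooth. We handle this with the global inequality $\sqrt{1-y}\ge 1-y/2-y^2/2$ for $y\in[-1,1]$: both sides equal $0$ at $y=1$, and a simple check confirms the inequality on the whole interval. The remaining estimates are routine.
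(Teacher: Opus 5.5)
Your argument is correct and is essentially the paper's proof: i.i.d.\ uniform points on $S^{d-1}$, $OPT\le n$ via the origin, the GRD cost as a convex combination of dictator costs, Hoeffding plus a union bound over the dictators, the probabilistic method, and $n\to\infty$ with $d$ fixed (the limit also covers the fact that $n\ge d^3$ alone does not make the union bound work for small $d$). The only difference is that you lower-bound $\mathbb{E}\|X-e_1\|_2$ via the global inequality $\sqrt{1-y}\ge 1-\frac{y}{2}-\frac{y^2}{2}$ on $[-1,1]$, where the paper uses the full binomial-series expansion with all even sphere moments; your inequality is justified not by concavity but by substituting $s=\sqrt{1-y}$, which reduces it to $s(s-1)^2(s+2)\ge 0$, and it needs only $\mathbb{E}[X]=0$ and $\mathbb{E}[X^2]=1/d$, giving the weaker but sufficient bound $\sqrt{2}\bigl(1-\frac{1}{2d}\bigr)$.
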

\begin{proof}

The idea is similar to the one of \cref{thm:lb-rand-dict-R2}, but we use the probabilistic method \cite{alon2016probabilistic} rather than exhibiting an explicit worst-case instance: instead of choosing evenly spaced points, we sample i.i.d.\ points from the unit sphere. We will lower bound the algorithm's cost with high probability and upper bound the optimum deterministically; by the probabilistic method there exists a fixed multi-set achieving the same bounds.

Let $n \gg 1$. Let us observe the instance of points $P = \crl*{Y_1, \ldots,Y_{n+1}} \subset \R^d$ where $Y_1, \ldots,Y_{n+1}$ are i.i.d.\ random variables such that for each $i \in [n+1]$: $Y_i \sim U\crl*{S^{d-1}}$ is chosen randomly from the unit sphere.
Let $A$ be a GRD mechanism with cost $ALG$.

Fix $j \in [n+1]$. Without loss of generality assume $Y_j = e_1 = (1,0,\ldots,0) \in \R^d$: indeed, condition on $Y_j$ and apply an orthogonal map $Q$ with $QY_j=e_1$; the joint law of $\crl*{QY_i}_{i\ne j}$ is the same as that of $\crl*{Y_i}_{i\ne j}$ and all pairwise distances are preserved. Consider $ALG_j$, the sum of distances of the points of $P$ from point $Y_j = e_1$. This is the cost of mechanism $A$ in case it returns point $Y_j$. Then
\[
\E[ALG_j] = \E_P [ALG_j] = \E_P \brk*{\sum_{Y_i \in P} \norm{Y_i - e_1}_2} = n \ \E_{X \sim U\prn*{S^{d-1}}}\brk*{\norm{X - e_1}_2}. \numberthis \label{eq:expected-sum-of-distances-from-e1}
\]

The following lemma bounds $\E\brk*{\norm{X-e_1}_2}$ as a function of the even moments of $X_1$ where $X = (X_1,\ldots,X_d) \sim U\prn*{S^{d-1}}$:

\begin{lem}[Expected distance between random sphere vectors]\label{lem:random-sphere-vector-distance-from-e1}
Let $X=(X_1,\dots,X_d) \in \R^d$ s.t. $X \sim U\prn*{S^{d-1}}$ is a uniform random variable on the unit sphere. Then:
\[
\E\,\norm{X-e_1}_2
=\sqrt{2}\!\left(1-\frac18\,\mathbb{E}[X_1^2]
-\frac{5}{128}\,\mathbb{E}[X_1^4]
-\frac{21}{1024}\,\mathbb{E}[X_1^6]-\cdots\right).
\]
\end{lem}
We defer the proof of \cref{lem:random-sphere-vector-distance-from-e1} to \cref{sec:proof-of-expected-rand-sphere-vectors-lemma}.

The next lemma states a known fact about the even moments of $X_1$ (see the proof of Proposition 2.5 in \cite{meckes2019random}):

\begin{lem}[Even moments on the sphere]\label{lem:even-moments-sphere}
For every integer $k\ge 0$,
\[
\mathbb{E}[X_1^{2k}]
=\prod_{j=1}^{k}\frac{2j-1}{d+2j-2}
=\frac{1\cdot3\cdot5\cdots(2k-1)}{d(d+2)\cdots(d+2k-2)}.
\]
\end{lem}

Plugging \cref{lem:even-moments-sphere} into \cref{lem:random-sphere-vector-distance-from-e1} gives:
\[
\mathbb{E}\,\norm{X-e_1}_2
=\sqrt{2}\left(
1-\frac{1}{8d}
-\frac{15}{128\,d(d+2)}
-\frac{315}{1024\,d(d+2)(d+4)}-\cdots
\right).
\]
In particular,
\[
\E \brk*{\norm{X-e_1}_2} = \sqrt{2}\prn*{1-\frac{1}{8d}}+O\!\prn*{\frac{1}{d^2}}.
\]

Denote $\mu_d := \E \brk*{\norm{X-e_1}_2} = \sqrt{2}\prn*{1-\frac{1}{8d}}+O\!\prn*{\frac{1}{d^2}}$. By \cref{eq:expected-sum-of-distances-from-e1} we have
\[
    \E[ALG_j] \;=\; n\,\mu_d.
\]

$ALG_j$ is a sum of independent random variables that lie in $[0,2]$. By Hoeffding's inequality \cite{hoeffding1963probability}, for any $t>0$,
\[
\Pr\prn*{ALG_j \le n\mu_d - t} \;\le\; \exp\prn*{-\frac{t^2}{2n}}.
\]
By a union bound over $j\in[n+1]$,
\[
\Pr\prn*{\min_{j\in[n+1]} ALG_j \le n\mu_d - t} \;\le\; (n+1)\,\exp\prn*{-\frac{t^2}{2n}}.
\]
Taking $t=\sqrt{2n}\log\prn*{n+1}$ gives, with probability at least $1-\tfrac{1}{n+1}$,
\[
\min_{j\in[n+1]} ALG_j \;\ge\; n\mu_d - t.
\]

Since a GRD mechanism returns one of the points $Y_j$, $\E_R\brk*{ALG}$ (expectation over the internal randomness $R$ of $A$) is a convex combination of the $ALG_j$, and hence
\[
\E_R\brk*{ALG} \;\ge\; \min_{j\in[n+1]} ALG_j.
\]
Therefore, w.p. at least $1-\tfrac{1}{n+1}$ over the draw of $P$, $\E_R\brk*{ALG} \ge n\mu_d - \sqrt{2n}\log\prn*{n+1}$.
By the probabilistic method, there exists a deterministic multi-set $P$ for which the above inequality holds.

On the other hand, in any such instance $\mathrm{OPT} \le \abs{P}=n+1$ (placing the facility at the origin yields cost $n+1$). Hence for that $P$,
\[
\frac{\E_R\brk*{ALG}}{\mathrm{OPT}}
\ge
\frac{n\mu_d - \sqrt{2n}\log\prn*{n+1}}{n+1}
=
\mu_d \frac{n}{n+1}  - O \prn*{\tfrac{\log n}{\sqrt{n}}}.
\]
Letting $n\to\infty$ (with $d$ fixed) and recalling $\mu_d=\sqrt{2}\prn*{1-\frac{1}{8d}}+O \prn*{\frac{1}{d^2}}$ proves
\[
\frac{\E_R\brk*{ALG}}{\mathrm{OPT}}
\;\ge\;
\sqrt{2}\prn*{1 - O\prn*{\frac{1}{d}}}.
\]
This completes the proof.
\end{proof}

\section{Acknowledgments}\label{sec:ack}

We are grateful to Inbal Talgam Cohen for useful discussions and support.

We would like to thank the anonymous reviewers for their helpful feedback that improved the paper.

This work was supported by the European Research Council (ERC) under the European Union’s Horizon 2020 research and innovation program (grant agreement No. 101077862, project ALGOCONTRACT), and by the Israel Science Foundation (grant No. 3331/24).

\newpage
\appendix
\section*{Organization of Appendices}


In \cref{sec:missing-analysis-rand-mech} we give missing proofs from \cref{sec:rand-mech}.

In \cref{sec:appendix-learning-aug} we give missing proofs from \cref{sec:predictions-learning-aug}.

In \cref{sec:missing-proofs-rand-dict-lb} we give missing proofs from \cref{sec:lower-bounds}.

In \cref{sec:proj-median-connection} we explain the connection between the approximation ratio of $\rrcwmed$ and the one of $\projmed$.

\section{Missing analysis from \cref{sec:rand-mech}}\label{sec:missing-analysis-rand-mech}

\subsection{Proof of \cref{thm:rrcwmed-expected-approximation-ratio-Rd}}\label{sec:proof-of-thm:rrcwmed-expected-approximation-ratio-Rd}

\begin{proof}
For the upper bound, an approach similar to the one we've taken in \cref{lem:R2-ub} for $\R^2$ may lead to a slightly better $1.5$ bound for $\R^3$, but it does not scale well: it leads to a ratio of $\Omega(\sqrt{d})$ for large $d$ values (e.g. see the $\Omega(\sqrt{d})$ analysis for the projection median approximation of \cite{basu2012projection}).

The $\sqrt{6 \sqrt{3} - 8}$ upper bound follows directly from the approximation ratio upper bound of $\cwmed$ (see \cref{prop:cwmed-properties}): \citet{gravin2025approximation} show that the approximation ratio of $\cwmed$ is bounded by $\sqrt{6 \sqrt{3} - 8}$ by analyzing a factor revealing optimization problem. Since $\rrcwmed$ is $\cwmed$ with random rotations, its expected cost is upper bounded by the worst-case cost of $\cwmed$.

We now show that a construction analysis which closes most of the remaining gap. While the idea in the construction is very similar to our two dimensional one (two clusters and an outlier construction), the analysis requires ``heavier machinery'' --- we use a result of \citet{diaconis1987dozen} in random matrix theory which shows that most of the coordinates of the rotated points roughly (up to small total variation distance) follow standard i.i.d. Gaussian distributions for large $d$ values. We use this to get a lower bound on the expected squared distance of a cluster point from the resulting median. Then we use concentration of measure results in high-dimensional probability to show that the variance of the distance is low, to deduce that the expected distance is low as well.

Let $d \gg 1$ and let $k = \floor{\sqrt{d}}$, $M = \sqrt{k}$. The instance $P$ we observe contains:
$k$ copies of $e_1$, $k$ copies of $e_2$ and a single copy of $-M(e_1 + e_2)$. 

Like before, $OPT$ is upper bounded by $\sqrt{2} k + O(\sqrt{k})$. Our goal is now to show that $\E[ALG] \approx 2k$, which implies that the approximation ratio is roughly $\sqrt{2}$.

Let $R \in SO(d)$ be the uniformly sampled rotation matrix, let $a = R e_1$, $b = R e_2$ and $c = R (-M(e_1+e_2)) = -M(a+b)$. Let $m$ be the $\cwmed$ of the rotated multi-set of points $R P$. 
For any $j \in [d]$: Let $m_j$ be the median in coordinate $j$. In coordinate $j$ the distinct locations of the rotated points are $a_j, b_j, c_j$.

Since \[
    \E[ALG] \ge k \prn*{\E\brk*{\norm{a - m}_2} + \E\brk*{\norm{b - m}_2}}, \numberthis \label{eq:alg-expected-val-sqrt2-lb-of-randcwmed}
\] we will show that $\E\brk*{\norm{a - m}_2}$ and $\E\brk*{\norm{b-m}_2}$ are approximately $1$, which implies the desired (as $OPT = k\sqrt{2} + O(\sqrt{k})$).

$\E\brk*{\norm{a-m}_2} = \E\brk*{\norm{b-m}_2}$ due to symmetry, and thus it is enough to show that $\E\brk*{\norm{a-m}_2} \approx 1$. More accurately, we show that $\E\brk*{\norm{a-m}_2} \ge 1 - O\prn*{\frac{1}{\sqrt{k}}}$.

Let $D_j$ denote the ``bad'' (rare) event where $c_j$ lies between $a_j$ and $b_j$.

We first give the two following technical claims:

\begin{clm}\label{clm:Dj-equiv-condition}
    $D_j$ occurs iff \begin{equation}
    \abs{a_j + b_j} \le \frac{\abs{a_j - b_j}}{2M + 1}.
\end{equation}
\end{clm}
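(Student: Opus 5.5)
We prove \cref{clm:Dj-equiv-condition} by a direct one-dimensional computation in coordinate $j$. Write $x := a_j$, $y := b_j$, and note that $c_j = -M(x+y)$. Introduce the midpoint $u := \frac{x+y}{2}$ and the half-gap $w := \frac{\abs{x-y}}{2}$. The interval between $a_j$ and $b_j$ is then exactly $[u-w, u+w]$. So $D_j$ holds iff $\abs{c_j - u} \le w$.

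Next we compute $c_j - u$ in terms of $s := x+y$:
\[
c_j - u = -Ms - \tfrac{s}{2} = -\tfrac{(2M+1)s}{2}.
\]
Hence $D_j$ holds iff
\[
\tfrac{(2M+1)\abs{s}}{2} \le \tfrac{\abs{x-y}}{2}.
\]
Since $2M+1>0$, this is equivalent to
\[
\abs{a_j+b_j} \le \frac{\abs{a_j-b_j}}{2M+1},
\]
which is the claimed condition.

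Each step is an equivalence, so both directions follow at once. The degenerate case $a_j=b_j$ needs a remark only. There the interval is the single point $\{a_j\}$, and the condition forces $a_j+b_j=0$. Then $a_j=b_j=0$ and $c_j=0$ as well, so the two sides agree.

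There is no real obstacle here. The only point requiring care is to take the interval between $a_j$ and $b_j$ to be closed, consistent with the definition $\min\{a_j,b_j\}\le c_j\le \max\{a_j,b_j\}$. This makes the inequality non-strict, matching the claim.
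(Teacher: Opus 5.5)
Your proof is correct and is essentially the paper's argument. The paper tests $(a_j-c_j)(b_j-c_j)\le 0$ with $s=a_j+b_j$, $d=a_j-b_j$ and $c_j=-Ms$, and this product equals $\frac{((2M+1)s)^2-d^2}{4}$, which is your condition $|c_j-u|\le w$ after squaring. Your separate remark on the degenerate case $a_j=b_j$ is harmless but unnecessary, because $w=0$ is already covered by the general equivalence.
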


\begin{clm}\label{clm:tech-clm-event-Dj}
    \[
    \prn*{a_j - m_j}^2 + \prn*{b_j - m_j}^2 = (a_j - b_j)^2 + 2\prn*{\prn*{m_j - \frac{a_j + b_j}{2}}^2 - \frac{\prn*{a_j-b_j}^2}{4}} \indic{D_j}.
    \]
\end{clm}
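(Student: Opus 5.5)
The claim follows from one algebraic identity that holds for every real $m_j$, together with the observation that the correction term vanishes whenever $m_j\in\{a_j,b_j\}$. The plan is to prove the identity first and then use the order-statistic structure of the coordinate-$j$ multiset to handle the complement of $D_j$.

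\emph{Step 1 (identity valid for every $m_j$).} Write $\mu_j:=\frac{a_j+b_j}{2}$ and $\delta_j:=\frac{a_j-b_j}{2}$, so that $a_j-m_j=(\mu_j-m_j)+\delta_j$ and $b_j-m_j=(\mu_j-m_j)-\delta_j$. Expanding the squares, the cross terms cancel:
\[
(a_j-m_j)^2+(b_j-m_j)^2 = 2(m_j-\mu_j)^2+2\delta_j^2 = 2(m_j-\mu_j)^2+\frac{(a_j-b_j)^2}{2}.
\]
Rewriting $\frac{(a_j-b_j)^2}{2}=(a_j-b_j)^2-2\cdot\frac{(a_j-b_j)^2}{4}$ gives
\[
(a_j-m_j)^2+(b_j-m_j)^2=(a_j-b_j)^2+2\left(\left(m_j-\frac{a_j+b_j}{2}\right)^2-\frac{(a_j-b_j)^2}{4}\right)
\]
for every real $m_j$. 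In particular, this is exactly the claim on the event $D_j$, where the indicator equals $1$.

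\emph{Step 2 (the complement of $D_j$).} It remains to show that the bracket vanishes when $D_j$ fails. The bracket equals zero exactly when $|m_j-\mu_j|=|\delta_j|$, that is, when $m_j\in\{a_j,b_j\}$. So it suffices to show that $m_j\in\{a_j,b_j\}$ off $D_j$.

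The coordinate-$j$ multiset consists of $k$ copies of $a_j$, $k$ copies of $b_j$ and one copy of $c_j$, for $2k+1$ values in total. Hence $m_j$ is the $(k+1)$-st order statistic, as in the planar proof. If $D_j$ fails, then $c_j<\min\{a_j,b_j\}$ or $c_j>\max\{a_j,b_j\}$.
\begin{itemize}
\item If $c_j<\min\{a_j,b_j\}$, the sorted list is $c_j$, then the $k$ copies of $\min\{a_j,b_j\}$, then the $k$ copies of $\max\{a_j,b_j\}$. The $(k+1)$-st entry is therefore $\min\{a_j,b_j\}$.
\item If $c_j>\max\{a_j,b_j\}$, the same reasoning gives $m_j=\max\{a_j,b_j\}$.
\end{itemize}
In both cases $m_j\in\{a_j,b_j\}$, so the bracket is zero. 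The left-hand side then equals $(a_j-b_j)^2$, which matches the right-hand side with indicator $0$.

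\emph{Ties.} Ties such as $a_j=b_j$, or $c_j$ coinciding with an endpoint, do not break the argument. Step 1 holds for all $m_j$ regardless. In Step 2 the only requirement is that $m_j$ lie in $\{a_j,b_j\}$ whenever $c_j$ is not strictly between $a_j$ and $b_j$, and boundary coincidences still leave the $(k+1)$-st order statistic in $\{a_j,b_j\}$. Moreover, because the identity of Step 1 is valid for every $m_j$, whether boundary cases are assigned to $D_j$ or to its complement does not affect the claim. I do not expect a genuine obstacle here: the only care needed is the order-statistic bookkeeping in Step 2.
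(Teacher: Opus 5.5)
Your proof is correct and follows essentially the same route as the paper. Both split into cases on $D_j$: off $D_j$ the order-statistic argument puts $m_j\in\{a_j,b_j\}$, and on $D_j$ one expands around the midpoint $\frac{a_j+b_j}{2}$. The one difference is that the paper first substitutes $m_j=c_j$ on $D_j$ and then expands, whereas you prove the identity for arbitrary $m_j$; this makes the $D_j$ case, including ties, immediate, and it avoids the factor of $2$ dropped in one intermediate line of the paper's displayed computation.
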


The proofs of \cref{clm:Dj-equiv-condition} and \cref{clm:tech-clm-event-Dj} are quite straightforward, and we defer them to \cref{sec:proof-of-clm-d-j-equiv-cond}, \cref{sec:proof-clm-tech-clm-event-Dj}.

We use \cref{clm:tech-clm-event-Dj} and get:
\begin{align*}
    \norm{a-m}_2^2 + \norm{b-m}_2^2 & = \sum_{j=1}^d \prn*{a_j - m_j}^2 + \prn*{b_j - m_j}^2 \\
    & =\sum_{j \in [d]} (a_j - b_j)^2 \\
    & \ \ \ \ + 2 \sum_{j \in [d]} \prn*{\prn*{m_j - \frac{a_j + b_j}{2}}^2 - \frac{\prn*{a_j-b_j}^2}{4}} \indic{D_j} \\
    & \ge 2 - \sum_{j \in [d]} \prn*{ \frac{\prn*{a_j-b_j}^2}{2}} \indic{D_j},
\end{align*}

where the inequality is due to the fact that $\prn*{m_j - \frac{a_j + b_j}{2}}^2 \ge 0$ and the fact that $\norm{a-b}_2^2 = \norm{a}_2^2 - 2 \langle a,b\rangle + \norm{b}_2^2 = 2$ as $\norm{a}_2 = \norm{b}_2 = 1$ and $a,b$ are orthogonal.

Since $\E\brk*{\norm{a-m}_2^2} = \E\brk*{\norm{b-m}_2^2}$ (due to symmetry), the above inequality yields:
\begin{equation}\label{eq:bound-norm-of-a-minus-m}
    \E\brk*{\norm{a-m}_2^2} \ge 1 - \frac{1}{2} \E \brk*{\sum_{j \in [d]} (a_j - b_j)^2 \indic{D_j}} = 1 - \frac{1}{2} \sum_{j \in [d]} \E\brk*{(a_j - b_j)^2 \indic{D_j}}.
\end{equation}

In the next lemma, we show that the last term in \cref{eq:bound-norm-of-a-minus-m} is negligible:

\begin{lem}\label{lem:negligible-a_j-minus-b_j}
For any $j \in [d]$: 
\[
    \sum_{j \in [d]} \E\brk*{(a_j - b_j)^2 \indic{D_j}} = O\prn*{\frac{1}{k}}.
\]
\end{lem}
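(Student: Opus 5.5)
The plan is to reduce each summand to a two-dimensional Gaussian integral, using the normalized coordinates introduced in the proof sketch of \cref{thm:rrcwmed-expected-approximation-ratio-Rd}. For each $j\in[d]$ set $U_j:=\sqrt{d/2}\,(a_j+b_j)$ and $V_j:=\sqrt{d/2}\,(a_j-b_j)$, so that $(a_j-b_j)^2=\frac{2}{d}V_j^2$. By \cref{clm:Dj-equiv-condition}, $D_j$ holds iff $|U_j|\le t|V_j|$ with $t:=\frac{1}{2M+1}$. Hence
\[
\sum_{j\in[d]}\E\bigl[(a_j-b_j)^2\indic{D_j}\bigr]=\frac{2}{d}\sum_{j\in[d]}\E\bigl[V_j^2\,\indic{|U_j|\le t|V_j|}\bigr].
\]
Since the law of $R$ is invariant under left multiplication by permutation matrices, every row of $R$ has the same distribution. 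So all $d$ summands are equal, and the sum equals $2\,\E[V_1^2\indic{|U_1|\le t|V_1|}]$.

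The second step replaces $(U_1,V_1)$ by an independent standard Gaussian pair $(G_1,G_2)$. By Diaconis--Freedman \cite{diaconis1987dozen}, the law of $\sqrt d\,(R_{11},R_{12})$ is $O(1/d)$-close in total variation to $N(0,I_2)$. Since $(U_1,V_1)$ is an orthogonal transform of this vector, the same holds for it. The integrand $V_1^2$ is unbounded, so I will truncate at level $V_1^2\le L$ and pay $L\cdot O(1/d)$ on the truncated part. The tail $\E[V_1^2\indic{V_1^2>L}]$ will be controlled using $|V_1|\le\sqrt d$ together with sub-Gaussian tails of sphere coordinates. Taking $L=\Theta(\log d)$ gives an error of $\tilde O(1/d)$, which is $O(1/k^2)$ up to logarithms because $d=\Theta(k^2)$.

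The third step is the Gaussian computation, and this is where I expect the real obstacle. The argument must show $\E\bigl[G_2^2\,\indic{|G_1|\le t|G_2|}\bigr]=O(1/k)$. In polar coordinates $(G_1,G_2)=\rho(\sin\psi,\cos\psi)$ the radius and angle are independent. The event $D_1$ then becomes $|\tan\psi|\le t$, and on that event $G_2^2\approx\rho^2$. A naive estimate therefore gives $\E[\rho^2]\cdot\Pr(|\tan\psi|\le t)=\Theta(t)=\Theta(1/\sqrt k)$, which is too weak for the statement as worded. Reaching $O(1/k)$ requires extracting one more factor of $t$. The proof sketch does this by working with the quantity $\E[G_1^2\indic{|G_1|\le t|G_2|}]=O(t^3)=O(1/M^3)$: the small coordinate is at most $t|G_2|$ on the event, which supplies the extra $t^2$.

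My first attempt will be to revisit \cref{clm:tech-clm-event-Dj} and check whether, on $D_j$, the correction term $\frac{(a_j-b_j)^2}{4}-\bigl(m_j-\frac{a_j+b_j}{2}\bigr)^2$ can be bounded by a multiple of $(a_j+b_j)^2$ rather than $(a_j-b_j)^2$. On $D_j$ the median is $m_j=c_j=-M(a_j+b_j)$, so this correction is exactly what gets weighted. A bound in terms of $(a_j+b_j)^2=\frac{2}{d}U_j^2$ would yield the $G_1^2$ version, and summing over $j$ would give $2\cdot O(1/M^3)+\tilde O(1/d)$. I must flag honestly that this may fail: the ratio of the two coordinates is $\Theta(t)$ on $D_j$, so $(a_j+b_j)^2$ is smaller than $(a_j-b_j)^2$ by a factor $\Theta(t^2)$. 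The correction term, however, is generally of order $(a_j-b_j)^2$ even when $(a_j+b_j)^2$ is $\Theta(t^2)$ smaller. So the replacement could break precisely where the extra $t^2$ is needed. If it does break, the literal integrand $(a_j-b_j)^2\indic{D_j}$ only admits the $O(t)$ bound, and the statement would need the weakened form; the remaining steps above go through regardless.
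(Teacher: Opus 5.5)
Your proposal does not establish the displayed bound, but the gap is in the statement rather than in your reasoning. The lemma is false as written, and the obstacle you hit in your third step is exactly why.

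The $\Theta(t)$ estimate you call naive is sharp, and in fact exact. You do not even need Diaconis--Freedman. Write $(U_1,V_1)=\sqrt d\,(\langle r,f_1\rangle,\langle r,f_2\rangle)$ with $r=R^\top e_1$ uniform on $S^{d-1}$ and $f_{1,2}=(e_1\pm e_2)/\sqrt2$ orthonormal. Then for every $d$ the law of $(U_1,V_1)$ is invariant under rotations of the plane, with $\mathbb{E}[U_1^2+V_1^2]=2$. Writing $(U_1,V_1)=\rho(\cos\psi,\sin\psi)$ with $\psi$ uniform and independent of $\rho$, your first-step reduction gives
\[
\sum_{j\in[d]}\mathbb{E}\bigl[(a_j-b_j)^2\mathbf{1}\{D_j\}\bigr]
=2\,\mathbb{E}[\rho^2]\,\mathbb{E}\bigl[\sin^2\psi\,\mathbf{1}\{|\cos\psi|\le t|\sin\psi|\}\bigr]
=\frac{4}{\pi}\Bigl(\arctan t+\frac{t}{1+t^2}\Bigr)=\frac{8t}{\pi}+O(t^3).
\]
For $t=1/(2\sqrt k+1)$ this is asymptotic to $4/(\pi\sqrt k)$, not $O(1/k)$. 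Your own remark already gives the matching lower bound: on $D_1$ one has $V_1^2\ge\rho^2/(1+t^2)$, and $\Pr(D_1)=\frac{2}{\pi}\arctan t$.

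The paper's proof breaks exactly where you located the missing factor $t^2$. The proof of \cref{clm:convergence-to-gaussians} works with $g(u,v)=v^2\mathbf{1}\{|u|\le t|v|\}$, so the correct Gaussian comparison quantity is $\mathbb{E}[G_2^2\mathbf{1}\{|G_1|\le t|G_2|\}]$. The claim is instead stated, and then computed in polar coordinates, with $G_1^2$, the small coordinate. That quantity equals $\frac{2}{\pi}\bigl(\arctan t-\frac{t}{1+t^2}\bigr)=O(t^3)$, and it is not the one appearing in the lemma.

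Your proposed repair through \cref{clm:tech-clm-event-Dj} also fails, as you feared. Even if it worked, it would modify \cref{eq:bound-norm-of-a-minus-m} rather than prove this lemma. On $D_j$ the correction term is $\frac14\bigl[(a_j-b_j)^2-(2M+1)^2(a_j+b_j)^2\bigr]$. Conditionally on $D_j$, the ratio $(a_j+b_j)/(a_j-b_j)$ is spread roughly uniformly over $[-t,t]$ rather than concentrated near $\pm t$. So this term is typically a constant fraction of $(a_j-b_j)^2$, and the same polar computation gives a total expectation of $\frac{4t}{3\pi}+O(t^3)$. Since \cref{clm:tech-clm-event-Dj} is an identity, this yields $\mathbb{E}\|a-m\|_2^2=1-\frac{4t}{3\pi}+O(t^3)$ exactly. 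Hence even \cref{eq:squared-norm-a-minus-m-lb} fails for $M=\sqrt k$.

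What is true is the weakened form you anticipated, $O(1/\sqrt k)$. With the rest of the paper's argument unchanged, it still gives $\mathbb{E}\|a-m\|_2\ge 1-O(1/\sqrt k)$ and hence a $\sqrt2-o(1)$ lower bound. However, the resulting rate is $O(d^{-1/4})$ rather than the $O(d^{-1/2})$ claimed in \cref{thm:rrcwmed-expected-approximation-ratio-Rd}. You should state this conclusion outright instead of hedging.
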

\begin{proof}
Let $U_j = \frac{\sqrt{d}}{2} \prn*{a_j + b_j}$ and $V_j = \frac{\sqrt{d}}{2} \prn*{a_j - b_j}$. Then \begin{equation}\label{eq:tmp-a_j-minus-b_j}
    \prn*{a_j - b_j}^2 = \frac{2}{d} V_j^2.
\end{equation}

Event $D_j$ is equivalent (via \cref{clm:Dj-equiv-condition}) to:
\begin{equation}\label{eq:tmp-D_j-equiv}
    D_j = \crl*{ \abs{U_j} \le \frac{\abs{V_j}}{2M+1}}.
\end{equation}

By exchangeability of the rows of $R$, every $j$ has the same marginal row law and thus by \cref{eq:tmp-a_j-minus-b_j}, \cref{eq:tmp-D_j-equiv}:
\[ 
    \sum_{j \in [d]} \E\brk*{(a_j - b_j)^2 \indic{D_j}} = 2 \E\brk*{V_1^2 \indic{\abs{U_1} \le \frac{\abs{V_1}}{2M+1}}}.
\]
To get a bound on the last term, we first argue that $U_1$, $V_1$ follow standard i.i.d Gaussian distributions up to negligible total variation distance.
\begin{definition}($\mathrm{TV}$ distance)\label{def:tv-dist}
    For distributions $P,Q$ on a common space $\Omega$, the total variation (or $\mathrm{TV}$) distance between P and Q is:
    \[
    d_{\mathrm{TV}}(P,Q)\ :=\ \sup_{A \in \Omega}\,|P(A)-Q(A)|.
    \]
\end{definition}
Since $R \in SO(d)$ is the uniform rotation matrix we sample, the rows of $R$ are vectors uniformly drawn from the unit sphere $S^{d-1}$ (see \cite{meckes2019random} for more information on random matrix theory). 
\citet{diaconis1987dozen} show that for any constant $k$, the first $k$ coordinates of a vector drawn randomly (uniformly) from the sphere with radius $\sqrt{d}$ behave as independent standard Gaussian random variables ($N(0,1)$) up to a total variation distance of $O(\frac{1}{d})$ (see the result of \citet{diaconis1987dozen} stated as Theorem 2.8 in \cite{meckes2019random}).
Thus, for any row $R_j$ in $R$, it must be that $\sqrt{d} \ R_{j1}, \sqrt{d} \ R_{j2} \stackrel{i.i.d}{\sim} N(0,1)$, up to the $O(\frac{1}{d})$ TV distance. Since $a_j = R_{j1}$, $b_j = R_{j2}$ for any $j \in [d]$, and since $(V_j, U_j) = T(\sqrt{d} \ a_j,\sqrt{d} \ b_j)$ where $T: \R^2 \to \R^2$, $T(x,y) = \frac{1}{\sqrt{2}}(x+y , x-y)$ is an orthogonal (rotation) map, then the same applies for $V_j, U_j$. That is, $V_j, U_j \stackrel{i.i.d}{\sim} N(0,1)$  up to $O(\frac{1}{d})$ TV distance.

The next claim shows that to bound the $\E\brk*{V_1^2 \indic{\abs{U_1} \le \frac{\abs{V_1}}{2M+1}}}$ we may analyze ``pure'' Gaussian variables.

\begin{clm}\label{clm:convergence-to-gaussians}
$V_j, U_j \stackrel{i.i.d}{\sim} N(0,1)$  up to $O\prn*{\frac{1}{d}}$ TV distance (i.e. $d_{TV} \prn*{\mathcal{L}(V_1, U_1), (G_1,G_2)} = O\prn*{\frac{1}{d}}$).
Let $G_1, G_2 \stackrel{i.i.d}{\sim} N(0,1)$, and $t = \frac{1}{2M+1}$. Then:
\[
    \E\brk*{V_1^2 \indic{\abs{U_1} \le \frac{\abs{V_1}}{2M+1}}} = \E\brk*{G_1^2 \indic{\abs{G_1} \le t \abs{G_2}}} + O\prn*{\frac{1}{\sqrt{d}}}.
\]
\end{clm}
We defer the proof of \cref{clm:convergence-to-gaussians} to \cref{sec:proof-of-clm-convergence-to-gauss}.

By \cref{clm:convergence-to-gaussians}, it is enough to show that:
\[
     \E\brk*{G_1^2 \indic{\abs{G_1} \le t \abs{G_2}}} \le O\prn*{\frac{1}{k}},
\]
where $G_1, G_2 \stackrel{i.i.d}{\sim} N(0,1)$, and $t = \frac{1}{2M+1}$.

Indeed, let $G_1, G_2 \stackrel{i.i.d}{\sim} N(0,1)$. $(G_1,G_2)$ is an uncorrelated and zero centered bivariate normally distributed vector, and thus has density of $f(x,y): \R^2 \to \R$ of $f(x,y) = \frac{1}{2\pi} e^{-(x^2+y^2)/2}$. We switch to polar coordinates and get $x = Rcos\Theta$, $y = Rsin(\Theta)$, $R \ge 0, \Theta \in [0,2\pi)$.
By the change-of-variables theorem, for $r \ge 0, \theta \in [0,2\pi)$ we get:
 \[
 f(r,\theta) = \frac{1}{2\pi} e^{-r^2} r = r \ e^{-\frac{r^2}{2}} \ \indic{r \ge 0} \cdot \frac{1}{2\pi} \ \indic{\theta \in [0,2\pi)]} = f_1(r) \cdot f_2(\theta),
 \]
where $f_1(r) = r \ e^{-\frac{r^2}{2}} \ \indic{r \ge 0}$ and $ f_2(\theta) = \frac{1}{2\pi} \ \indic{\theta \in [0,2\pi)]}$, which implies $(G_1,G_2) = (R cos\Theta, Rsin(\Theta)$ where $R^2 \sim \chi_2^2$ and $\Theta \sim \mathrm{Unif}\crl*{[0,2\pi)}$ are independent random variables.
Let $\alpha = arctan(\frac{1}{t})$.

\begin{align*}
    \E\brk*{G_1^2 \indic{\abs{G_1} \le t \abs{G_2}}} &= \E\brk*{R^2 cos^2(\Theta) \cdot \indic{\abs{cos(\Theta)}} \le t \abs{sin(\Theta)}} \\
    & = \E[R^2] \cdot \E\brk*{ cos^2(\Theta) \cdot \indic{\abs{cos(\Theta)}} \le t \abs{sin(\Theta)}} \\
    & = 2 \cdot \frac{1}{2\pi} \int_{0}^{2\pi} cos^2(\theta) \indic{\abs{tan(\theta)} \ge \frac{1}{t}} d\theta \\
    & = \frac{2}{\pi} \int_{0}^{\pi} cos^2(\theta) \indic{\abs{tan(\theta)} \ge \frac{1}{t}} d\theta \stackrel{(\star)}{=} \frac{4}{\pi} \int_{0}^{\frac{\pi}{2}} cos^2(\theta) \indic{\abs{tan(\theta)} \ge \frac{1}{t}} d\theta \\
    & \stackrel{(\star \star)}{=} \frac{4}{\pi} \int_{\alpha}^{\frac{\pi}{2}} cos^2(\theta) d\theta = \frac{2}{\pi} \int_{\alpha}^{\frac{\pi}{2}} 1 + cos(2\theta) d\theta \\
    & = \frac{2}{\pi} \prn*{\frac{\pi}{2} - \alpha - \frac{sin(2\alpha)}{2}} = O\prn*{\frac{1}{M^3}} = O\prn*{\frac{1}{k}}.
\end{align*}

where the second equality is due to the independence of $R$ and $\Theta$, the third equality holds due to the $\pi$-periodicity of $cos(\theta)$ and $tan(\theta)$, equality $(\star)$ holds since $cos(\pi-x) = - cos(x)$, $tan(\pi-x) = - tan(x)$, equality $(\star \star)$ follows from denoting $\alpha := arctan\prn*{\frac{1}{t}}$, and the last equality holds due to the following reason: the known identity of $tan(x) + tan(\frac{1}{x}) = \frac{\pi}{2}$ implies that\[ \alpha = arctan(2M+1) = \frac{\pi}{2} - \frac{1}{2M+1} + O\prn*{\frac{1}{M^3}},\] and therefore \[sin(2\alpha) = sin\prn*{\pi - \frac{2}{2M+1} - O\prn*{\frac{1}{M^3}}} = sin\prn*{\frac{2}{2M+1} + O\prn*{\frac{1}{M^3}}} = \frac{2}{2M+1} + O\prn*{\frac{1}{M^3}},\] where the last equality is due to the Taylor expansion of $sin(x)$. Together: \[
    \frac{\pi}{2} - \alpha - \frac{sin(2\alpha)}{2} = \frac{\pi}{2} - \frac{\pi}{2} + \frac{1}{2M+1} - O\prn*{\frac{1}{M^3}} - \frac{1}{2M+1} - O\prn*{\frac{1}{M^3}} = O\prn*{\frac{1}{M^3}}.
\]

This completes the proof of \cref{lem:negligible-a_j-minus-b_j}.
\end{proof}

By \cref{lem:negligible-a_j-minus-b_j}, and \cref{eq:bound-norm-of-a-minus-m} we get that
\begin{equation}\label{eq:squared-norm-a-minus-m-lb}
    \E\brk*{\norm{a-m}_2^2} \ge 1 - O\prn*{\frac{1}{k}}.
\end{equation}
The following lemma shows that this implies that $\E\brk*{\norm{a-m}_2} \ge 1 - O\prn*{\frac{1}{k}}$:
\begin{lem}\label{lem:norm-a-minus-m-lb}
    \[
    \E\brk*{\norm{a-m}_2} \ge 1 - O\prn*{\frac{1}{k}}.
    \]
\end{lem}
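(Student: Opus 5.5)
The plan follows the route of the proof sketch. Set $Z := \norm{a-m}_2 = f(R)$ with $f(R) := \norm{Re_1 - \cwmed(RP)}_2$ on $SO(d)$. Then $(\E[Z])^2 = \E[Z^2] - \mathrm{Var}(Z)$. By \cref{eq:squared-norm-a-minus-m-lb}, $\E[Z^2] \ge 1 - O(1/k)$. So it suffices to show $\mathrm{Var}(Z) = O(1/k)$. Indeed, this gives $\E[Z] \ge \sqrt{1 - O(1/k)} \ge 1 - O(1/k)$.

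\textbf{Step 1 (Lipschitz bound).} I will show $f$ is $L$-Lipschitz in the Frobenius metric with $L = O(M)$. Fix $R, R'$ with $\Delta := \norm{R - R'}_F$.

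The map $R \mapsto Re_1$ is $1$-Lipschitz. For the median, note that in each coordinate $j$ the median of $\{a_j, b_j, c_j\}$ (with multiplicities $k, k, 1$) is the middle value of three numbers. The middle value is a $1$-Lipschitz function of the triple in $\ell_\infty$. Hence
\[
|m_j - m'_j| \le \max\{|a_j - a'_j|, |b_j - b'_j|, |c_j - c'_j|\}.
\]
Squaring, summing over $j$, and bounding the max by the sum gives
\[
\norm{m - m'}_2^2 \le \norm{a-a'}^2 + \norm{b-b'}^2 + \norm{c-c'}^2.
\]
Since $c = -M(a+b)$, we have $\norm{c - c'}_2 \le M(\norm{a-a'} + \norm{b-b'}) \le 2M\Delta$. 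Together this gives $\norm{m - m'}_2 = O(M)\Delta$. By the triangle inequality, $|f(R) - f(R')| = O(M)\Delta$.

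\textbf{Step 2 (concentration on $SO(d)$).} I will invoke the standard concentration of measure for Haar measure on $SO(d)$, e.g.\ via its Ricci curvature / log-Sobolev constant $\asymp d$ (Meckes, Thm.~5.17). For an $L$-Lipschitz $f$ in Hilbert--Schmidt distance,
\[
\Pr\bigl(|f - \E f| \ge s\bigr) \le 2 e^{-c\, d s^2 / L^2}.
\]
Hence $\mathrm{Var}(f) \le O(L^2/d) = O(M^2/d)$. With $M = \sqrt{k}$ and $d = \Theta(k^2)$, this is $O(k/k^2) = O(1/k)$, which is exactly what is needed.

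Finally, combine the two: $\E[Z] \ge \sqrt{1 - O(1/k) - O(1/k)} \ge 1 - O(1/k)$, using $\sqrt{1-x} \ge 1 - x$ for $x \in [0,1]$.

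The main obstacle is Step 2. One must get the right dimension dependence in the concentration inequality, with variance of order $L^2/d$ rather than $L^2$. This requires citing the log-Sobolev constant of $SO(d)$ correctly, including that the Lipschitz condition is with respect to the Hilbert--Schmidt metric. The Lipschitz step is elementary, because the per-coordinate median of three values is $1$-Lipschitz.
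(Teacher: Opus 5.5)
Your proof follows the paper's proof essentially step for step. You use $(\E[Z])^2=\E[Z^2]-\mathrm{Var}(Z)$, then an $O(M)$ Frobenius-Lipschitz bound for $f(R)=\|Re_1-\mathrm{CWM}(RP)\|_2$ via the $\ell_\infty$-stability of the per-coordinate median (your middle-of-three observation is \cref{clm:medians-diff-bund} specialized). Finally, Lipschitz concentration on $SO(d)$ gives $\mathrm{Var}(Z)=O(M^2/d)=O(1/k)$; citing Meckes' log-Sobolev concentration instead of Vershynin's $\psi_2$ bound changes nothing. Within the lemma, all of these steps are sound.

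However, you inherit a genuine problem from the one input you import. The bound $\E[Z^2]\ge 1-O(1/k)$ in \cref{eq:squared-norm-a-minus-m-lb} is false, and so is the lemma with rate $O(1/k)$.

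\begin{itemize}
\item On $D_j$ the loss in \cref{clm:tech-clm-event-Dj} is $2|a_j-c_j|\,|c_j-b_j|$, a constant fraction of $(a_j-b_j)^2$. It is therefore weighted by the square of the \emph{large} normalized coordinate $V_j$, while $\Pr(D_j)=\Theta(1/M)$.
\item The relevant Gaussian quantity is $\E[G_2^2\mathbf{1}\{|G_1|\le t|G_2|\}]=\Theta(t)$. The proof of \cref{lem:negligible-a_j-minus-b_j} instead evaluates $\E[G_1^2\mathbf{1}\{|G_1|\le t|G_2|\}]=O(t^3)$.
\item Carrying out the computation exactly, with \cref{clm:tech-clm-event-Dj} and the Gaussian approximation, gives $\E\|a-m\|_2^2=1-\frac{4}{3\pi(2M+1)}+O(1/k)=1-\Theta(k^{-1/2})$.
\item Hence $\E[Z]\le\sqrt{\E[Z^2]}=1-\Omega(k^{-1/2})$, which contradicts the stated $1-O(1/k)$.
\end{itemize}

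With the corrected input $\E[Z^2]\ge 1-O(k^{-1/2})$, your unchanged argument proves $\E[Z]\ge 1-O(k^{-1/2})$. This is the rate the paper itself announces just before defining $D_j$. It still yields a $\sqrt2-o(1)$ lower bound in the theorem, but only $\sqrt2-O(d^{-1/4})$ with these parameters, not $\sqrt2-O(d^{-1/2})$.
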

\begin{proof}
Let $Z := \norm{a - m}_2$.
We already have the second moment lower bound $\E\brk*{Z^2} \ge 1 - O\prn*{\frac{1}{k}}$, and the goal is to show that $\E\brk*{Z} \ge 1 - O\prn*{\frac{1}{k}}$.
By the definition of the variance of $Z$: 
\begin{equation}\label{eq:var-second-moment-squared-expected-val}
    \prn*{\E[Z]}^2 = \E[Z^2] - Var(Z).
\end{equation}
The idea in the rest of the lemma is to show that $Var(Z) = O\prn*{\frac{1}{k}}$ and thus by \cref{eq:squared-norm-a-minus-m-lb}, \cref{eq:var-second-moment-squared-expected-val} we'll get the desired.

\[
    Var(Z) = \E\brk*{(Z - \E\brk*{Z})^2} = \norm{Z - \E[Z]}_{L^2}^2,
\]
where the $L_2$ norm of a random variable $X$ is defined as $\norm{X}_{L^2} := \prn*{\E \brk*{|X|^2}}^{\nf{1}{2}}$ (see Chapter 1.3 in \cite{Vershynin2025HDP}).

Proposition 2.6.6 of \citep{Vershynin2025HDP} implies that for any sub-gaussian random variable $X$: $\norm{X}_{L^2} \le O(1) \cdot \norm{X}_{\psi_2}$, where $\norm{\cdot}_{\psi_2}$ is the sub-gaussian norm defined as \[\norm{X}_{\psi_2} = \inf \crl*{K > 0 \mid \E\brk*{exp\prn*{\frac{X^2}{K^2}}} \le 2}.\]
We thus get
\begin{equation}\label{eq:bound-variance-by-subgauss-norm}
    Var(Z) \le O \prn*{\norm{Z - \E \brk*{Z}}_{\psi_2}^2}.
\end{equation}

And so, we turn to getting a bound on $\norm{Z - \E \brk*{Z}}_{\psi_2}$.

For any $R \in SO(d)$, let $f(R): SO(d) \to \R$ be the function $f(R) = \norm{a(R) - m(R)}_2$. So $f(R) = Z$.
\begin{clm}\label{clm:f-is-M-Lipchitz}
    $f$ is $O(M)$ Lipschitz w.r.t. metric measure space $\prn*{SO(d), \norm{\cdot}_F, \mathbb{P}}$.\footnote{$\norm{\cdot}_F$ is the Frobenius norm; for any matrix $A_{m \times n}$ : $\norm{A}_F := \sqrt{\sum_{i,j \in [m]\times[n]} A_{i,j}^2}$.}
\end{clm}
We defer the proof of \cref{clm:f-is-M-Lipchitz} to \cref{sec:proof-of-clm-f-is-M-Lipschitz}.

Theorem 5.2.7 of 
\citep{Vershynin2025HDP} implies the following concentration inequality of $f$ around its expected value:
\begin{equation}\label{eq:bound-sub-gaussian-norm-of-f-minus-expected-f}
    \norm{f(R) - \E \brk*{f(R)}}_{\psi_2} \le \frac{C L}{\sqrt{d}},
\end{equation}
where $C$ is a constant, $L$ is the Lipschitz constant of $f$ w.r.t. metric measure space $\prn*{SO(d), \norm{\cdot}_F, \mathbb{P}}$. \cref{clm:f-is-M-Lipchitz}, together with \cref{eq:bound-sub-gaussian-norm-of-f-minus-expected-f}, yield \[
    \norm{f(R) - \E \brk*{f(R)}}_{L^2} \le O\prn*{\frac{L}{\sqrt{d}}} \le O \prn*{\frac{1}{\sqrt{k}}},
\]
where the last inequality holds due to plugging in $L = O(M) = O(\sqrt{k})$ (by \cref{clm:f-is-M-Lipchitz}) and $k = \floor{\sqrt{d}}$.

By the definition of $f$ and \cref{eq:bound-variance-by-subgauss-norm}:
$Var(Z) \le O \prn*{\frac{1}{k}}$. Plugging this fact in \cref{eq:var-second-moment-squared-expected-val} completes the proof of this lemma.

\end{proof}

By \cref{lem:norm-a-minus-m-lb} and \cref{eq:alg-expected-val-sqrt2-lb-of-randcwmed}: $\E[ALG] \ge 2k - O(1)$. Since $OPT \le \sqrt{2}k + O(1)$ we get the desired:
\[
    \frac{\E\brk{ALG}}{OPT} \ge \sqrt{2} - O\prn*{\frac{1}{k}} = \sqrt{2} - O\prn*{\frac{1}{\sqrt{d}}}.
\]

\end{proof}

\subsection{Proof of \cref{clm:Dj-equiv-condition}}\label{sec:proof-of-clm-d-j-equiv-cond}
\begin{proof}
$c_j \in [a_j, b_j]$ iff $(a_j - c_j)(b_j - c_j) \le 0$. By writing $s := a_j + b_j$, $d := a_j - b_j$, we get $a = \frac{s+d}{2}, b = \frac{s-d}{2}$. Since $c_j = -M s$:
\[
    a_j - c_j = \frac{s+d}{2} + Ms = \frac{(2M + 1)s + d}{2}, \quad b_j - c_j = \frac{s-d}{2} + Ms = \frac{s(2M+1) - d}{2}.
\]
Thus $c_j \in [a_j, b_j]$ iff 
\[
    \prn*{\frac{(2M + 1)s + d}{2}}\prn*{\frac{s(2M+1) - d}{2}} \le 0,
\]
or equivalently iff $(2M+1)\abs{s} \le \abs{d}$, which implies:

\[
c_j \in [a_j, b_j] \iff \abs{a_j + b_j} \le \frac{\abs{a_j - b_j}}{2M + 1},
\]
\end{proof}

\subsection{Proof of \cref{clm:tech-clm-event-Dj}}\label{sec:proof-clm-tech-clm-event-Dj}
\begin{proof}
We wish to show:
\[
    \prn*{a_j - m_j}^2 + \prn*{b_j - m_j}^2 = (a_j - b_j)^2 + 2\prn*{\prn*{m_j - \frac{a_j + b_j}{2}}^2 - \frac{\prn*{a_j-b_j}^2}{4}} \indic{D_j}.
    \]

For the sake of abbreviation let $x = a_j, z = b_j$ and $y = c_j$, $m = \frac{x+z}{2}$ and $r = \frac{\abs{x-z}}{2}$.
We now have two cases.
In the first case, $y \notin [x,z]$. In this case $m_j \in \crl*{x,z}$ which implies that 
$(x-m_j)^2 + (z-m_j)^2 = (x-z)^2$ (as one of the terms cancels out when plugging in the value of $m_j$).

In the second case is of event $D_j$.
Let $s := \frac{x-z}{2}$. In this case:
\begin{align*}
    (x-m_j)^2 + (z-m_j)^2 &  = (x-y)^2 + (z-y)^2 \\
    & = \prn*{m + s - y}^2 + \prn*{m - s - y}^2 \\
    & = \prn*{\prn*{m-y}^2 + 2s(m-y) + s^2} \\
    & \ \ \ + \prn*{\prn*{m-y}^2 - 2s(m-y) + s^2} \\
    & = \prn*{\prn*{(m-y)^2 + s^2}} \\
    & = 4s^2 + 2\prn*{\prn*{m-y}^2 - s^2} \\
    & = (x-z)^2 + 2\prn*{\prn*{m-y}^2 - r^2} \numberthis \label{eq:tmp1},
\end{align*}
where the last equality is due to the definition of $s$.


\end{proof}

\subsection{Proof of \cref{clm:convergence-to-gaussians}}\label{sec:proof-of-clm-convergence-to-gauss}
\begin{proof}
Let $g(u,v): \R \times \R \to \R$ be $g(u,v) := v^2 \indic{\abs{u} \le t \abs{v}}$. For any $L > 0$ let $g_L := \min \crl*{g,L}$.
Let $(X,Y)$ be a coupling of $(U_1,V_1),(G_1,G_2)$ such that $\P(X \neq Y) = d_{\mathrm{TV}}((U_1,V_1),(G_1,G_2))$ (there is such coupling, see \cite{levin2017markov} Proposition 4.7 and remark 4.8 for more information).
\begin{align*}
    \abs*{\E \brk*{g_L(U_1, V_1)} - \E \brk*{g_L(G_1, G_2)}} &= \abs*{\E \brk*{g_L(U_1, V_1) - g_L(G_1, G_2)}} \\
    & \le \E\brk*{\abs{g_L(U_1, V_1) - g_L(G_1, G_2)}} \le 2 L \P(X \neq Y) \\
    & = 2 L \cdot d_{\mathrm{TV}} \prn*{((U_1,V_1), (G_1,G_2))} = O\prn*{\frac{L}{d}}, \numberthis \label{eq:O-L-over-d-via-TV-bound}
\end{align*}
where the first inequality is due to Jensen's inequality.

Let $h_L: \R \times \R \to \R$ be $h_L := g - g_L$. So
\begin{align*}
    \abs{\E \brk*{g(U_1, V_1)} - \E \brk*{g(G_1, G_2)}} \\
    & = \abs{\E \brk*{h_L(U_1, V_1)} + \E \brk*{h_L(G_1, G_2)} \\
    & \ \ \ + \E \brk*{g_L(U_1, V_1)} - \E \brk*{g_L(G_1, G_2)}} \\
    & \le \frac{3}{L} + \frac{3}{L}  + O\prn*{\frac{L}{d}}, \numberthis \label{eq:bounding-diff-between-Haar-entries-and-iid-normals}
\end{align*}
where the inequality is due to \cref{eq:O-L-over-d-via-TV-bound} and due to the following claim:
\begin{clm}
    $\E \brk*{h_L(U_1, V_1)}, \E \brk*{h_L(G_1, G_2)} \le \frac{3}{L}$
\end{clm}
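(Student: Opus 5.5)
The plan is to dominate the truncation error by a fourth moment. Since $h_L = g - \min\{g,L\} = (g-L)_+$ and $g \ge 0$, on the event $\{g > L\}$ we have $h_L \le g \le g^2/L$, while $h_L = 0$ otherwise. Hence pointwise $h_L \le g^2/L$. Moreover $g(u,v) = v^2\,\mathbf{1}\{|u|\le t|v|\} \le v^2$, so $h_L(u,v) \le v^4/L$. It therefore suffices to show that the fourth moment of the second argument is at most $3$ in both cases:
\[
\E\brk*{h_L(G_1,G_2)} \le \frac{\E[G_2^4]}{L}, \qquad \E\brk*{h_L(U_1,V_1)} \le \frac{\E[V_1^4]}{L}.
\]

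For the Gaussian term this is immediate, since $\E[G_2^4] = 3$ for $G_2 \sim N(0,1)$. This gives $\E[h_L(G_1,G_2)] \le 3/L$.

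For the Haar term I would identify the exact law of $V_1$. With the normalization $(a_1-b_1)^2 = \frac{2}{d}V_1^2$, that is $V_1 = \sqrt{d/2}\,(a_1-b_1)$, we can write $V_1 = \sqrt{d}\,\langle r_1, w\rangle$. Here $r_1$ is the first row of $R$ and $w = (e_1-e_2)/\sqrt{2}$ is a fixed unit vector. As noted in the proof of \cref{lem:negligible-a_j-minus-b_j}, $r_1$ is uniform on $S^{d-1}$. By rotation invariance of the uniform measure on the sphere, $\langle r_1,w\rangle$ has the same law as $X_1$ for $X \sim U(S^{d-1})$. Then \cref{lem:even-moments-sphere} with $k=2$ gives
\[
\E[V_1^4] = d^2\,\E[X_1^4] = d^2\cdot\frac{3}{d(d+2)} = \frac{3d}{d+2} \le 3,
\]
so $\E[h_L(U_1,V_1)] \le 3/L$ as well.

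The only point requiring care is that identification step. One has to check that the variable squared inside $g$ is indeed the $\sqrt{d}$-rescaled projection of a uniform sphere vector onto a unit direction, in the normalization that matches \cref{eq:tmp-a_j-minus-b_j}. Since the argument only uses the bound $g(u,v) \le v^2$ and never the indicator, the roles of $U_1$ and $V_1$ in the indicator are irrelevant. If the other scaling convention were used for $V_1$, the fourth moment would change by at most a constant factor, and the claim would still hold with a modified constant. Returning to \cref{eq:bounding-diff-between-Haar-entries-and-iid-normals}, one then chooses $L = \sqrt{d}$ to obtain the $O(1/\sqrt{d})$ error in \cref{clm:convergence-to-gaussians}.
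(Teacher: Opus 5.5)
Your proof is correct and is essentially the paper's argument: both dominate $h_L=(g-L)_+$ pointwise by $v^4/L$ using $g(u,v)\le v^2$, and then bound the fourth moment of the second argument by $3$. The only difference is that you justify $\E[V_1^4]\le 3$ explicitly, by identifying $V_1$ as $\sqrt{d}$ times a coordinate of a uniform sphere vector and applying \cref{lem:even-moments-sphere} to get $3d/(d+2)$, whereas the paper simply cites a fourth-moment bound for Haar entries; under the $\frac{\sqrt{d}}{2}$ scaling written in the full proof the moment is only smaller ($\frac{3d}{4(d+2)}$), so the constant $3$ never needs to change.
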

\begin{proof}

For any nonnegative random variable $X$ and $L > 0$ it must be that \[
    (X-L)_+ \le X \indic{X > L} \le \frac{X^2}{L} \indic{X > L},
\]
where the last step holds as given that $X > L$ it follows that $X < \frac{X^2}{L}$.
Applying this with $X = g(U_1, V_1)$, together with the fact that $g - g_L = (g-L)_{+}$ (as $g_L = \min\crl*{g,L}$) yields:
\[
    (g-L)_+ \le (v^2 - L)_+ \le \frac{v^4}{L} \indic{v^2 > L},
\]
where the first inequality is due to the fact that $g(u,v) \le v^2$.
By taking expectations:
\[
    0 \le \E\brk*{g(U_1,V_1) - g_L(U_1,V_1)} \le \E\brk*{V_1^2 \ \indic{V_1^2 > L}} \le \frac{\E[V_1^4]}{L} \numberthis \label{eq:fourth-moment-tmp-bound},
\]
where the last inequality holds as if $V_1^2 < L$ then $V_1^2 \ \indic{V_1^2 > L} = 0$ and otherwise \[V_1^2 \ \indic{V_1^2 > L} = V_1^2 \le V_1^2 \cdot \frac{V_1^2}{L} = \frac{V_1^4}{L}.\]

It is known (see \cite{meckes2019random}, Chapter 2, equation (2.4)) that for any Haar entry (such as $V_1$) 
\[  
    sup_d \E[V_1^4] \le 3.
\]
Hence, by \cref{eq:fourth-moment-tmp-bound} we get:
\[\E\brk*{g(U_1,V_1) - g_L(U_1,V_1)} \le \frac{3}{L},\] as desired.

Since $\E[G_2^4] = 3$ for $G_2 \sim N(0,1)$, the same argument above implies
that 
\[
    \E \brk*{h_L(G_1, G_2)} \le \frac{3}{L}
\]
also holds.
\end{proof}

By plugging $L = \sqrt{d}$ in \cref{eq:bounding-diff-between-Haar-entries-and-iid-normals} we indeed get that 
$\E \brk*{g(U_1,V_1)} \ge \E\brk*{g(G_1,G_2)} - O\prn*{\frac{1}{\sqrt{d}}}$, which completes the proof of \cref{clm:convergence-to-gaussians}.
\end{proof}

\subsection{Proof of \cref{clm:f-is-M-Lipchitz}}\label{sec:proof-of-clm-f-is-M-Lipschitz}

\begin{proof}
For any $R, R' \in SO(d)$: Let $a = a(R) = Re_1$, $b = b(R) = R e_2$,$c = c(R) = R(e_1 + e_2)$, $a' = a(R') = R' e_1$, $b' = b(R') = R' e_2$, $c' = c(R') = R'(e_1+e_2)$. Also, $m = m(R) = \rcwmed(RP)$ and $m' = m(R') =\rcwmed(R'P)$. By triangle inequality:

\begin{align*}
    \abs{f(R) - f(R')} &= \abs{\norm{a-m}_2 - \norm{a' - m'}_2} \le \norm{(a-m) - (a' - m')}_2 \\
    & \le \norm{a - a'}_2 + \norm{m - m'}_2. \numberthis \label{eq:Lipchitz-bound-tmp-1}
\end{align*}

We bound each term in RHS.

The first term is $\norm{a-a'}_2$:

\begin{equation}\label{eq:bound-on-a-minus-a-tag-norm}
    \norm{a-a'}_2 = \norm{(R-R')e_1}_2 \le \norm{R - R'}_F,
\end{equation}

where $\norm{\cdot}_F$ is the Frobenius norm and thus the inequality holds as for any unit vector $v$ and matrix $A$: $\norm{Ax}_2 \le \norm{A}_F$.
Similarly, 
\begin{equation}\label{eq:bound-on-b-minus-b-tag-norm}
    \norm{b-b'}_2 \le \norm{R - R'}_F.
\end{equation}

We move on to bounding $\norm{m-m'}_2$:

\begin{equation}\label{eq:tmp-bound-on-m-minus-m-tag-sq-norm}
    \norm{m - m'}_2^2 = \sum_{j=1}^d (m_j -m'_j)^2.
\end{equation}
Before proceeding, we give the following claim:
\begin{clm}\label{clm:medians-diff-bund}
    Let $n = 2k+1 \in \N$ be an odd natural number. Let $med(w)$ be the median of $w = (w_1,\ldots,w_n) \in \R^n$. Then for any $x,y \in \R^n$: \[
    \abs{med(x) - med(y)} \le \norm{x - y}_{\infty}.
    \]
\end{clm}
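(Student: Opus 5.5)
The plan is to use the order-statistic characterization of the median together with the monotonicity of order statistics. Write $\epsilon := \norm{x-y}_\infty$, so that for every coordinate $i \in [n]$ we have $y_i - \epsilon \le x_i \le y_i + \epsilon$. Since $n = 2k+1$ is odd, the median is uniquely defined: $med(w) = w_{(k+1)}$, the $(k+1)$-st smallest entry of $w$. It therefore suffices to show that the $(k+1)$-st order statistic is $1$-Lipschitz with respect to $\norm{\cdot}_\infty$.

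The key step is the monotonicity of order statistics. If $u, w \in \R^n$ satisfy $u_i \le w_i$ for all $i$, then $u_{(r)} \le w_{(r)}$ for every $r \in [n]$. To see this, note that at least $n-r+1$ indices $i$ satisfy $w_i \ge w_{(r)}$. Hence at most $r-1$ indices have $w_i < w_{(r)}$, and so at most $r-1$ indices have $u_i < w_{(r)}$ is not the right count to use; instead, argue directly with the upper side. At least $r$ indices satisfy $w_i \le w_{(r)}$, and for each of them $u_i \le w_i \le w_{(r)}$. So at least $r$ entries of $u$ are $\le w_{(r)}$, which gives $u_{(r)} \le w_{(r)}$.

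Now apply this with $u = x$ and $w = y + \epsilon \mathbf{1}$. Shifting every entry by $\epsilon$ shifts every order statistic by $\epsilon$, so
\[
med(x) \le med(y + \epsilon\mathbf{1}) = med(y) + \epsilon .
\]
Symmetrically, applying it with $u = y - \epsilon\mathbf{1}$ and $w = x$ gives $med(y) - \epsilon \le med(x)$. Combining the two inequalities yields $\abs{med(x) - med(y)} \le \norm{x-y}_\infty$.

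No step here is a genuine obstacle. The only care needed is in the counting argument for monotonicity: it must be phrased with "at least $r$ entries are $\le$" so that it handles ties correctly. The role of $n$ being odd is only to make the median a single well-defined order statistic.
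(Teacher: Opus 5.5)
Your proof is correct and is essentially the paper's argument. The paper takes the at least $k+1$ indices with $x_i \le med(x)$, notes that each corresponding $y_i \le med(x)+\delta$, and concludes $med(y) \le med(x)+\delta$; your order-statistic monotonicity step applied to $x \le y+\epsilon\mathbf{1}$ is this same counting with the roles of $x$ and $y$ swapped. You should delete the abandoned sentence about ``at most $r-1$ indices'', which as written is a false start.
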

\begin{proof}
Let $\delta = \norm{x-y}_{\infty}$. So $y_i \in [x_i-\delta,x_i + \delta]$ for any $i \in [n]$. 
Let $I^- = \crl*{i \mid x_i \le \med(x)}$. So $\abs{I^-} \ge \frac{n+1}{2}$. For any $i \in I^{-}$: 
$y_i \le x_i + \delta \le med(x) + \delta$.
Hence at least half the $y_i$ values are upper bounded by $med(x) + \delta$, and thus $med(y) \le med(x) + \delta$. Similarly for $I^+ = \crl*{i \mid x_i \ge med(x)}$, $\abs{I^+} \ge \frac{n+1}{2}$ and for all $i \in I^+$: $y_i \ge med(x) - \delta$ and thus $med(y) \ge med(x) - \delta$. Together we get 
\[
    \abs{med(y) - med(x)} \le \delta,
\] as desired.

\end{proof}

From \cref{clm:medians-diff-bund}, for any $j \in [d]$: $\abs{m_j -m'_j} \le \max\crl*{|a_j - a'_j|, |b_j - b'_j|, |c_j - c'_j|}$. Together with \cref{eq:tmp-bound-on-m-minus-m-tag-sq-norm} we get:
\begin{align*}
    \norm{m - m'}_2^2 & \le \sum_{j=1}^d \prn*{\max\crl*{|a_j - a'_j|, |b_j - b'_j|, |c_j - c'_j|}}^2 \le \sum_{j=1}^d (a_j - a'_j)^2 + (b_j - b'_j)^2 + (c_j - c'_j)^2 \\
    & = \norm{a - a'}_2^2 + \norm{b - b'}_2^2 + \norm{c - c'}_2^2 \\
    & \le \norm{R - R'}_F^2 + \norm{R - R'}_F^2 + (M \sqrt{2} \norm{R - R'}_F)^2 = (2 + 2M^2)\norm{R - R'}_F^2, \numberthis \label{bound-on-norm-m-minus-m-tag-squared}
\end{align*}
where the second inequality is due to the fact that for any $\alpha, \beta, \gamma \in \R$: $\prn*{\max{\alpha, \beta, \gamma}}^2 \le \alpha^2 + \beta^2 + \gamma^2$, the third inequality holds due to \cref{eq:bound-on-a-minus-a-tag-norm}, \cref{eq:bound-on-b-minus-b-tag-norm} and the following claim that gives the bound on $\norm{c-c'}_2$:
\begin{clm}
    $\norm{c-c'}_2 \le M \sqrt{2} \norm{R - R'}_F$.
\end{clm}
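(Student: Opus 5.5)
By definition of the point $c$, we have $c = R(-M(e_1+e_2)) = -M(a+b)$ and likewise $c' = -M(a'+b')$. (The proof environment writes $c = R(e_1+e_2)$ without the factor $-M$, but the claimed bound, with its factor $M\sqrt{2}$, is evidently meant for the scaled outlier, so I will work with $c = -MR(e_1+e_2)$.) I plan to write the difference as a single matrix applied to a fixed vector and then bound that matrix product by the Frobenius norm, exactly as in \cref{eq:bound-on-a-minus-a-tag-norm}.

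First I write
\[
c - c' = -M (R - R')(e_1 + e_2),
\]
which follows by linearity. Taking Euclidean norms gives
\[
\norm{c-c'}_2 = M \norm{(R-R')(e_1+e_2)}_2 .
\]
Next I use that the operator norm is dominated by the Frobenius norm: for any matrix $A$ and vector $x$, Cauchy--Schwarz applied row by row gives $\norm{Ax}_2 \le \norm{A}_{op}\norm{x}_2 \le \norm{A}_F \norm{x}_2$. Applying this with $A = R - R'$ and $x = e_1 + e_2$, and using $\norm{e_1+e_2}_2 = \sqrt{2}$, yields
\[
\norm{c-c'}_2 \le M\sqrt{2}\,\norm{R-R'}_F .
\]

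Alternatively, the bound follows from the triangle inequality together with \cref{eq:bound-on-a-minus-a-tag-norm} and \cref{eq:bound-on-b-minus-b-tag-norm}. That route gives $\norm{c-c'}_2 \le M(\norm{a-a'}_2+\norm{b-b'}_2) \le 2M\norm{R-R'}_F$, which has a slightly worse constant but the same $O(M)$ order and still suffices for \cref{clm:f-is-M-Lipchitz}.

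There is no real obstacle here. The only point requiring care is that the bound $\norm{Ax}_2 \le \norm{A}_F$ used earlier was stated for unit vectors, so the factor $\norm{e_1+e_2}_2=\sqrt2$ must be carried explicitly.
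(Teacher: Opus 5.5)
Your proof is correct and is essentially the paper's argument. The paper writes $c-c'=-M\bigl((a-a')+(b-b')\bigr)$, applies the triangle inequality and $x+y\le\sqrt{2}\sqrt{x^2+y^2}$, and bounds $\|a-a'\|_2^2+\|b-b'\|_2^2$ (the squared norms of the first two columns of $R-R'$) by $\|R-R'\|_F^2$; this is the same Cauchy--Schwarz estimate you get in one step from $\|Ax\|_2\le\|A\|_F\|x\|_2$ with $\|e_1+e_2\|_2=\sqrt{2}$, yielding the same constant $M\sqrt{2}$, and your reading $c=-MR(e_1+e_2)$ matches the paper's own proof of this claim, which starts from $c=-M(a+b)$.
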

\begin{proof}
Since $c = -M(a+b)$: 
$c - c' = -M((a-a') + (b-b'))$ and thus
\begin{align*}
    \norm{c - c'}_2 & = M \norm{(a - a') + (b-b')}_2 \le M \prn*{\norm{a-a'}_2 + \norm{b - b'}_2} \\
    & \le M \sqrt{2} \sqrt{\norm{a-a'}_2^2 + \norm{b - b'}_2^2} \\
    & \le M \sqrt{2} \norm{R - R'}_F, \numberthis \label{eq:bound-for-c-minus-c-tag}
\end{align*}

where the first inequality is due to triangle inequality, the second is due to the standard inequality $x + y \le \sqrt{2}\sqrt{x^2 + y^2}$ that applies for any $x,y \ge 0$, and the last inequality holds since 
\begin{align*}
    \norm{a-a'}_2^2 + \norm{b-b'}_2^2 & = \norm{(R-R')e_1}_2^2 + \norm{(R-R')e_2}_2^2  \\
    & \le \sum_{i=1}^d \norm{(R-R') e_i}_2^2= \sum_{i=1}^d \sum_{j=1}^d (R-R')_{j,i}^2  = \norm{R - R'}_F^2.
\end{align*}
\end{proof}

Plugging \cref{eq:bound-on-a-minus-a-tag-norm} and \cref{bound-on-norm-m-minus-m-tag-squared} into \cref{eq:Lipchitz-bound-tmp-1} gives:
\[
    \abs{f(R) - f(R')} \le \norm{R - R'}_F + \sqrt{2 + 2M^2}\norm{R - R'}_F = O(M) \cdot \norm{R - R'}_F,
\]
which concludes the proof of the claim.
\end{proof}
\section{Missing analysis from \cref{sec:predictions-learning-aug}}\label{sec:appendix-learning-aug}

\subsection{Proof of \cref{lem:geometric-median-approx-robustness}}

First, we give a lemma that bounds the ``distance robustness'' of the $\OneMed$: that is, how far may the $\OneMed$ get from its original location as a result of an ``insertion corruption''.
\begin{lem}\label{lem:geomed-distance-robustness-given-added-points}
    Let $c \in (0,1)$ and $(V,d)$ be a metric space. Let $P, \hat{P} \subset V$ be two multisets of points such that $\abs{\hat{P}} = c \abs{P}$. Then: 
    \[
    d\prn*{med\prn*{P},med \prn*{P \cup \hat{P}}} \le \frac{2}{|P|\prn*{1-c}}  C\prn*{P, med \prn*{P}},
    \]
    where , and $med(X)$ is the $\OneMed$ of $X$: $med(X) := \argmin_{t \in V} C(X,t)$.
\end{lem}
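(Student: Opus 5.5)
Write $g = med(P)$, $h = med(P \cup \hat{P})$, $n = |P|$, $\Delta = d(g,h)$, and $OPT = C(P,g)$. The argument uses only the triangle inequality and the optimality of $h$ for the combined multiset. It never uses the positions of the points in $\hat{P}$, so it covers an arbitrary insertion corruption.

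First, a lower bound on how much moving the center from $g$ to $h$ costs the $P$ part. For each $p \in P$, the triangle inequality gives $d(p,h) \ge \Delta - d(p,g)$. Summing over $P$,
\[
C(P,h) \ge n\Delta - OPT, \qquad\text{so}\qquad C(P,h) - C(P,g) \ge n\Delta - 2\,OPT .
\]

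Second, an upper bound on how much the $\hat{P}$ part can gain from the move. For each $q \in \hat{P}$, $d(q,g) \le d(q,h) + \Delta$. Hence
\[
C(\hat{P},g) - C(\hat{P},h) \le |\hat{P}|\,\Delta = cn\,\Delta .
\]

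Now combine using the optimality of $h$ for $P \cup \hat{P}$:
\[
C(P,h) + C(\hat{P},h) \le C(P,g) + C(\hat{P},g).
\]
Rearranging gives $C(P,h) - C(P,g) \le C(\hat{P},g) - C(\hat{P},h)$. Substituting the two bounds yields $n\Delta - 2\,OPT \le cn\Delta$. Since $c<1$, this gives
\[
\Delta \le \frac{2\,OPT}{n(1-c)},
\]
which is the claim.

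There is no real obstacle here. The only point needing care is the direction of each triangle inequality: the $P$-side needs a lower bound on $C(P,h)$ in terms of $\Delta$, and the $\hat{P}$-side needs an upper bound on the gain for $\hat{P}$, both linear in $\Delta$.

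This lemma then yields \cref{lem:geometric-median-approx-robustness} via $C(P,h) \le C(P,g) + n\Delta$.
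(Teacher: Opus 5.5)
Your proof is correct and is essentially the paper's argument: the paper also uses $d(p,m')\ge d(m,m')-d(m,p)$ on $P$ and $d(\hat p,m')\ge d(\hat p,m)-d(m,m')$ on $\hat P$, then invokes optimality of $m'$ for $P\cup\hat P$, so the $\hat P$ distances to $m$ cancel and $(1-c)n\,d(m,m')\le 2C(P,m)$ follows.

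One small caution concerns your closing aside. The bound $C(P,h)\le C(P,g)+n\Delta$ only yields the factor $1+\frac{2}{1-c}$, not the $1+\frac{2c}{1-c}$ of \cref{lem:geometric-median-approx-robustness}. The paper instead reuses your own two steps, $C(P,h)-C(P,g)\le C(\hat P,g)-C(\hat P,h)\le cn\Delta$, so that only the $cn$ inserted points are charged $\Delta$ each.
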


\begin{proof}
Let $n := \abs{P}$ and let $P' = P \cup \hat{P}$ which implies $\abs{P'} = (1+c)n$. Let $m = \argmin_{t \in V} C(P,t)$ and $m' = \argmin_{t \in V} C((P \cup \hat{P}),t)$. 

By triangle inequality:

\begin{align*}
    & \sum_{p \in P} \prn*{d(m, m') - d(m, p)} + \sum_{\hat{p} \in \hat{P}} \prn*{d(m,\hat{p}) - d(m, m')} \\
    & \qquad \le \sum_{p \in P} d(p, m') + \sum_{\hat{p} \in \hat{P}} d(\hat{p}, m') = \sum_{p' \in P'} d(p', m') \\
    & \qquad \le \sum_{p' \in P'} d(p', m) = \sum_{p \in P} d(p, m) + \sum_{\hat{p} \in \hat{P}} d(\hat{p}, m),
\end{align*}
where the last inequality holds as the definition of $m'$ implies: $\sum_{p' \in P'} d(p', m') \le \sum_{p' \in P'} d(p', m)$.
It follows from the above that $\prn*{\abs{P} - \abs{\hat{P}}}d(m,m') \le 2 \sum_{p \in P} d(p, m)$, or equivalently:
\begin{equation*}
    d(m,m') \le \frac{2}{n(1-c)}\sum_{p \in P} d(p, m).
\end{equation*}
\end{proof}

We are now ready to give the proof of \cref{lem:geometric-median-approx-robustness}:
\begin{proof}
    Let $n := \abs{P}$ and let $P' = P \cup \hat{P}$ which implies $\abs{P'} = (1+c)n$. Let $m = \argmin_{t \in V} C(P,t)$ and $m' = \argmin_{t \in V} C((P \cup \hat{P}),t)$. 
    \begin{align*}
        \sum_{p \in P} d(p, m') & = \sum_{p \in P \cup \hat{P}} d(p, m') - \sum_{p \in \hat{P}} d(p, m') \\
        & \le \sum_{p \in P \cup \hat{P}} d(p, m) - \sum_{p \in \hat{P}} d(p, m') \\
        & = C \prn*{P, m} + \sum_{p \in \hat{P}} d(p,m) - d(p,m') \\
        & \le C \prn*{P, m} + \sum_{p \in \hat{P}} d(m,m') \\
        & \le \prn*{1 + \frac{2c}{1-c}} C\prn*{P, m},
    \end{align*}
    where the first inequality holds by the definition of $m'$, the second inequality is due to triangle inequality, and the last inequality follows from \cref{lem:geomed-distance-robustness-given-added-points}.
\end{proof}

\section{Missing proofs from \cref{sec:lower-bounds}}\label{sec:missing-proofs-rand-dict-lb}

\subsection{Proof of \cref{clm:trigo-identity}}
\begin{proof}
Let $S \;=\;\sum_{k=1}^{n-1}\sin\prn*{kx}, \ x = \frac{\pi}{n}$. We will show $S = \cot\prn*{\tfrac{x}{2}}$.

Multiply both sides by $\sin\prn*{\tfrac{x}{2}}$:
\[
S\,\sin \prn*{\tfrac{x}{2}}
=\sum_{k=1}^{n-1}\sin\prn*{kx}\,\sin \prn*{\tfrac{x}{2}}.
\]

Apply the product-to-sum identity
\[
2\,\sin\alpha\,\sin\beta
=\cos\prn*{\alpha - \beta}\;-\;\cos\prn*{\alpha + \beta},
\]
with $\alpha = kx$ and $\beta = \tfrac{x}{2}$.  Then
\[
\sin\prn*{kx}\,\sin \prn*{\tfrac{x}{2}}
=\tfrac12 \brk*{\cos\prn*{(k-\tfrac12)x}
            -\cos\prn*{(k+\tfrac12)x}}.
\]
Hence
\[
S\,\sin \prn*{\tfrac{x}{2}}
=\frac12
\sum_{k=1}^{n-1}
\Bigl[\cos\prn*{(k-\tfrac12)x}-\cos\prn*{(k+\tfrac12)x}\Bigr].
\]

Observe that the sum on the right telescopes:
\[
\begin{aligned}
&\brk*{\cos\prn*{\tfrac12x}-\cos\prn*{\tfrac32x}}
+\cdots
+\brk*{\cos\prn*{(n-\tfrac32)x}-\cos\prn*{(n-\tfrac12)x}}\\
&\quad=\cos \prn*{\tfrac12x}-\cos \prn*{(n-\tfrac12)x}.
\end{aligned}
\]
Therefore
\[
S\,\sin \prn*{\tfrac{x}{2}}
=\tfrac12\brk*{\cos \prn*{\tfrac{x}{2}}
              -\cos \prn*{(n-\tfrac12)x}}.
\]

Substitute $x=\tfrac{\pi}{n}$. We get $(n-\tfrac12)x
=\pi-\tfrac{\pi}{2n}$ which implies \[\cos \prn*{(n-\tfrac12)x}
=-\cos \prn*{\tfrac{\pi}{2n}}.\]
Thus
\[
S \sin \prn*{\tfrac{\pi}{2n}}
=\tfrac12\bigl[\cos\prn*{\tfrac{\pi}{2n}} - (-\cos\prn*{\tfrac{\pi}{2n}})\bigr]
=\cos \prn*{\tfrac{\pi}{2n}}.
\]

Divide both sides by $\sin \prn*{\tfrac{\pi}{2n}}$:
\[
S
=\frac{\cos \prn*{\tfrac{\pi}{2n}}}
     {\sin \prn*{\tfrac{\pi}{2n}}}
=\cot \prn*{\tfrac{\pi}{2n}}.
\]
This completes the proof.
\end{proof}

\subsection{Proof of \cref{lem:random-sphere-vector-distance-from-e1}}\label{sec:proof-of-expected-rand-sphere-vectors-lemma}
\begin{proof}
\begin{align*}
    \norm{X-e_1}_2 &= \sqrt{\langle X-e_1,X-e_1 \rangle} = \sqrt{\norm{X}_2^2 + 1 - 2  \langle X,e_1 \rangle} \\
    & = \sqrt{2-2X_1} = \sqrt{2}\ \sqrt{1-X_1}. \numberthis \label{tmp:X-minus-e_1-norm}
\end{align*}

Recall the generalized binomial theorem: for $\alpha \in \mathbb{R}$ and $|t| \le 1$,
\[
(1 - t)^{\alpha} = \sum_{n=0}^\infty \binom{\alpha}{n} (-t)^n.
\]
For $\alpha = \frac{1}{2}$, we thus have:
\[
(1 - t)^{1/2} = 1 - \frac12 t - \frac18 t^2 - \frac{1}{16} t^3 - \frac{5}{128} t^4 + O\prn*{t^5}. \numberthis \label{eq:1-minus-t-power-of-half}
\]

Taking expectations in \cref{tmp:X-minus-e_1-norm}, plugging in \cref{eq:1-minus-t-power-of-half} and noting that odd moments vanish (proved below), we get
\[
\mathbb{E} \brk*{\norm{X-e_1}_2}
=\sqrt{2} \prn*{1-\frac18\,\mathbb{E}[X_1^2]
-\frac{5}{128}\,\mathbb{E}[X_1^4]
-\frac{21}{1024}\,\mathbb{E}[X_1^6]-\cdots}.
\]

To see why the odd moments vanish, consider the orthogonal transformation $R=\mathrm{diag}(-1,1,\dots,1)$. By invariance of $dS$ to orthogonal isometries:
\[ \int_{S^{d-1}} x_1^{2k-1}\,dS
=\int_{S^{d-1}} \prn*{Rx}_1^{2k-1}\,dS
=\int_{S^{d-1}} \prn*{-x_1}^{2k-1}\,dS
=-\int_{S^{d-1}} x_1^{2k-1}\,dS,
\]
hence $\E\brk*{X_1^{2k-1}} = \int_{S^{d-1}} x_1^{2k-1}\,dS = 0$.
\end{proof}
\section{$\rrcwmed$ and $\projmed$ connection}\label{sec:proj-median-connection}

In this section we explain the connection between the $\rrcwmed$ expected approximation ratio and the approximation of a related median generalization --- the $\projmed$.

The $\projmed$ is the average median of the projection of the dataset onto all possible lines that go through the origin. Formally:
\begin{definition}[Projection Median]
Let $P \subset \R^d$ be a finite non-empty set of points.  
For a unit vector $u \in \mathbb{S}^{d-1}$, let $P_u$ denote the projection of $P$ onto the line through the origin in direction $u$, 
and let $\operatorname{med}(P_u)$ be the one-dimensional median of $P_u$. The $\projmed$ of $P$ is defined as
\[
\projmed(P) = d \int_{\mathbb{S}^{d-1}} median(P_u)\, d\mu(u),
\]
where $\mu$ is the normalized uniform probability measure on the unit sphere $\mathbb{S}^{d-1}$.

\end{definition}

\cite{durocher2009projection,basu2012projection} also give an equivalent definition:
\begin{prop}[Projection Median and $\rcwmed$ equivalence]\label{prop:proj-med-equiv-def}

    Let $P \subset \mathbb{R}^d$ be a finite non-empty set of points.
    The $\projmed$ of $P$ is equivalently given by
    \[
        \projmed(P) \;=\; \int_{SO(d)} \rcwmed(P,A) \, d\nu(A),
    \]
    where $\nu$ is the normalized Haar measure on the rotation group $SO(d)$.
    
\end{prop}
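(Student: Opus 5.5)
We will prove that the two integrals agree coordinate by coordinate in a common frame. Write $\rcwmed(P,A) = A^{\top}\cwmed(AP)$. The $j$-th coordinate of $\cwmed(AP)$ is the median of the numbers $\langle A^{\top}e_j, p\rangle$ over $p\in P$. Call this $\operatorname{med}_{u_j}(P)$, the scalar median of the projection of $P$ onto the line spanned by $u_j := A^{\top}e_j$. Here $u_j$ is the $j$-th row of $A$, written as a column vector. This gives the identity
\[
\rcwmed(P,A) \;=\; \sum_{j=1}^{d} \operatorname{med}_{u_j}(P)\, u_j .
\]
This holds because $A^\top$ sends $e_j$ to $u_j$ and is linear. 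We then interpret $\operatorname{med}(P_u)$ in the definition of $\projmed$ as the vector $\operatorname{med}_u(P)\,u\in\R^d$, i.e.\ the median point on the line through the origin in direction $u$. This is the reading that makes the factor $d$ natural.

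Next, integrate over Haar measure $\nu$ and use linearity of the integral:
\[
\int_{SO(d)} \rcwmed(P,A)\,d\nu(A) \;=\; \sum_{j=1}^{d} \int_{SO(d)} \operatorname{med}_{u_j(A)}(P)\,u_j(A)\,d\nu(A).
\]
The key fact is that for each fixed $j$, the pushforward of $\nu$ under $A\mapsto A^{\top}e_j$ is the uniform measure $\mu$ on $\mathbb{S}^{d-1}$. This follows from invariance of $\nu$ under right multiplication, which makes the pushforward a rotation-invariant probability measure on the sphere. For $d\ge 2$, $SO(d)$ acts transitively on the sphere, so by uniqueness that measure is $\mu$. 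This is the same fact used in the paper when it says the rows of a Haar rotation are uniform on the sphere. Each of the $d$ summands therefore equals $\int_{\mathbb{S}^{d-1}} \operatorname{med}_u(P)\,u\,d\mu(u)$, and summing the $d$ equal terms gives exactly $\projmed(P)$.

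The steps needing care are the following.
\begin{enumerate}
\item Ties for even $n$: the median must be chosen by a fixed rule compatible with both definitions, e.g.\ the midpoint convention. The set of directions where a tie matters has $\mu$-measure zero, so it does not affect the integrals.
\item Measurability and integrability of $u\mapsto \operatorname{med}_u(P)\,u$: this map is bounded by $\max_p\norm{p}_2$ and is piecewise continuous, so the integrals are well defined.
\end{enumerate}

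The main obstacle is simply pinning down the intended vector-valued meaning of $\operatorname{med}(P_u)$ in the definition. Once it is read as the point $\operatorname{med}_u(P)\,u$, the proof reduces to the row-marginal uniformity of Haar measure plus linearity. If instead the paper's definition were read as the scalar median times $u$ but expressed in another convention, the same computation would still go through after adjusting that scalar-to-vector map.
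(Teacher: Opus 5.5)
Your proof is correct. The paper gives no proof of its own for this proposition: it takes the identity from \cite{durocher2009projection,basu2012projection}, and in the appendix the planar case is justified only by pointing to ``the proof of Theorem 4'' of Durocher--Kirkpatrick, which works with explicit angle integrals.

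You instead give a short, dimension-free derivation with three ingredients:
\begin{itemize}
\item the exact pointwise identity $\rcwmed(P,A)=\sum_{j=1}^d \operatorname{med}_{u_j}(P)\,u_j$ with $u_j=A^\top e_j$;
\item linearity of the integral;
\item the fact that each row of a Haar rotation is $\mu$-distributed.
\end{itemize}
Your right-invariance argument for the third point is sound, since $(AQ)^\top e_j=Q^\top A^\top e_j$ and $SO(d)$ acts transitively on $\mathbb{S}^{d-1}$ for $d\ge 2$. Your route also makes explicit two things the paper leaves implicit. First, $\operatorname{med}(P_u)$ has to be read as the point $\operatorname{med}_u(P)\,u$ on the line. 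Second, the factor $d$ is exactly the number of identically distributed rows; equivalently, $\int u u^\top\,d\mu(u)=I_d/d$, so a single point is fixed by the projection median. The paper's citation route is shorter but leaves these conventions to the references.

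One remark in your care-points is off target. For even $n$, the non-uniqueness of the median is not confined to a $\mu$-null set of directions. The two middle order statistics of $\{\langle u,p\rangle\}_{p\in P}$ differ for almost every $u$, so lower and upper medians disagree almost everywhere. What actually makes the even case work is the other half of your sentence: you apply one fixed scalar rule to $\{\langle u,p\rangle\}$, with orientation given by $u$, on both sides. Then the decomposition identity holds for every $A$, and no null-set argument is needed.

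Also, $u\mapsto\operatorname{med}_u(P)$ is continuous, not merely piecewise continuous, because order statistics are $1$-Lipschitz in the sup norm. So measurability and integrability are immediate.
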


\cref{prop:proj-med-equiv-def} implies that the $\projmed$ is equivalent to the average case (over all rotation matrices) of $\rcwmed$. Note, however, that it does not mean that the two mechanisms have the same approximation ratio.
Indeed, consider the following simple example:
\begin{example}\label{ex:simple-example-diff-projmed-rrcwmed}
    Let $P = \crl*{(1,0), (0,1), (-1,0)}$.
\end{example}
\begin{clm}\label{clm:ex-simple-example-diff-projmed-rrcwmed}
    For \cref{ex:simple-example-diff-projmed-rrcwmed}, the approximation ratio of $\projmed$ is strictly smaller than the expected approximation ratio of $\rrcwmed$ by a multiplicative factor strictly bigger than $1$.
\end{clm}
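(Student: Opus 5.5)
The plan is to compare the two quantities on the fixed instance $P=\crl*{(1,0),(0,1),(-1,0)}$. Both ratios share the same denominator $OPT=\scost(P,\geomed(P))$, so it suffices to show that the numerators are strictly ordered. I would organize it in three steps: compute $OPT$, compute $\projmed(P)$ and its cost, and show that $\E\brk*{\scost(P,\rrcwmed(P))}$ is strictly larger than $\scost(P,\projmed(P))$.

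\emph{Step 1 (the optimum).} The triangle has a right angle at $(0,1)$, so all its angles are below $120^\circ$. The geometric median is therefore the Fermat point, which by the reflection symmetry $x\mapsto -x$ lies on the $y$-axis at $(0,y)$. The $120^\circ$ condition gives $y=1/\sqrt3$, hence $OPT=2\sqrt{4/3}+1-1/\sqrt3=1+\sqrt3$. This value is only needed to make the ratios explicit; the comparison itself does not depend on it.

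\emph{Step 2 (the projection median).} I would use the definition with $d=2$, writing $u=(\cos\varphi,\sin\varphi)$. The projected values are $\cos\varphi,\sin\varphi,-\cos\varphi$. Their median is $\sin\varphi$ when $\abs{\sin\varphi}\le\abs{\cos\varphi}$, and $\mathrm{sign}(\sin\varphi)\abs{\cos\varphi}$ otherwise. By the symmetry $x\mapsto -x$, the first coordinate of $\projmed(P)$ vanishes. The second coordinate is $2\,\E_\varphi[\mathrm{median}\cdot\sin\varphi]$, which reduces to a quarter-period average:
\[
\tfrac{2}{\pi}\Big(\int_0^{\pi/4}\sin^2\varphi\,d\varphi+\int_{\pi/4}^{\pi/2}\sin\varphi\cos\varphi\,d\varphi\Big)=\tfrac{2}{\pi}\cdot\tfrac{\pi}{8}=\tfrac14,
\]
so $\projmed(P)=(0,\tfrac12)$ and its cost is $\sqrt5+\tfrac12$.

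\emph{Step 3 (strict gap via Jensen).} By \cref{prop:proj-med-equiv-def}, $\projmed(P)=\E_\theta\brk*{\rcwmed(P,R_\theta)}$, and this is exactly the mean of the (random) output of $\rrcwmed$. The map $x\mapsto\scost(P,x)$ is convex. Because the three points are not collinear, it is in fact strictly convex: along any line, at least one of the three distance functions is strictly convex. Jensen's inequality then gives $\E_\theta\brk*{\scost(P,\rcwmed(P,R_\theta))}\ge\scost(P,\projmed(P))$, with equality only if $\rcwmed(P,R_\theta)$ is almost surely constant.

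It remains to show that the output is not almost surely constant, and this is the step requiring care. At $\theta=0$ the output is $(0,0)$. At $\theta=\pi/4$ both rotated medians equal $1/\sqrt2$, and rotating back gives $(0,1)$. For three points the output is a piecewise-continuous function of $\theta$, so it differs from any single value on a set of positive measure, and strict Jensen applies. Dividing both numerators by $OPT$ shows that the ratio of the two approximation ratios is strictly greater than $1$.

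For a quantitative factor, I would compute $\E_\theta[\scost]$ explicitly. The case split is on which of $\cos\theta,\sin\theta$ dominates, mirroring Step 2, and the result can then be compared with $\sqrt5+\tfrac12$.
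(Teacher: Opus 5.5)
Your proof is correct, but its decisive step differs from the paper's.

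The paper first derives the explicit formula $\rcwmed(P,R_\theta)=\min\{\sin\theta,\cos\theta\}\,(\sin\theta-\cos\theta,\ \sin\theta+\cos\theta)$. It averages this to get $\projmed(P)=(0,\tfrac12)$, whose cost is $\sqrt5+\tfrac12\approx 2.736$. It then writes $\E_\theta[\scost(P,\rcwmed(P,R_\theta))]$ as an explicit integral and evaluates it numerically ($\approx 2.83$).

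You instead make the Jensen step of \cref{prop:proj-median-ar-le-rrcwmed-ar} strict. You use strict convexity of $\scost(P,\cdot)$ for non-collinear $P$, together with the fact that the output is not constant in $\theta$. With this route your Steps 1 and 2 are not logically needed, since Step 3 uses only \cref{prop:proj-med-equiv-def}. Step 2 does correctly confirm $(0,\tfrac12)$ from the sphere definition.

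What each route buys:
\begin{itemize}
\item Your argument gives a rigorous strict inequality with no numerical integration; the paper's $\approx 2.83$ comes without an error bound.
\item Your argument applies verbatim to any non-collinear instance on which the rotated coordinate-wise median moves with $\theta$.
\item The paper's computation gives the actual size of the gap: about $2.83$ versus $2.736$, against $OPT=1+\sqrt3\approx 2.732$. Your argument does not give this.
\end{itemize}

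One small tightening is needed. ``Piecewise-continuous'' plus two point values does not by itself rule out a.e.\ constancy, because values at breakpoints can be exceptional. You can close this in either of two ways. First, the map $\theta\mapsto\rcwmed(P,R_\theta)$ is in fact continuous: each coordinate median of three continuously moving reals is continuous, e.g.\ by \cref{clm:medians-diff-bund}, and rotating back is continuous. Second, using the formula above, on $[0,\pi/4]$ the second coordinate $\sin\theta(\sin\theta+\cos\theta)$ increases strictly from $0$ to $1$.
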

We defer the proof of \cref{clm:ex-simple-example-diff-projmed-rrcwmed} to \cref{sec:proof-of-clm-ex-simple-example-diff-projmed-rrcwmed}.

However, we do get the following connection:
\begin{prop}\label{prop:proj-median-ar-le-rrcwmed-ar}
    Let $AR_{proj}$, $AR_{rr}$ be the approximation ratios w.r.t. the social cost ($\scost$), of
    
    $\projmed$ and $\rrcwmed$ respectively. Then $AR_{proj} \le AR_{rr}$.
\end{prop}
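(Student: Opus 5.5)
The plan is to use \cref{prop:proj-med-equiv-def} together with the convexity of the Euclidean social cost in the facility location. By \cref{prop:proj-med-equiv-def}, for every finite multiset $P\subset\R^d$ we have
\[
\projmed(P)=\int_{SO(d)}\rcwmed(P,A)\,d\nu(A)=\E_{R\sim\nu}\brk*{\rcwmed(P,R)}.
\]
This is the expectation of the (random) output of $\rrcwmed(P)$, since \cref{alg:rrcwmed} samples $R$ from the normalized Haar measure $\nu$ and returns $\rcwmed(P,R)$. Thus the projection median is exactly the barycenter of the output distribution of $\rrcwmed$.

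The key step is Jensen's inequality. The map $t\mapsto\scost(P,t)=\sum_{p\in P}\norm{p-t}_2$ is convex on $\R^d$, being a sum of norms of affine functions of $t$. The random point $\rcwmed(P,R)$ is a bounded random vector: each coordinate of $\cwmed(RP)$ lies in the range of the rotated points, and rotations are isometries. Its expectation is therefore well defined, and Jensen gives
\[
\scost\prn*{P,\projmed(P)}=\scost\prn*{P,\E_R[\rcwmed(P,R)]}\le \E_R\brk*{\scost\prn*{P,\rcwmed(P,R)}}=\E\brk*{\scost(P,\rrcwmed(P))}.
\]
Dividing by $\scost(P,\geomed(P))$ shows that, for every instance $P$, the ratio achieved by $\projmed$ is at most the expected ratio achieved by $\rrcwmed$. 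Taking the supremum over all $P$ gives $AR_{proj}\le AR_{rr}$.

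One degenerate case needs separate treatment: when $OPT=0$, meaning all points coincide. Then every $\rcwmed(P,R)$ equals that common point, and so does $\projmed(P)$, so both costs are $0$ and such instances can be excluded from (or treated consistently in) both suprema.

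The main obstacle, if there is one, is the measurability and integrability needed to apply Jensen and to identify the integral in \cref{prop:proj-med-equiv-def} with the expectation over the algorithm's randomness. The map $R\mapsto\rcwmed(P,R)$ is piecewise continuous: the coordinate-wise median is a continuous function of the rotated coordinates (by \cref{clm:medians-diff-bund}, for odd $n$; for even $n$ we fix a measurable median convention, which is what the definition of $\cwmed$ implicitly does). It is also bounded, so it is Bochner integrable, and Jensen for convex functions on $\R^d$ applies. I expect this to be routine. The substantive content of the proof is just the observation that $\projmed$ is the mean of $\rrcwmed$'s output combined with convexity of the cost. The reverse inequality fails in general, as \cref{clm:ex-simple-example-diff-projmed-rrcwmed} illustrates.
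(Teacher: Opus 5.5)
Your proof is correct and follows the paper's argument: you rewrite $\projmed(P)$ via \cref{prop:proj-med-equiv-def} as the Haar average of $\rcwmed(P,A)$, apply Jensen's inequality using convexity of $\scost(P,\cdot)$, and divide by $OPT$. Your remarks on measurability and integrability, and on the degenerate $OPT=0$ case, make explicit details the paper leaves implicit.
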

\begin{proof}
    Let $P \subseteq \R^d$.
    \begin{align*}
        \scost&\prn*{P,\projmed(P)} \\
        & \quad \quad \ \ \ = \scost\prn*{P,\int_{SO(d)} \rcwmed(P,A) \, d\nu(A)} \\
        & \quad \quad \ \ \ \le \int_{SO(d)} \scost\prn*{P, \rcwmed(P,A)\, d\nu(A)} \\
        & \quad \quad \ \ \ = \E[\scost\prn*{P,\rrcwmed(P)}],
    \end{align*}
    where the second equality holds due to \cref{prop:proj-med-equiv-def} and the inequality is due to Jensen's inequality (which applies since $\scost$ is convex).
    The desired follows from the fact that \[AR_{proj} = \frac{\scost\prn*{P,\projmed(P)}}{OPT}, \qquad AR_{rr} = \frac{\E[\scost\prn*{P,\rrcwmed(P)}]}{OPT}.\]
\end{proof}

\cref{prop:proj-median-ar-le-rrcwmed-ar} implies that (1) any lower bound on the approximation ratio of $\projmed$ also holds for $\rrcwmed$ and any upper bound on $\rrcwmed$ also holds for $\projmed$.

Unfortunately, computing the $\projmed$ is not strategyproof, as mentioned in the following remark:
\begin{remark}\label{rem:projmed-not-sp}
    The $\projmed$ is deterministic, anonymous and has an approximation ratio strictly smaller than $\sqrt{2}$ in $\R^2$. Thus, due to the lower bound of \cite{goel2023optimality}, it is not strategyproof (see \cref{prop:cwmed-properties}).
\end{remark}

This connection, however, does imply a significantly narrowing of the approximation gap left by \citet{basu2012projection}. \citet{basu2012projection} show that the approximation ratio is upper bounded by $ \frac{d}{\pi} B(\frac{d}{2}, \frac{1}{2}) = \Omega(\sqrt{d})$, where $B(\alpha,\beta)$ denotes the beta function. 
\cref{prop:proj-median-ar-le-rrcwmed-ar}, together with \cref{thm:rrcwmed-expected-approximation-ratio-Rd} directly imply that the approximation ratio of $\projmed$ is upper bounded by a constant ($\approx 1.547$).

Completely closing the approximation gap of the projected median stays an open 
problem.
For $\R^2$, the gap is known to be in $\brk*{\sqrt{\frac{4}{\pi^2} + 1},\frac{4}{\pi}}$ \cite{durocher2009projection}. 

\cref{lem:rand-cwmed-lb} (the lower bound of our analysis for the $\rrcwmed$) implies that no further improvements may be made via analyzing $\rrcwmed$, and a different technique must be found. 

For $\R^d$, 
\cref{thm:rrcwmed-expected-approximation-ratio-Rd} also implies that no upper bound better than $\sqrt{2}$ can be obtained via analyzing the expected approximation ratio of $\rrcwmed$.

Strictly separating between the expected approximation ratio of $\rrcwmed$ and the approximation ratio of $\projmed$ remains an open problem left for future work.

\subsection{Proof of \cref{clm:ex-simple-example-diff-projmed-rrcwmed}}\label{sec:proof-of-clm-ex-simple-example-diff-projmed-rrcwmed}

\begin{proof}

We will show that the projected median is $m_{proj} = (\frac{1}{2}, 0)$ and its expected cost is $\sqrt{5} + \frac{1}{2}$. Then, we show that the expected approximation ratio of the $\rrcwmed$ mechanism is bigger, and thus we show that indeed the two can be different.

Let $R_\theta$ be the rotation matrix by $\theta$, and let $p'_i$ be point $p_i$ rotated by $R_\theta$ where $p_1 = (1,0)$, $p_2 = (0,1)$ and $p_3 = (-1,0)$. Then
\begin{align*}
R_\theta &= \begin{pmatrix}\cos\theta & -\sin\theta\\ \sin\theta & \cos\theta\end{pmatrix},\qquad
\theta \in [0,\tfrac{\pi}{2}],\\
p_1' &= (\cos\theta,\ \sin\theta),\quad
p_2' = (-\sin\theta,\ \cos\theta),\quad
p_3' = (-\cos\theta,\ -\sin\theta).
\end{align*}
If $\theta \in [0,\frac{\pi}{4})$ the $x$ coordinates are $\crl*{-cos(\theta), -sin(\theta), cos(\theta)}$ and the $y$ coordinates are

$\crl*{-sin(\theta), sin(\theta), cos(\theta)}$. Thus, the resulting $\cwmed$ of the rotated points would be

$m'_{\theta} = (-sin(\theta), sin(\theta))$.
If $\theta \in [\frac{\pi}{4},\frac{\pi}{2})$ the $x$ coordinates are $\crl*{-sin(\theta), -cos(\theta), cos(\theta)}$ and the $y$ coordinates are $\crl*{-sin(\theta), cos(\theta), sin(\theta)}$. Thus, the resulting $\cwmed$ of the rotated points would be $m'_{\theta} = (-cos(\theta), cos(\theta))$.

This implies that for $m(\theta) := \min\{\sin\theta,\ \cos\theta\}$: $m'_\theta = \big(-\,m(\theta),\ m(\theta)\big)$, which means that 
$\rcwmed(P, \theta) = R_{-\theta}\,m'_\theta
= m(\theta)\,\big(\sin\theta-\cos\theta,\ \sin\theta+\cos\theta\big)$.

We can now compute the cost of $\projmed$, denoted by $m_{proj}$. By the proof of Theorem 4 of \cite{durocher2009projection}, the $\projmed$ equals:
\begin{align*}
m_{proj}(P)
&= \frac{2}{\pi}\int_{0}^{\pi/2} \rcwmed(P, \theta)\,d\theta \\
&= \frac{2}{\pi}\Bigg[
\int_{0}^{\pi/4} \sin\theta\,(\sin\theta-\cos\theta,\ \sin\theta+\cos\theta)\,d\theta
\;\\
& \ \ \ \ \ \ \ \ +\;
\int_{\pi/4}^{\pi/2} \cos\theta\,(\sin\theta-\cos\theta,\ \sin\theta+\cos\theta)\,d\theta
\Bigg].
\end{align*}
Therefore, we may compute $m_{proj} = (x(m_{proj}), y(m_{proj}))$:
\begin{align*}
x(m_{proj})
&= \frac{2}{\pi}\Bigg[
\int_{0}^{\pi/4}\!\big(\sin^2\theta-\sin\theta\cos\theta\big)\,d\theta
\;+\;
\int_{\pi/4}^{\pi/2}\!\big(\sin\theta\cos\theta-\cos^2\theta\big)\,d\theta
\Bigg]
\\
& = 0,\\[4pt]
y(m_{proj})
&= \frac{2}{\pi}\Bigg[
\int_{0}^{\pi/4}\!\big(\sin^2\theta+\sin\theta\cos\theta\big)\,d\theta
\;+\;
\int_{\pi/4}^{\pi/2}\!\big(\sin\theta\cos\theta+\cos^2\theta\big)\,d\theta
\Bigg]
\\
& = \tfrac{1}{2}.
\end{align*}
Hence $m_{proj}(P) = (0,\tfrac{1}{2})$ and thus:
\begin{align*}
\scost(m_{proj}) &= \|(1,0)-(0,\tfrac12)\| + \|(0,1)-(0,\tfrac12)\| + \|(-1,0)-(0,\tfrac12)\| \\
&= \sqrt{1+\tfrac14} + \tfrac12 + \sqrt{1+\tfrac14}
= \sqrt{5} + \tfrac12 \approx 2.736.
\end{align*}

We now compute the expected cost of the $\rrcwmed$, which is the expected value of over all $\theta$ choices of $\scost(P,\rcwmed(P,\theta))$ . For any $\theta$:
\begin{align*}
\scost(P, \rcwmed(P, \theta))
&= \|p_1' - m'_\theta\| + \|p_2' - m'_\theta\| + \|p_3' - m'_\theta\|.
\end{align*}
For \(0 \le \theta \le \pi/4\) (so \(m(\theta)=\sin\theta\)):
\begin{align*}
\|p_1' - m'_\theta\| &= \cos\theta+\sin\theta = \sqrt{\,1+\sin(2\theta)\,},\\
\|p_2' - m'_\theta\| &= |\cos\theta-\sin\theta| = \sqrt{\,1-\sin(2\theta)\,},\\
\|p_3' - m'_\theta\| &= \sqrt{(\sin\theta-\cos\theta)^2 + (2\sin\theta)^2}
= \sqrt{\,3 - 2\cos(2\theta) - \sin(2\theta)\,}.
\end{align*}
For \(\pi/4 \le \theta \le \pi/2\) (so \(m(\theta)=\cos\theta\)):
\begin{align*}
\|p_1' - m'_\theta\| &= \sqrt{\,3 + 2\cos(2\theta) - \sin(2\theta)\,},\\
\|p_2' - m'_\theta\| &= \sqrt{\,1-\sin(2\theta)\,},\\
\|p_3' - m'_\theta\| &= \sqrt{\,1+\sin(2\theta)\,}.
\end{align*}
Therefore (we have uniform \(\theta\), symmetry gives the factor \(2/\pi\))
\begin{align*}
\E_\theta[& \scost(P,\rcwmed(P,\theta))]
= \\
&  \frac{2}{\pi}\Bigg[
\int_{0}^{\pi/4}
\Big(
\sqrt{\,3 - 2\cos(2\theta) - \sin(2\theta)\,}
+ \sqrt{\,1+\sin(2\theta)\,}
+ \sqrt{\,1-\sin(2\theta)\,}
\Big)\,d\theta \\
& 
+ \int_{\pi/4}^{\pi/2}
\Big(
\sqrt{\,3 + 2\cos(2\theta) - \sin(2\theta)\,}
+ \sqrt{\,1+\sin(2\theta)\,}
+ \sqrt{\,1-\sin(2\theta)\,}
\Big)\,d\theta
\Bigg].
\end{align*}
Approximating the above term yields:
\begin{align*}
    \E[\rrcwmed(P)] & = \E_\theta[\scost(P,\rcwmed(P,\theta))] \\
    & \approx 2.83 > 2.736 = \scost(m_{proj}).
\end{align*}

\end{proof}

\printbibliography

\end{document}